\documentclass[11pt,reqno]{amsart}
\usepackage[margin=1in]{geometry}
\usepackage[T1]{fontenc}
\usepackage{amsmath,amssymb,amsthm,mathtools,mathrsfs}
\usepackage{microtype}
\usepackage{float}
\usepackage{cite}
\usepackage[hidelinks,hypertexnames=false]{hyperref}
\hypersetup{
  pdftitle={Don't mind the gap: Absolutely Continuous Edge Spectrum in a Mobility Gap},
  pdfauthor={Simon Becker and Mengxuan Yang},
  pdfsubject={Absolutely continuous edge spectrum in a mobility gap},
  pdfkeywords={absolutely continuous spectrum, Fredholm index, localization, half-space operator, random Schr\"odinger operator}}
\numberwithin{equation}{section}

\newtheorem{theorem}{Theorem}[section]
\newtheorem{proposition}[theorem]{Proposition}
\newtheorem{lemma}[theorem]{Lemma}
\newtheorem{corollary}[theorem]{Corollary}
\theoremstyle{definition}

\theoremstyle{remark}

\usepackage[nameinlink,noabbrev]{cleveref}
\usepackage{xspace}
\usepackage{tikz}
\usetikzlibrary{arrows.meta,positioning,decorations.pathreplacing}
\crefname{theorem}{Theorem}{Theorems}
\crefname{proposition}{Proposition}{Propositions}
\crefname{lemma}{Lemma}{Lemmas}
\crefname{corollary}{Corollary}{Corollaries}
\crefname{definition}{Definition}{Definitions}
\crefname{remark}{Remark}{Remarks}

\newcommand{\Hh}{\mathcal H}
\newcommand{\Kh}{\mathcal K}
\newcommand{\Eh}{\mathscr H}
\newcommand{\Aalg}{\mathcal A}
\newcommand{\Ione}{\mathcal I_1}
\newcommand{\Itwo}{\mathcal I_2}
\newcommand{\Ccal}{\mathcal C}
\newcommand{\Sone}{\mathcal S_1}
\newcommand{\Comp}{K}
\newcommand{\id}{\mathbf 1}
\newcommand{\Z}{\mathbb Z}

\newcommand{\R}{\mathbb R}
\newcommand{\C}{\mathbb C}
\newcommand{\ac}{\mathrm{ac}}

\newcommand{\ind}{\operatorname{ind}}
\newcommand{\Tr}{\operatorname{Tr}}
\newcommand{\br}[1]{\langle #1\rangle}

\newcommand{\DOne}{\hyperref[eq:discrete-locality-intro]{\textup{(D1)}}\xspace}
\newcommand{\DTwo}{\hyperref[eq:discrete-boundary-intro]{\textup{(D2)}}\xspace}
\newcommand{\DThree}{\hyperref[eq:DLoc]{\textup{(D3)}}\xspace}
\newcommand{\DFour}{\hyperref[eq:Dfinite]{\textup{(D4)}}\xspace}
\newcommand{\COne}{\hyperref[eq:CLoc1]{\textup{(C1)}}\xspace}
\newcommand{\CTwo}{\hyperref[eq:CLoc2]{\textup{(C2)}}\xspace}
\newcommand{\CThree}{\hyperref[eq:CLoc3]{\textup{(C3)}}\xspace}
\newcommand{\CFour}{\hyperref[eq:CEdge]{\textup{(C4)}}\xspace}

\title[AC spectrum in bulk localized regimes]{Don't mind the gap: \\ 
Absolutely Continuous Edge Spectrum in a Mobility Gap}

\author{Simon Becker}
\address{Bocconi University, Via Sarfatti 25, 20136 Milan, Italy}
\email{simon.becker@unibocconi.it}

\author{Mengxuan Yang}
\address{Texas A\&M University, College Station, TX 77843, USA}
\email{yangmx@tamu.edu}
\date{}

\begin{document}

\begin{abstract}
We study the edge spectrum of two-dimensional lattice Hamiltonians and magnetic Schr\"odinger operators in a mobility gap, an energy interval in which the bulk spectrum is potentially localized rather than necessarily absent.
Under suitable localization and boundary assumptions, we prove that a constant nonzero bulk topological index implies that the entire interval belongs to the absolutely continuous spectrum of the corresponding half-space operator. The absolutely continuous spectral multiplicity is bounded below by the absolute value of the index at almost every energy in the interval. The first version of the auxiliary-space cut construction in \Cref{sec:localized} was generated using ChatGPT 6.
\end{abstract}

\maketitle

\section{Introduction}
\label{sec:intro}

A two-dimensional electronic system can be insulating in the bulk while
supporting conducting states at its boundary.  In the integer quantum Hall
effect, this relation is expressed by the bulk--edge correspondence: a
topological invariant of the bulk Hamiltonian agrees with an edge index
associated with boundary currents \cite{KRS,EG,KS}.  A natural spectral
question is whether a nonzero bulk invariant forces an absolutely continuous component in the spectrum of the half-space Hamiltonian.

Disorder makes the distinction between a spectral gap and a mobility gap
important.  A spectral gap contains no bulk spectrum, whereas a mobility
gap may contain bulk eigenvalues with localized eigenfunctions.  Such
localized states are compatible with quantization of the Hall conductance
\cite{BES,EGS,GKS2}.  The question is therefore not only whether boundary
states exist, but whether an absolutely continuous component persists at
energies where the bulk states are localized.  We prove, both in the discrete and continuous settings, that it does under the hypotheses below, and that the bulk index gives a lower bound on its spectral multiplicity.

Let $H$ be a full-space self-adjoint operator on
$\Hh=\ell^2(\Z^2;\C^N)$ or $L^2(\R^2;\C^N)$, and let $\widehat H$ act on
the corresponding upper half-space, with the hypotheses in
\Cref{sec:hypotheses}.  For every Borel set $A\subset\mathbb R$, we write $\mathbf 1_{A}$ for the indicator function and $\mathbf1_A(H)$ for the
spectral projection of $H$ associated with $A$.  We introduce the multiplication
operator
\[
 \Phi(x):=\frac{x_1+ix_2}{|x_1+ix_2|}, \qquad x \ne 0,
\]
with an arbitrary value at $x=0$ (we may take $\Phi(0)=1$), and define the spectral projection
\[
 P_E:=\mathbf1_{(-\infty,E)}(H).
\]
If the bounded operator $P_E\Phi P_E+1-P_E$ is Fredholm, where $1$ denotes the identity map, we define 
\begin{equation}\label{eq:CE-intro}
 \mathcal C(E):=\ind(P_E\Phi P_E+1-P_E).
\end{equation}
This is the compression index in \cite[Definition~5.1]{ASS}; its equivalent
form on $P_E\Hh$, the Fredholm convention, and the relation to the relative
index of projections are recalled in \Cref{sec:hypotheses}.  For a
self-adjoint operator $K$, $m_{\rm ac}(E;K)$ denotes the absolutely continuous
spectral multiplicity, also defined in \Cref{sec:hypotheses}.

Absolute continuity forced by topology in a bulk spectral gap was proved in
several settings in \cite{BW,BC,DSZ}, while mobility-gap bulk--edge index
identities were established in \cite{EGS,Taarabt,BSS}.  The difference here
is that the conclusion concerns the absolutely continuous spectrum of the
physical half-space Hamiltonian while the bulk may have localized spectrum
throughout the interval; further comparisons are given under Related Works
below.

\begin{theorem}\label[theorem]{thm:main}
Let $\Delta\subset\mathbb R$ be a bounded open interval.  Assume either the
discrete hypotheses \DOne--\DFour, or the continuous hypotheses
\COne--\CFour in \Cref{sec:hypotheses}.  Suppose that the index \eqref{eq:CE-intro} is defined for every $E\in\Delta$ and is a non-zero constant 
\begin{equation}\label{eq:constant-index}
 \mathcal C(E)=\mathcal C\ne0,
 \qquad E\in\Delta.
\end{equation}
Then the multiplicity of the absolutely continuous spectrum satisfies $m_{\rm ac}(E;\widehat H)\ge |\mathcal C|$ for a.e. $E\in\Delta$,
and hence $\Delta\subset\sigma_{\rm ac}(\widehat H).$
\end{theorem}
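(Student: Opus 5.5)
The plan is to reduce the claim to an edge-index statement for smooth, localized energy cutoffs, and then convert a nonzero edge index into absolutely continuous multiplicity through a trace-class scattering argument.

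\textbf{Edge operator.} Fix a subinterval $[a,b]\subset\Delta$ and a smooth switch function $g$ equal to $1$ below $a$ and $0$ above $b$. Set
\[
U_g:=\exp\bigl(-2\pi i\, g(\widehat H)\bigr).
\]
Then $U_g-1$ is a function of $\widehat H$ supported spectrally in $[a,b]$.

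\textbf{Step 1: bulk--edge identity in the mobility gap.} Let $\Lambda$ denote the indicator of a half-line in the direction parallel to the edge. The first goal is
\[
\ind\bigl(\Lambda U_g\Lambda+1-\Lambda\bigr)=\pm\,\mathcal C .
\]
The Fredholm property will come from the locality hypothesis \DOne/\COne and the boundary hypothesis \DTwo/\CFour. The index identity itself uses the constancy \eqref{eq:constant-index} together with the localization hypotheses \DThree/\COne--\CThree. The route is as follows.
\begin{itemize}
\item Compare $\widehat H$ with the cut bulk operator via an auxiliary-space construction: embed the half-space into the full space, with the complement carrying a trivial Hamiltonian.
\item Use localization to show that $g(H)$ differs from $P_E$ only by an operator whose commutators with $\Phi$ and with the edge cut are controlled in a Schatten class.
\item Identify the compression index with a relative index of projections, and that relative index with the edge index of $U_g$. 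The standard tool is the Fredholm homotopy $E\mapsto P_E$, which is locally constant across $\Delta$ precisely because $\mathcal C$ is constant.
\end{itemize}
I expect this step to be the main obstacle. In a mobility gap, $g(H)$ is not a norm limit of spectral projections in a gap, so the needed trace or Schatten estimates must come from dynamical localization bounds on $\mathbf 1_\Delta(H)$. The argument will also require $\mathcal C(E)$ to be independent of $E$ even though $P_E$ jumps at bulk eigenvalues; each such jump contributes a finite-rank (or localized) piece of zero index.

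\textbf{Step 2: from edge index to multiplicity.} Decompose $\widehat H$ into its ac, sc and pp parts. Suppose $m_{\rm ac}(E;\widehat H)<|\mathcal C|$ on a set $S\subset[a,b]$ of positive measure. Shrink $[a,b]$ to an interval where, on a set of almost full measure, the ac multiplicity is at most $k<|\mathcal C|$. Then:
\begin{itemize}
\item The singular part of $\widehat H$ localized to $[a,b]$ should contribute zero edge index. This uses localization near the edge, which \DFour/\CFour should provide: singular edge states are confined and do not propagate along $\Lambda$, so $[\Lambda,U_g^{\rm sing}]$ is compact with index zero. This is the standard "index equals flux" principle.
\item The ac part has flux at most $k$ in absolute value, via a spectral representation: the ac subspace restricted to $[a,b]$ is unitarily equivalent to a multiplication operator on $L^2([a,b];\C^{k})$. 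The edge index is then a spectral-flow/winding count bounded by the fiber dimension.
\end{itemize}
The cleanest implementation is to write the edge index as $\Tr\bigl(U^*[\Lambda,U]\bigr)$, which is trace class by Step 1's estimates. One then shows it equals $\int_{[a,b]}$ of a trace over the ac fiber, with integrand bounded by the fiber rank times $|g'|$, and that $\int|g'|=1$.

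\textbf{Conclusion.} The bound from Step 2 gives $|\mathcal C|\le k$, a contradiction. Therefore $m_{\rm ac}(E;\widehat H)\ge|\mathcal C|$ for a.e. $E\in\Delta$. Since $\mathcal C\neq0$, the ac spectral measure charges every open subset of $\Delta$, which gives $\Delta\subset\sigma_{\rm ac}(\widehat H)$.
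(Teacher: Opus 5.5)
Your Step~1 has a genuine gap, and it is the central difficulty of the theorem. Take the plain cut $\Lambda=\id_{\{x_1\ge0\}}$ and the physical unitary $U_g=e^{-2\pi i g(\widehat H)}$. In a mobility gap the commutator $[\Lambda,U_g]$ is in general not even compact, so $\Lambda U_g\Lambda+1-\Lambda$ need not be Fredholm. The trace formula you use in Step~2 is then also unavailable. Far from the edge, $U_g$ agrees with $e^{-2\pi i g(H)}$ up to an edge-confined error. Moreover $e^{-2\pi i g(H)}\psi_n=u_n\psi_n$ with $u_n=e^{-2\pi i g(E_n)}\neq1$ for every localized eigenfunction with $E_n\in(a,b)$. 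Since $e^{-2\pi i g(H)}=1$ on $Q\Hh$, where $Q=1-\id_{(a,b)}(H)$, one has the exact identity $\|Q[\Lambda,e^{-2\pi ig(H)}]\psi_n\|=|u_n-1|\,\|Q\Lambda\psi_n\|$. This is generically bounded below for states centered near the line $\{x_1=0\}$. For ergodic localized models there are infinitely many such states at arbitrary heights, and they form a weakly null sequence, so the commutator is not compact. The localization hypotheses (D3), (C1)--(C3) confine $[\Lambda,U_g]$ only to a strip around the entire half-line $\{x_1=0,\ x_2\ge0\}$. Trace class needs confinement to the corner where the cut meets the edge. In a spectral gap that comes from $e^{-2\pi i g(H)}=1$, and this is exactly what fails here. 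Several of your other Step~1 claims fail for the same reason:
your claim that $g(H)-P_E$ has a Schatten-class commutator with the edge cut is false (it is fine for $\Phi$, but not for $\Lambda$);
the direct-sum embedding of the half-space you describe does not address the problem;
$E\mapsto P_E$ is not norm continuous, so it is not a Fredholm homotopy.
This obstruction is why earlier mobility-gap bulk--edge results regularize the edge unitary.

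The missing idea is to modify the cut rather than the unitary. The paper adjoins an auxiliary copy $\Kh=\ell^2(\mathcal N)$ of $\id_I(H)\Hh$ and uses the involution $W$ that exchanges each $\psi_n$ with $e_n$. It decides the auxiliary cut by the localization centers $x_n$ and sets $B=W\Lambda W$. Localization gives $B-\Lambda\in\Ione$. Since $WU_sW=1\oplus e^{-2\pi i(1-s)g(D)}$ is diagonal on $\Kh$, $B$ commutes \emph{exactly} with $U_s=e^{-2\pi iG_s}$, where $G_s=(1-s)(g(H)\oplus0)+s(L\oplus0)$ and $L=\id_{(-\infty,a)}(H)$. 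With $R=\id_{[1/2,1]}(\Pi B\Pi)$ one obtains $[R,e^{-2\pi ig(\widehat H)}\oplus1]\in\Sone$. Proposition~\ref{prop:abstract} then moves the index to $s=1$. There $e^{-2\pi iL}=1$ in the bulk, the plain cut works, and the endpoint lemma gives absolute value $|\mathcal C(a)|=|\Ccal|$. Constancy of $\mathcal C$ is used only at $E=a$.

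Your Step~2 has independent problems. Hypotheses (D4) and (C4) are bulk finite multiplicity and a boundary comparison estimate; neither gives edge localization of singular states. The spectral subspaces of $\widehat H$ do not reduce $\Lambda$, so ``the singular part carries no index'' is a theorem that must be proved, not a confinement remark. Also, $m_{\rm ac}$ may be infinite on the small exceptional set left by your shrinking, so a bound in terms of $\int m_{\rm ac}|g'|$ yields no contradiction. The paper avoids all of this by applying Asch--Bourget--Joye to $(R,\widehat U_0)$. This shows $\widehat U_0$ differs by a trace-class operator from a unitary $U_*\simeq S_{|k|}\oplus V$, with $S_{|k|}$ the bilateral shift of multiplicity $|k|$. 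Birman--Krein then gives $m_{\rm ac}\ge|k|$ a.e.\ on the whole circle. The monotone change of variables $\theta=2\pi(1-g)$ returns the bound to $I$ directly, with no contradiction argument.
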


We emphasize that in \Cref{thm:main} the interval $\Delta$ need not be a spectral gap of
$H$.  The full-space operator may have spectrum throughout $\Delta$. The
assumption is instead that this spectrum is localized in the quantitative
sense of \Cref{sec:hypotheses}.  The use of localization centers has a
precursor in \cite{BSS}, where the flux phase on localized eigenfunctions is
replaced by its value at their SULE centers.  Our new step is to introduce an
auxiliary copy of the localized spectral subspace and transfer this
center-based modification to the \emph{spatial cut}.  The physical half-space
unitary is left unregularized, while the modified cut commutes exactly with
the bulk unitary path; see \Cref{sec:localized}.

We now briefly describe the key idea of the proof. Fix $I=(a,b)\Subset J\Subset\Delta$ with
$a,b\notin\sigma_{\rm pp}(H)$ and set
\[
 P=\mathbf1_I(H),\qquad Q=1-P,\qquad
 L=\mathbf1_{(-\infty,a)}(H).
\]
Choose a localized orthonormal eigenbasis
\[
 H\psi_n=E_n\psi_n,\qquad n\in\mathcal N,
\]
of $P\mathcal H$, with centers $x_n\in\mathbb Z^2$ given by the
localization hypotheses (cf.~\Cref{sec:hypotheses}).  Let $\mathcal K=\ell^2(\mathcal N)$ with
standard basis $\{e_n\}$ and define
\[
 Ve_n=\psi_n,\qquad De_n=E_ne_n.
\]
Thus $V:\mathcal K\to P\mathcal H$ is unitary and $D$ records the energies
of the localized states.  On $\mathscr H=\mathcal H\oplus\mathcal K$
set
\[
 W=\begin{pmatrix}Q&V\\ V^*&0\end{pmatrix},
 \qquad W^*=W,\qquad W^2=1.
\]
Conjugation by $W$ replaces each localized eigenfunction $\psi_n$ by the
auxiliary basis vector $e_n$.

Let $\Lambda_1=\mathbf1_{\{x_1\ge0\}}$ and
$\Lambda_2=\mathbf1_{\{x_2\ge0\}}$ be the vertical and horizontal spatial
cut projections on $\mathcal H$.  Extend them to $\mathcal K$ by
\[
 \Lambda_{\mathcal K}e_n
 =\mathbf1_{\{(x_n)_1\ge0\}}e_n,
 \quad
 \Pi_{\mathcal K}e_n
 =\mathbf1_{\{(x_n)_2\ge0\}}e_n, \quad \text{ and set }
 \Lambda=\Lambda_1\oplus\Lambda_{\mathcal K},
 \quad
 \Pi=\Lambda_2\oplus\Pi_{\mathcal K}.
\]
The modified vertical cut is
\[
 B=W\Lambda W.
\]
Localization implies that $B-\Lambda$ is confined near $x_1=0$. Choose a smooth switch $g\in C^\infty(\mathbb R;[0,1])$ with
\[
 g=1\ \text{on }(-\infty,a],\qquad
 g=0\ \text{on }[b,\infty),\qquad g'<0\ \text{on }(a,b),
\]
and, for $0\le s\le1$, set
\[
 G_s=(1-s)(g(H)\oplus0)+s(L\oplus0),
 \qquad U_s=e^{-2\pi iG_s}.
\]
The key identity is
\[
 WG_sW=L\oplus(1-s)g(D) \text{ so }
 WU_sW=1\oplus e^{-2\pi i(1-s)g(D)}.
\]
Both $g(D)$ and $\Lambda_{\mathcal K}$ are diagonal in $\{e_n\}$; hence $[B,U_s]=0.$
Thus $B$ differs from the ordinary spatial cut only near that cut, but it
commutes exactly with the bulk unitary path.

We then compress $B$ to the enlarged half-space and round the resulting
almost-projection:
\[
 A=\Pi B\Pi,\qquad R=\mathbf1_{[1/2,1]}(A).
\]
The localization and boundary estimates imply that $[R,\widehat U_0]$ is trace-class
and an index homotopy identifies
\[
 \left|\ind(R\widehat U_0R+1-R)\right|=|\mathcal C|.
\]
Thus the bulk index is realized by the \emph{physical}, unregularized
half-space unitary.  The theorem of Asch--Bourget--Joye
\cite[Theorem~2.1(3)]{ABJ} gives a trace-class perturbation with a
bilateral-shift summand of multiplicity $|\mathcal C|$; unitary scattering
theory transfers the absolutely continuous multiplicity bound to the physical
unitary, and \Cref{lem:spectral} transfers it back to the energy variable.  The operator-theoretic argument is the same in
the discrete and continuous settings; only the localization and boundary
estimates used to verify its hypotheses differ.

The theorem neither excludes additional singular spectrum of the edge
operator nor asserts equality in the multiplicity bound.  It proves only the
lower bound forced by the bulk index.  Determining the exact absolutely
continuous multiplicity requires information beyond the argument given here.

\begin{figure}[ht]
\centering
\begin{tikzpicture}[font=\small,>=Latex]
  \begin{scope}
    \node[anchor=west,font=\bfseries] at (-3.7,4.15) {(a)};
    \fill[gray!10] (-3.2,0) rectangle (3.2,3.55);
    \draw[line width=0.9pt] (-3.2,0) -- (3.2,0);
    \draw[dashed,line width=0.7pt] (0,-0.45) -- (0,3.55);
    \node[anchor=west] at (1.45,3.20) {half-space};
    \node[anchor=west] at (0.01,0.28) {$x_1=0$};
    \node[anchor=west] at (2.05,-0.34) {$x_2=0$};

    \foreach \x/\y/\r in {-2.35/2.55/0.38,-1.35/1.20/0.34,-0.55/2.05/0.33,
                           0.75/2.70/0.36,1.55/1.10/0.34,2.35/2.05/0.37,
                           -2.25/-0.35/0.30,1.00/-0.28/0.28} {
      \draw[gray!55,line width=0.35pt] (\x,\y) circle (\r);
      \draw[gray!35,line width=0.30pt] (\x,\y) circle ({1.45*\r});
      \fill (\x,\y) circle (1.15pt);
    }
    \node[align=center,anchor=north] at (-2.02,3.58)
      {localized full-space states};
    \draw[->,thin] (-1.65,3.12) -- (-2.15,2.75);
    \node[align=center,anchor=north,font=\footnotesize] at (0,-0.72)
      {Dashed line: auxiliary cut;\\Solid line: physical boundary};
  \end{scope}

  \quad

  \begin{scope}[xshift=8.1cm]
    \node[anchor=west,font=\bfseries] at (-3.7,4.15) {(b)};
    \node[anchor=east] at (-3.05,2.70) {$H$};
    \node[anchor=east] at (-3.05,1.25) {$\widehat H$};

    \draw[->] (-2.85,2.70) -- (3.25,2.70) node[right] {$E$};
    \draw[->] (-2.85,1.25) -- (3.25,1.25) node[right] {$E$};

    \draw[densely dashed,gray!65] (-1.55,0.80) -- (-1.55,3.18);
    \draw[densely dashed,gray!65] (1.70,0.80) -- (1.70,3.18);
    \node at (0.08,3.42) {$\Delta$};
    \draw[decorate,decoration={brace,amplitude=4pt},gray!65]
       (-1.55,3.23) -- (1.70,3.23);

    \foreach \x/\h in {-1.32/0.16,-1.02/0.23,-0.70/0.13,-0.30/0.20,
                        0.02/0.14,0.38/0.22,0.82/0.15,1.12/0.19,1.46/0.13} {
      \draw[line width=0.65pt] (\x,{2.70-\h}) -- (\x,{2.70+\h});
    }
    \node[align=center,font=\footnotesize] at (0.05,2.03)
      {localized full-space spectrum may occur};

    \draw[line width=2.2pt] (-1.55,1.25) -- (1.70,1.25);
    \node[align=center,font=\footnotesize] at (0.08,0.63)
      {$m_{\rm ac}(E;\widehat H)\ge |\mathcal C|$ for a.e. $E\in\Delta$};
    \node[font=\footnotesize] at (0.08,3.78) {$\mathcal C\neq0$};
  \end{scope}
\end{tikzpicture}
\caption{Geometry and spectral conclusion of \Cref{thm:main}.
(a) The solid horizontal line is the boundary of the upper half-space and the
dashed vertical line is the auxiliary cut used in the index construction.
Dots and rings indicate localization centers and localized
full-space eigenfunctions.  (b) The same interval $\Delta$ is shown for the
full-space operator $H$ and the half-space operator $\widehat H$.
Localized full-space spectrum may occur in $\Delta$, whereas a nonzero index
$\mathcal C$ forces an absolutely continuous component of $\widehat H$ with
$m_{\rm ac}(E;\widehat H)\ge|\mathcal C|$ for almost every $E\in\Delta$.
The ticks are schematic and do not represent isolated eigenvalues; additional
singular spectrum is not excluded.}
\label{fig:geometry-spectrum}
\end{figure}

\medskip
\noindent\textbf{Related works.}
Avron--Seiler--Simon \cite{ASS} formulated charge deficiency and transport
in terms of Fredholm indices, while Bellissard--van Elst--Schulz-Baldes
\cite{BES} developed the noncommutative approach to Hall conductance and
its quantization in localized regimes.  Germinet--Klein--Schenker
\cite{GKS2} established quantization under decay assumptions on Fermi
projections for ergodic Landau Hamiltonians.

Bulk--edge correspondence in a spectral gap was proved for lattice
systems by Kellendonk--Richter--Schulz-Baldes \cite{KRS} and Elbau--Graf
\cite{EG}; Kellendonk--Schulz-Baldes \cite{KS} proved quantization of edge
currents for continuous magnetic operators.  In a mobility gap, Elgart--Graf--Schenker \cite{EGS} and Taarabt
\cite{Taarabt} proved equality of bulk and edge Hall conductances in the
discrete and continuous settings, respectively, using edge quantities
that account for localized bulk states.  Bols--Schenker--Shapiro
\cite{BSS} established bulk--edge correspondence through Fredholm
homotopies in the presence of strong disorder.  Their stability argument also
replaces the flux insertion on localized eigenfunctions by evaluation at SULE
localization centers.  Here that localization-center input is used in a
different way: the auxiliary-space dilation moves the modification to the
spatial cut, producing exact commutation with the bulk unitary path while
leaving the physical half-space unitary unchanged.  See also Stoiber
\cite{Stoiber} for stability and interface-delocalization results for strong
invariants in a mobility gap.

For spectral type, De Bi\`evre--Pul\'e \cite{DBP} and
Fr\"ohlich--Graf--Walcher \cite{FGW} proved absolute continuity for magnetic
edge Hamiltonians under suitable disorder and energy restrictions.
Bols--Werner \cite{BW} proved that a nonzero bulk index forces absolutely
continuous edge spectrum in a bulk spectral gap.  Bols--Cedzich \cite{BC}
obtained the corresponding result for topological insulators with
time-reversal symmetry squaring to $-1$, and Drouot--Shapiro--Zhu
\cite{DSZ} treated curved interfaces under suitable geometric assumptions.

The unitary index criterion of Asch--Bourget--Joye \cite{ABJ} is used in
\cite{BW} and in our proof.  The passage
from regularized edge indices to the spectrum of the physical edge
Hamiltonian in a mobility gap was raised in \cite[Section~6.2]{BSS}; see
also \cite[Section~1.6.2]{DSZ}.  Under the hypotheses stated here,
\Cref{thm:main} establishes absolutely continuous spectrum for the
unregularized half-space Hamiltonian in both the discrete and continuous
settings, with multiplicity at least $|\mathcal C|$ almost everywhere in
$\Delta$. 

\medskip
\noindent\textbf{Organization of the paper.}
\Cref{sec:proof} states the hypotheses, proves an abstract Fredholm
criterion, and gives the construction from the localized bulk spectral
subspace.  \Cref{sec:models} proves the main theorem in the discrete and
continuous settings and states the application to random models. The verification of
the random-model hypotheses is deferred to Appendix~\ref{app:random}.

\medskip
\noindent\textbf{Acknowledgement.}
M.Y. acknowledges support from the NSF grant DMS-2554813.

\medskip
\noindent\textbf{AI usage statement:} The first version of the auxiliary-space cut construction in \Cref{sec:localized} was generated using ChatGPT 6. The authors subsequently revised the construction and the manuscript with further LLM assistance to combine both discrete and continuous operators in the construction. The authors assume responsibility for the mathematical content and citations.

\section{An abstract framework}
\label{sec:proof}
\subsection{Conventions and hypotheses}
\label{sec:hypotheses}

Fix an integer $N$.  In the discrete case, the full- and half-space Hilbert spaces are
\[
 \mathcal H_{\rm d}=\ell^2(\mathbb Z^2;\mathbb C^N),
 \qquad
 \widehat{\mathcal H}_{\rm d}=\ell^2(\mathbb Z\times\mathbb N_0;\mathbb C^N),
\]
and $H,\widehat H$ are bounded self-adjoint operators.  In the continuous
case, we have
\[
 \mathcal H_{\rm c}=L^2(\mathbb R^2;\mathbb C^N),
 \qquad
 \widehat{\mathcal H}_{\rm c}=L^2(\mathbb R\times\mathbb R_+;\mathbb C^N),
\]
and $H,\widehat H$ are self-adjoint and bounded from below. For a bounded operator $T$, we say that $T$ is Fredholm if
$\operatorname{Ran}T$ is closed and
\[
 \dim\ker T+\dim\ker T^*<\infty,
 \qquad
 \ind T:=\dim\ker T-\dim\ker T^*.
\]
With respect to $\Hh=P_E\Hh\oplus(1-P_E)\Hh,$
one has
\[
 P_E\Phi P_E+1-P_E
 =\bigl(P_E\Phi P_E|_{P_E\Hh}\bigr)\oplus1,
\]
so \eqref{eq:CE-intro} is equivalently
\[
 \mathcal C(E)
 =\ind\bigl(P_E\Phi P_E:P_E\Hh\to P_E\Hh\bigr)
 =\dim\ker\bigl(P_E\Phi P_E|_{P_E\Hh}\bigr)
  -\dim\ker\bigl(P_E\Phi^*P_E|_{P_E\Hh}\bigr).
\]
This is the form used in \cite[Definition~5.1]{ASS}; see also
\cite[Proposition~2.4, Eq.~(2.12)]{ASS} for the relative-index formulation.

For a self-adjoint operator $K$, we denote by $\mathcal H_{\rm ac}(K)$
its absolutely continuous spectral subspace.  
The spectral theorem and the spectral multiplicity theorem
\cite[Theorems VII.3, VII.4, and VII.6]{RS1} give a unitary identification
\begin{equation}\label{eq:mac-definition}
  \mathcal H_{\rm ac}(K)
  \simeq
  \int_{\mathbb R}^{\oplus}
  \mathbb C^{m_{\rm ac}(E;K)}\,dE,
\end{equation}
under which $K$ acts as multiplication by the spectral parameter $(Kf)(E)=E f(E).$
Here $m_{\rm ac}(E;K)\in\{0,1,2,\ldots,\infty\}$
is defined for almost every $E$, where we denote
$\mathbb C^0=\{0\}$ and $\mathbb C^\infty=\ell^2(\mathbb N)$.  For a unitary
operator $U$ we use $m_{\rm ac}(e^{i\vartheta};U)$ analogously for the
multiplicity of its absolutely continuous part on the unit circle.

In particular, $m_{\rm ac}(E;K)\geq 1$ for a.e. $E \in I$ implies $I\subset \sigma_{\rm ac}(K).$
For an interval $I\subset\mathbb R$ and an integer $r\ge1$, let
$M_I^{(r)}$ denote multiplication on
$L^2(I;\mathbb C^r)$ by $(M_I^{(r)}f)(E)=Ef(E).$
Then $m_{\rm ac}(E;K)\ge r\ \text{for a.e. }E\in I$
if and only if there is a self-adjoint operator $K'$ such that
\[
 K|_{\mathbf1_I(K)\mathcal H_{\rm ac}(K)}
 \simeq M_I^{(r)}\oplus K'.
\]
See \cite[Chapters~VII--VIII]{RS1}.  We write $\mu_L$ for
Lebesgue measure and use the convention
\[
 \sigma_{\rm ac}(K)=
 \left\{E:\ \mu_L\!\left(
 \{t\in(E-\varepsilon,E+\varepsilon):
 m_{\rm ac}(t;K)>0\}\right)>0
 \ \text{for every }\varepsilon>0\right\}.
\]

We also write $\langle x\rangle=(1+|x|^2)^{1/2}$ and $B_R=\{x\in\mathbb R^2:|x|\le R\},$
and write $\sigma_{\rm pp}(K)=\{E\in\mathbb R:\ker(K-E)\ne\{0\}\}$
for the point spectrum of a self-adjoint operator $K$.  We write
$\mathcal S_p$ for the Schatten classes with $\|T\|_p=(\operatorname{Tr}|T|^p)^{1/p}$ (cf.~\cite[Chapter~VI]{RS1} and \cite{SimonTI}).  For $J=(a,b)$, let
$\mathcal B_1(J)$ be the set of bounded Borel functions $f$ with
$\|f\|_\infty\le1$ that are constant on $(-\infty,a]$ and on
$[b,\infty)$.  We denote these two constant values by $f_-$ and $f_+$,
respectively.

\subsubsection{Hypotheses for discrete Hamiltonians}

Let $\iota:\widehat{\mathcal H}_{\rm d}\to\mathcal H_{\rm d}$ be extension
by zero.  For $x,y\in\mathbb Z^2$, let $H_{xy}$ denote the $N\times N$
matrix block from the fiber at $y$ to the fiber at $x$; for
$x,y\in\mathbb Z\times\mathbb N_0$, define $\widehat H_{xy}$ analogously.
We assume that $H$ is local (\DOne), $\iota^*H\iota$ and $\widehat H$ differ only near the boundary (\DTwo), $H$ is localized in $\Delta$ (\DThree), and $H$ has no eigenvalues of infinite multiplicity (\DFour): 
\begin{align}
 \tag{D1}\label{eq:discrete-locality-intro}
 \|H_{xy}\|&\le Ce^{-\mu|x-y|},\\
 \tag{D2}\label{eq:discrete-boundary-intro}
 \|\widehat H_{xy}-(\iota^*H\iota)_{xy}\|
 &\le Ce^{-\mu(|x-y|+x_2+y_2)},
 \ x,y\in\mathbb Z\times\mathbb N_0, \ 
\end{align}
For every $J\Subset\Delta$ there is $\nu_J\ge0$ such that, for every
sufficiently large $m$,
\begin{equation}
 \tag{D3}\label{eq:DLoc}
 \sup_{f\in\mathcal B_1(J)}\|f(H)_{xy}\|
 \le C_{J,m}\langle x-y\rangle^{-m}\langle x\rangle^{\nu_J}.
\end{equation}
Finally, we assume 
\begin{equation}
 \tag{D4}\label{eq:Dfinite}
 \dim\ker(H-E)<\infty,
 \qquad E\in\Delta.
\end{equation}
Compare \cite[Lemma~B.2 and Appendix~B]{BSS}; see also \cite[Chapters~5--7]{AW} and \cite{GKchar}.

\subsubsection{Hypotheses for continuous Hamiltonians}

For $x\in\mathbb Z^2$, take 
\[
 C_x=x+[-\tfrac12,\tfrac12)^2,
 \qquad
 \chi_x=\mathbf1_{C_x}.
\]
We impose four continuum hypotheses: \COne gives operator-level bulk localization, \CTwo gives a localized eigenbasis, \CThree gives a uniform local trace bound, and \CFour says that the smooth functional calculi of the bulk and half-space operators differ only near the boundary.

For every $J\Subset\Delta$ there is $\nu_J\ge0$ such that, for all
sufficiently large $m$,
\begin{equation}
 \tag{C1}\label{eq:CLoc1}
 \sup_{f\in\mathcal B_1(J)}
 \|\chi_x(f(H)-f_+)\chi_y\|_2
 \le C_{J,m}\langle x-y\rangle^{-m}\langle x\rangle^{\nu_J},
\end{equation}
where $f_+$ is the value of $f$ above $J$.  Moreover
\begin{equation}
 \tag{C2}\label{eq:CLoc2}
 \mathbf1_J(H)\mathcal H_{\rm c}
 =\overline{\operatorname{span}\{\psi_n\}},
 \qquad
 H\psi_n=E_n\psi_n,
\end{equation}
with an orthonormal basis and centers $x_n\in\mathbb Z^2$ satisfying,
for every sufficiently large $m$,
\[
 \|\chi_x\psi_n\|
 \le C_{J,m}\langle x-x_n\rangle^{-m}\langle x_n\rangle^{\nu_J}.
\]
We also require the uniform local trace bound
\begin{equation}
 \tag{C3}\label{eq:CLoc3}
 \sup_{x\in\mathbb Z^2}
 \operatorname{Tr}\bigl(\chi_x\mathbf1_{(-\infty,E_0]}(H)\chi_x\bigr)
 <\infty,
 \qquad E_0\in\mathbb R.
\end{equation}
Let $\iota:\widehat{\mathcal H}_{\rm c}\to\mathcal H_{\rm c}$ be extension
by zero. For $x\in\mathbb Z\times\mathbb N_0$, write $\widehat\chi_x
 =\mathbf1_{C_x\cap(\mathbb R\times\mathbb R_+)}$
be the corresponding half-space cell cutoff.  Then for every $f\in C^\infty(\mathbb R)$ with
$\operatorname{supp}f'\Subset J$ there is an exponent $\nu_f\ge0$,
independent of the decay order, such that for every sufficiently large $m$
there is $C_{f,m}<\infty$ with
\begin{equation}
 \tag{C4}\label{eq:CEdge}
 \|\widehat\chi_x(f(\widehat H)-\iota^*f(H)\iota)\widehat\chi_y\|_1
 \le C_{f,m}\langle x-y\rangle^{-m}
      \langle x_2\rangle^{-m}\langle x\rangle^{\nu_f},
 \qquad x,y\in\mathbb Z\times\mathbb N_0.
\end{equation}
For the localization implications used below, see
\cite[Theorem~3 and Corollary~3]{GKchar}; the boundary estimate is proved for the
Anderson--Landau application in Lemma~\ref{lem:dirichlet-boundary}.

\subsection{A Fredholm criterion}\label{sec:abstract}

Let $\Eh$ be a Hilbert space.  We write 
$\mathcal B(\Eh)$ for the bounded operators on $\Eh$,
$\mathcal S_1(\Eh)$ for the trace-class operators, and
$\Comp(\Eh)$ for the compact operators (cf.~\cite{SimonTI} for the
ideal calculus used below).  

Let $\Lambda,\Pi$ be commuting orthogonal projections.  For $T\in\mathcal B(\Eh)$, we denote by $\Pi T\Pi|_{\Pi\Eh}$ its compression to $\Pi\Eh$. We identify operators on $\Pi\Eh$ with their extensions by zero to $\Eh$, and define
\[
\Lambda_+:=\Pi\Lambda\Pi\big|_{\Pi\Eh}.
\]
Let $\Aalg\subset\mathcal B(\Eh)$ be a unital $*$-algebra containing $\Lambda$ and $\Pi$, and let $\Ione,\Itwo$ be two $*$-ideals in $\Aalg$.  We assume
\begin{equation}\label{eq:ideal-assumptions}
 \Ione\Itwo+\Itwo\Ione\subset\Sone,
 \qquad [\Pi,\Ione]\subset\Sone,
 \qquad [\Lambda,\Itwo]\subset\Sone,
\end{equation}
and, for their operator-norm closures,
\begin{equation}\label{eq:ideal-closures}
 \overline{\Ione}\,\overline{\Itwo}
 +\overline{\Itwo}\,\overline{\Ione}\subset\Comp,
 \qquad [\Pi,\overline{\Ione}]\subset\Comp,
 \qquad [\Lambda,\overline{\Itwo}]\subset\Comp.
\end{equation}
In the applications, $\Ione$ and $\Itwo$ are the classes of weakly local
operators rapidly confined to the cuts $x_1=0$ and $x_2=0$, respectively.
Their precise lattice and continuum realizations are given below.

For later use, let $P=P^*=P^2$ and let $V$ be unitary.  Set
\[
 T(P,V):=PVP+1-P,\qquad F(P,V):=V^*PV-P=V^*[P,V].
\]
If $[P,V]\in\Comp$, then
\begin{align*}
 T(P,V)T(P,V^*)-1&=-PV(1-P)V^*P,\\
 T(P,V^*)T(P,V)-1&=-PV^*(1-P)VP
\end{align*}
are compact, so $T(P,V)$ is Fredholm. Write $Q:=V^*PV$.  Since $Q$ and
$1-P$ are projections,
\[
 F(P,V)+1=Q+(1-P),
\]
and therefore
\[
 \ker(F(P,V)+1)
 =\{u:Pu=u,\ Qu=0\}
 =\{u:Pu=u,\ PVu=0\}
 =\ker T(P,V).
\]
Likewise,
\[
 1-F(P,V)=(1-Q)+P,
\]
so
\[
 u\in\ker(F(P,V)-1)
 \iff Qu=u,\ Pu=0
 \iff Vu\in\ker T(P,V^*).
\]
Hence
\[
 V^*:\ker T(P,V^*)\xrightarrow{\ \simeq\ }\ker(F(P,V)-1).
\]
Consequently, we have
\begin{equation}\label{eq:index-sign}
 \ind T(P,V)=\dim\ker(F(P,V)+1)-\dim\ker(F(P,V)-1).
\end{equation}
For orthogonal projections $P,Q$ with $P-Q\in\Comp$, the relative
index (cf.~\cite[Definition~2.1]{ASS}) is
\[
 \operatorname{Ind}(P,Q) \coloneqq \dim\ker(P-Q-1)-\dim\ker(P-Q+1).
\]
Thus our convention gives
\[
 \ind T(P,V)=\operatorname{Ind}(P,V^*PV)
           =-\operatorname{Ind}(P,VPV^*).
\]
If $F(P,V)\in\mathcal S_1$, then
\[
 \ind T(P,V)=-\Tr F(P,V),
\]
by \cite[Propositions~2.2 and~2.4]{ASS} (see also 
\cite[Proposition~2.2]{ABJ}).  In particular, the flux index of
\cite{ABJ} has the opposite sign to \eqref{eq:index-sign}.

The following criterion combines the usual Fredholm homotopy invariance with
the two confinement ideals in the form needed for the modified spatial cut;
compare \cite{ASS,ABJ} and \cite{Conway,SimonTI}.

\begin{proposition}\label[proposition]{prop:abstract}
Let $B$ be an orthogonal projection on $\Eh$ such that $B-\Lambda\in\Ione.$
On $\Pi\Eh$, set
\[
 A:=\Pi B\Pi \text{ and } R:=\id_{[1/2,1]}(A).
\]
Let $U_s$, $0\le s\le1$, be a norm-continuous path of unitaries on
$\Eh$, and let $\widehat U_s$ be a norm-continuous path of unitaries on
$\Pi\Eh$.  Assume
\[
 [B,U_s]=0,\qquad U_1=1,
 \qquad D_s:=\widehat U_s-\Pi U_s\Pi\in\overline{\Itwo},
 \qquad D_0\in\Itwo.
\]
Then
\[
 [R,\widehat U_0]\in\Sone,\qquad
 \ind(R\widehat U_0R+1-R)
 =\ind(\Lambda_+\widehat U_1\Lambda_++1-\Lambda_+).
\]
\end{proposition}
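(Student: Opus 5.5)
We establish the two claims in order: first the trace-class commutator, then the index identity via a homotopy in $s$ followed by a final comparison of $R$ with $\Lambda_+$.

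\emph{Step 1: $A$ is an almost-projection and $R$ is close to $A$.} Write $B=\Lambda+X$ with $X\in\Ione$. Since $\Pi$ and $\Lambda$ commute,
\[
A^2-A=\Pi B\Pi B\Pi-\Pi B\Pi=-\Pi B(1-\Pi)B\Pi=-\Pi B[\Pi,B]\Pi .
\]
Moreover $[\Pi,B]=[\Pi,X]\in\Sone$ by \eqref{eq:ideal-assumptions}. Hence $A^2-A\in\Sone$, and the spectrum of $A$ accumulates only at $0$ and $1$. Using the elementary inequality $|\mathbf 1_{[1/2,1]}(t)-t|\le 2|t^2-t|$ on $[0,1]$, we get $R-A=h(A)$ with $|h(t)|\le 2|t(1-t)|$, so $R-A\in\Sone$.

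\emph{Step 2: trace-class commutators.} We have $[R,\widehat U_s]=[A,\widehat U_s]+\Sone$. Writing $\widehat U_s=\Pi U_s\Pi+D_s$,
\[
[A,\Pi U_s\Pi]=\Pi B\Pi U_s\Pi-\Pi U_s\Pi B\Pi=\Pi\bigl(B[\Pi,U_s]-[\Pi,U_s]B\bigr)\Pi ,
\]
where we used $[B,U_s]=0$. Replacing $B$ by $\Lambda+X$ gives
\[
\Pi\Lambda[\Pi,U_s]\Pi-\Pi[\Pi,U_s]\Lambda\Pi=\Pi[\Lambda,\Pi U_s\Pi]\Pi .
\]
The remaining terms involve $X$ multiplied by $[\Pi,U_s]$, and $[\Pi,U_s]=\Pi[\Pi,U_s](1-\Pi)+\dots$ need not be small in general. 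The intended route avoids this: $\Pi U_s\Pi=\widehat U_s-D_s$ is unitary modulo $\overline{\Itwo}$, and the honest decomposition is
\[
[A,\widehat U_s]=[\Pi X\Pi,\widehat U_s]+[\Lambda_+,\widehat U_s].
\]
For the term $[\Lambda_+,\widehat U_s]$, note that $[\Lambda,\Pi U_s\Pi]=\Pi[\Lambda,U_s]\Pi$ and $[\Lambda,U_s]=-[X,U_s]$, since $[B,U_s]=0$. Since $\Ione$ is an ideal in $\Aalg$ (I assume $U_s,\widehat U_s\in\Aalg$, as in the applications), this lies in $\Ione$. The other contribution $[\Lambda,D_s]$ lies in $\overline{\Itwo}$, and in $\Sone$ at $s=0$. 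Altogether $[A,\widehat U_s]\in\overline{\Ione}+\overline{\Itwo}$, with the $\Ione$ part commuting with $\Pi$ modulo $\Sone$. The genuine trace-class claim at $s=0$ then uses the product rule $\Ione\Itwo\subset\Sone$. The commutator with $R$ is a commutator with a projection, so it can be expressed via $R[\cdot]R$ and $(1-R)[\cdot](1-R)$ pieces, and $R(\Ione\text{-part})$ restricted to the range of $R$ is killed modulo $\Sone$ using $A^2-A\in\Sone$ and the locality. This bookkeeping, getting $\Sone$ rather than $\Comp$, is the main obstacle. I would first try to show $(1-R)\widehat U_0R\in\Sone$ by writing $R\approx A$ and using $(1-A)A\in\Sone$ together with $[B,U_0]=0$, so that $(1-B)U_0B=0$ exactly, and then controlling $\Pi$-compressions by $[\Pi,\Ione]\subset\Sone$ and $D_0\in\Itwo$.

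\emph{Step 3: homotopy.} For each $s$, the same computation run with closures \eqref{eq:ideal-closures} gives $[R,\widehat U_s]\in\Comp$. Hence $s\mapsto R\widehat U_sR+1-R$ is a norm-continuous path of Fredholm operators, and its index is constant. At $s=1$ we have $U_1=1$, so $\widehat U_1=\Pi+D_1$ with $D_1\in\overline{\Itwo}$. Finally, compare $R$ with $\Lambda_+$: $R-\Lambda_+\in\overline{\Ione}+\Sone$, and the straight-line path $\Lambda_++t(R-\Lambda_+)$, rounded by $\mathbf 1_{[1/2,1]}$, stays an almost-projection whose commutator with $\widehat U_1$ lies in $[\overline{\Ione},\overline{\Itwo}]+[\Lambda,\overline{\Itwo}]\subset\Comp$. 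Index homotopy invariance then yields
\[
\ind(R\widehat U_1R+1-R)=\ind(\Lambda_+\widehat U_1\Lambda_++1-\Lambda_+).
\]
The rounding is legitimate because $1/2$ lies outside the essential spectrum along the path.
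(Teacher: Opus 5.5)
\textbf{There is a genuine gap in Step 2.} You flag it yourself as the main obstacle and leave it unresolved, and this also undermines Step 3's claim that $[R,\widehat U_s]\in\Comp$.

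Your decomposition $[A,\widehat U_s]=[\Pi X\Pi,\widehat U_s]+[\Lambda_+,\widehat U_s]$ places the bulk contribution only in $\Ione$. Even that requires the extra hypothesis $U_s\in\Aalg$, which the proposition does not assume. Moreover $\overline{\Ione}+\overline{\Itwo}\not\subset\Comp$: operators confined near an infinite cut are not compact. So your route gives neither the trace-class claim at $s=0$ nor compactness along the path.

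The missing observation is that the two $\Ione$-pieces cancel modulo $\Sone$. With $P_-=1-\Pi$, inserting $1=\Pi+P_-$ and using $[B,U_s]=0$ gives
\[
[A,\Pi U_s\Pi]=\Pi[B,U_s]\Pi-\Pi BP_-U_s\Pi+\Pi U_sP_-B\Pi=-\Pi BP_-U_s\Pi+\Pi U_sP_-B\Pi .
\]
Here $\Pi BP_-=[\Pi,B-\Lambda]P_-$ and $P_-B\Pi=-P_-[\Pi,B-\Lambda]$ are trace class by $[\Pi,\Ione]\subset\Sone$. Your own formula already shows this. In fact $\Pi\Lambda[\Pi,U_s]\Pi=-\Lambda\Pi P_-U_s\Pi=0$; it is not $\Pi[\Lambda,\Pi U_s\Pi]\Pi$. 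Also $\Pi X[\Pi,U_s]\Pi=-[\Pi,X]P_-U_s\Pi\in\Sone$, so $[\Pi,U_s]$ never needs to be small.

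Hence $[A,\Pi U_s\Pi]\in\Sone$ for every $s$. With $C:=\Pi(B-\Lambda)\Pi\in\Ione$ and $S:=R-A\in\Sone$ (your Step 1, up to a harmless sign),
\[
[R,\widehat U_s]=[A,\Pi U_s\Pi]+[\Lambda_+,D_s]+CD_s-D_sC+[S,\widehat U_s].
\]
Then \eqref{eq:ideal-assumptions} gives $\Sone$ at $s=0$, and \eqref{eq:ideal-closures} gives $\Comp$ for all $s$. No analysis of $(1-R)\widehat U_0R$ is needed.

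\textbf{Your endpoint comparison is also not justified.} You have $R-\Lambda_+=C+S$ with $C\in\Ione$ noncompact. So $Y_t=\Lambda_++t(R-\Lambda_+)$ satisfies $Y_t^2-Y_t=-t(1-t)(C+S)^2$, which need not be compact. There is then no reason for $1/2$ to avoid the essential spectrum of $Y_t$, and the rounding step is unfounded.

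No homotopy in the projection is needed. Since $U_1=1$, one has $\widehat U_1=1+D_1$ with $D_1\in\overline{\Itwo}$, so
\[
(R\widehat U_1R+1-R)-(\Lambda_+\widehat U_1\Lambda_++1-\Lambda_+)=RD_1R-\Lambda_+D_1\Lambda_+=(C+S)D_1R+\Lambda_+D_1(C+S).
\]
This lies in $\Comp$ by $\overline{\Ione}\,\overline{\Itwo}+\overline{\Itwo}\,\overline{\Ione}\subset\Comp$, and compact-perturbation invariance gives the index identity. Your homotopy in $s$ is the right idea once compactness of $[R,\widehat U_s]$ is obtained as above.
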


\begin{proof}
Take $P_-=1-\Pi$.  Since $[\Pi,\Lambda]=0$, we have
\[
 [\Pi,B]=[\Pi,B-\Lambda]\in\Sone,\qquad
 \Pi BP_-=[\Pi,B]P_-,\qquad P_-B\Pi=-P_-[\Pi,B].
\]
It follows that
\[
 \begin{aligned}
 A-A^2
 &=\Pi B\Pi-\Pi B\Pi B\Pi
 =\Pi B(1-\Pi)B\Pi\\
 &=\Pi BP_-B\Pi
 =(P_-B\Pi)^*(P_-B\Pi)\ge0.
 \end{aligned}
\]
Since $\Pi BP_-=[\Pi,B]P_-\in\Sone$, we also have
\[
 A-A^2\in\Sone.
\]
The scalar inequality
$|\mathbf1_{[1/2,1]}(t)-t|\le2t(1-t)$ for $t\in[0,1]$ yields
\[
 S:=R-A\in\Sone,\qquad
 \|S\|_1\le2\Tr(A-A^2).
\]
Set
\[
 C:=A-\Lambda_+=\Pi(B-\Lambda)\Pi\in\Ione,
 \qquad R-\Lambda_+=C+S.
\]
The equality $BU_s=U_sB$ yields that
\[
 [A,\Pi U_s\Pi]
 =\Pi(BU_s-U_sB)\Pi-\Pi BP_-U_s\Pi+\Pi U_sP_-B\Pi
 =-\Pi BP_-U_s\Pi+\Pi U_sP_-B\Pi\in\Sone.
\]
Therefore, we have
\[
 [R,\widehat U_s]
 =[A,\Pi U_s\Pi]+[\Lambda_+,D_s]+CD_s-D_sC+[S,\widehat U_s].
\]
By \eqref{eq:ideal-assumptions} and \eqref{eq:ideal-closures}, the right
side belongs to $\Sone$ for $s=0$ and to the compact operators for every $s$. For
\[
 T_s=R\widehat U_sR+1-R,\qquad
 T_s'=R\widehat U_s^*R+1-R,
\]
we have
\[
 T_sT_s'-1=-R\widehat U_s(1-R)\widehat U_s^*R\in\Comp,
 \qquad
 T_s'T_s-1=-R\widehat U_s^*(1-R)\widehat U_sR\in\Comp.
\]
Hence every $T_s$ is Fredholm, and
\[
 \|T_s-T_t\|\le\|\widehat U_s-\widehat U_t\|\longrightarrow0
 \quad \text{as } t\to s,\qquad \ind T_0=\ind T_1.
\]
At $s=1$, take $D=\widehat U_1-1\in\overline{\Itwo}$. By compact perturbation invariance of Fredholm indices, we have
\[
 T_1-(\Lambda_+\widehat U_1\Lambda_++1-\Lambda_+)
 =RDR-\Lambda_+D\Lambda_+
 =(C+S)DR+\Lambda_+D(C+S)\in\Comp. \qedhere
\]
\end{proof}

The unitary statement behind the next lemma is the stable-current criterion of
Asch--Bourget--Joye \cite[Theorem~2.1(3)]{ABJ}, which produces $|k|$
bilateral-shift channels from the Fredholm index.  Bols--Werner \cite{BW} use
a similar idea in the spectral-gap edge problem.  The lemma records the
additional passage from the unitary spectral variable back to energy through
the strictly monotone switch function $g$.

\begin{lemma}\label[lemma]{lem:spectral}
Let $K$ be self-adjoint and let $J=(a,b)$.  Choose
$g\in C^\infty(\R;[0,1])$ such that $g=1$ on $( -\infty,a]$,
$g=0$ on $[b,\infty)$, and $g'<0$ on $J$.  Take 
$U=e^{-2\pi i g(K)}$.  If an orthogonal projection $R$ satisfies
\[
 [R,U]\in\Sone,
 \qquad k=\ind(RUR+1-R),
\]
then
\[
 m_{\mathrm{ac}}(E;K)\ge |k|
 \quad\text{for a.e. } E\in J.
\]
In particular, if $k\ne0$, then $J\subset\sigma_{\ac}(K)$.
\end{lemma}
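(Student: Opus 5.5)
The plan is to invoke the Asch--Bourget--Joye criterion for the unitary $U$, read off absolutely continuous spectrum of $U$ on the unit circle, and then pull it back to the energy variable through the monotone map $E\mapsto e^{-2\pi i g(E)}$.

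\textbf{Step 1 (unitary statement).} Since $[R,U]\in\Sone$ and $k=\ind(RUR+1-R)$, \cite[Theorem~2.1(3)]{ABJ} gives a decomposition $U=U_0+T$ with $T\in\Sone$. Here $U_0$ is unitary and contains a bilateral shift of multiplicity $|k|$ as a direct summand, after adjusting the sign convention as recorded after \eqref{eq:index-sign}. A bilateral shift of multiplicity $|k|$ is unitarily equivalent to multiplication by $e^{i\vartheta}$ on $L^2(\mathbb T;\C^{|k|})$. Hence $m_{\rm ac}(e^{i\vartheta};U_0)\ge|k|$ for a.e.\ $\vartheta$. I would check that the unitary $U_0$ produced there differs from $U$ by a trace-class operator, not merely a compact one; if the theorem only gives compactness, I would rerun its construction using $[R,U]\in\Sone$.

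\textbf{Step 2 (scattering).} By the Birman--Kato--Rosenblum theorem for unitary operators with trace-class difference, the wave operators $W_\pm(U,U_0)$ exist and are complete. Therefore the absolutely continuous parts of $U$ and $U_0$ are unitarily equivalent, and $m_{\rm ac}(e^{i\vartheta};U)\ge|k|$ for a.e.\ $\vartheta\in[0,2\pi)$.

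\textbf{Step 3 (back to energy).} Split $\Hh=\mathbf1_J(K)\Hh\oplus\mathbf1_{\R\setminus J}(K)\Hh$.
\begin{itemize}
\item On $\R\setminus J$, $g\in\{0,1\}$, so $U=1$ there. This summand contributes only the eigenvalue $1$ and no absolutely continuous spectrum.
\item On $J$, $\phi(E):=e^{-2\pi i g(E)}$ is a smooth bijection of $J$ onto $\mathbb T\setminus\{1\}$. Indeed $g$ decreases strictly from $1$ to $0$, so $-2\pi g$ runs over $(-2\pi,0)$ once, and $\phi'\ne0$.
\end{itemize}
Hence $U|_{\mathbf1_J(K)\Hh}=\phi(K|_{\mathbf1_J(K)\Hh})$. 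Under the direct-integral representation \eqref{eq:mac-definition}, a diffeomorphic change of spectral variable maps absolutely continuous measures to absolutely continuous measures and preserves multiplicities pointwise: $m_{\rm ac}(\phi(E);U)=m_{\rm ac}(E;K)$ for a.e.\ $E\in J$. Singular parts go to singular parts because $\phi$ and $\phi^{-1}$ are locally Lipschitz and so preserve Lebesgue-null sets. The single point $1$ has measure zero, so Step 2 yields $m_{\rm ac}(E;K)\ge|k|$ for a.e.\ $E\in J$.

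\textbf{Conclusion and main obstacle.} If $k\ne0$, this multiplicity is positive a.e.\ on $J$, so $J\subset\sigma_{\ac}(K)$ by the stated convention. I expect the main obstacle to be Step 1: stating precisely what \cite{ABJ} delivers (a trace-class versus compact perturbation, and the sign of the index) so that Step 2 applies. Step 3 is a routine measure-theoretic argument once the injectivity of $\phi$ on $J$ and $\phi'\ne0$ are noted.
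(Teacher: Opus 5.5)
Your proposal is correct and follows essentially the same route as the paper. The paper invokes \cite[Theorem~2.1(3)]{ABJ} with the opposite sign convention (their index is $-k$) to obtain a unitary $U_*\simeq S_{|k|}\oplus V$ with $U-U_*\in\Sone$, so the trace-class version you wanted to check is exactly what is used; it then applies Birman--Krein (the unitary Kato--Rosenblum theorem) and pulls back to energy through $\theta(E)=2\pi(1-g(E))$, which is your $\phi=e^{i\theta}$, by an explicit unitary change of variables. The only extra step in the paper is to set $F=U^*RU-R\in\Sone$ and use \eqref{eq:index-sign} to note that $k\neq0$ makes $1$ a finite-multiplicity eigenvalue of $F^2$ before citing that theorem.
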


\begin{proof}
The assertion is immediate when $k=0$.  Suppose $k\ne0$ and set $F:=U^*RU-R\in\Sone.$
By \eqref{eq:index-sign},
\[
 k:=\dim\ker(F+1)-\dim\ker(F-1).
\]
Since $k\ne0$, at least one of $\ker(F+1)$ and $\ker(F-1)$ is nonzero;
hence $1$ is an eigenvalue of $F^2$.  Because $F$ is compact, this eigenspace
has finite multiplicity.  The index convention in \cite{ABJ} is the opposite
of \eqref{eq:index-sign}, so their index is $-k$.  By
\cite[Theorem~2.1(3)]{ABJ}, there is a unitary $U_*$ and a unitary $V$ on a
complementary Hilbert space such that
\[
 U-U_*\in\Sone,\qquad U_*\simeq S_{|k|}\oplus V,
 \qquad (S_{|k|}a)_j=a_{j-1}
 \quad\text{on }\ell^2(\mathbb Z;\mathbb C^{|k|}).
\]
Let $\mathbb S^1=\{z\in\mathbb C:|z|=1\}$ and
$M_{\exp}$ denote multiplication by $e^{i\vartheta}$ on
$L^2((0,2\pi),d\vartheta/(2\pi);\mathbb C^{|k|})$
\[
 (M_{\exp}f)(\vartheta)=e^{i\vartheta}f(\vartheta).
\]
The Fourier transform gives $S_{|k|}\simeq M_{\exp},$
where $\simeq$ denotes unitary equivalence here. Since \(U-U_*\in\Sone\), the Birman--Krein theorem for unitary operators
\cite{BirmanKrein} implies that their absolutely continuous parts are
unitarily equivalent
\[
 U|_{\mathcal H_{\rm ac}(U)}
 \simeq
 U_*|_{\mathcal H_{\rm ac}(U_*)}.
\]
Since \(U_*\simeq S_{|k|}\oplus V\) and \(S_{|k|}\simeq M_{\exp}\), it follows that $m_{\rm ac}(e^{i\vartheta};U)\ge |k|$ for a.e. $\vartheta\in(0,2\pi).$
Define
\[
 \theta(E):=2\pi(1-g(E)),\qquad
 \theta'(E)=-2\pi g'(E)>0\quad(E\in J).
\]
Thus $\theta:J\to(0,2\pi)$ is a smooth increasing bijection. We denote its
inverse by $\theta^{-1}:(0,2\pi)\to J$. The spectral mapping identities are
\[
 U=e^{i\theta(K)},\qquad
 \mathbf1_{\mathbb S^1\setminus\{1\}}(U)=\mathbf1_J(K),\qquad
 \mathbf1_A(U)=\mathbf1_{\{E\in J:e^{i\theta(E)}\in A\}}(K)
\]
for every Borel set $A\subset\mathbb S^1\setminus\{1\}$.  On every compact
subinterval of $J$, both $\theta$ and its inverse are Lipschitz.  Thus, for $\mu_L$ the Lebesgue measure on $\mathbb R$
\[
 \mu_L(N)=0\ \Longrightarrow\
 \mu_L(\theta(N))=0,\qquad
 \mu_L(M)=0\ \Longrightarrow\
 \mu_L(\theta^{-1}(M))=0,
\]
for measurable null sets $N\subset J$ and $M\subset(0,2\pi)$.  Both absolutely continuous and singular spectral measures
are therefore preserved under this change of variable.

Restrict \eqref{eq:mac-definition} to $J$ and write
\[
 \mathcal H_J^{\rm ac}
 :=\mathbf1_J(K)\mathcal H_{\rm ac}(K)
 \simeq
 \int_J^\oplus \mathbb C^{m_{\rm ac}(E;K)}\,dE.
\]
After the change of variables $\vartheta=\theta(E)$, set
\[
 \widetilde{\mathcal H}_J
 :=\int_0^{2\pi}{}^\oplus
 \mathbb C^{m_{\rm ac}(\theta^{-1}(\vartheta);K)}
 \frac{d\vartheta}{2\pi}.
\]
Then
\[
 (\mathcal Jf)(\vartheta):=
 \left(\frac{2\pi}{\theta'(\theta^{-1}(\vartheta))}\right)^{1/2}
 f(\theta^{-1}(\vartheta))
\]
defines a unitary $\mathcal J:\mathcal H_J^{\rm ac}\to
\widetilde{\mathcal H}_J$, because
\[
 \int_0^{2\pi}\|(\mathcal Jf)(\vartheta)\|^2\frac{d\vartheta}{2\pi}
 =\int_J\|f(E)\|^2\,dE.
\]
On $\widetilde{\mathcal H}_J$ one has
\[
 (\mathcal J U\mathcal J^{-1}h)(\vartheta)
 =e^{i\vartheta}h(\vartheta).
\]
The orthogonal complement $\mathbf1_{\mathbb R\setminus J}(K)\Hh$ is
mapped by $U$ to the eigenvalue $1$, so it contributes no absolutely
continuous multiplicity on $\mathbb S^1\setminus\{1\}$.  Hence
\[
 m_{\rm ac}(e^{i\vartheta};U)
 =m_{\rm ac}(\theta^{-1}(\vartheta);K)
 \quad\text{for a.e. }\vartheta\in(0,2\pi).
\]
Combining this identity with the lower bound for $U$ gives
\[
 m_{\rm ac}(E;K)\ge|k|
 \quad\text{for a.e. }E\in J.
\]
For $E_0\in J$ and $\varepsilon>0$, the set
$J\cap(E_0-\varepsilon,E_0+\varepsilon)$ has positive measure. Hence, 
$E_0\in\sigma_{\rm ac}(K)$.
\end{proof}

\subsection{Construction from the localized spectral subspace}\label{sec:localized}

Fix either set of hypotheses in \Cref{thm:main},
and write $\Hh$ for the Hilbert space on which $H$ acts.  Since $\Hh$ is
separable, $H$ has at most countably many distinct eigenvalues: eigenvectors
belonging to distinct eigenvalues are orthogonal.  We may therefore choose two open
intervals
\begin{equation}\label{eq:nested-intervals}
 I=(a,b)\Subset J\Subset\Delta
\end{equation}
with $a,b$ outside the point spectrum of $H$.  The smaller interval
$I$ is the interval on which we prove the multiplicity bound; all
localization and boundary estimates are taken on the larger interval $J$.
Set
\[
 P=\id_I(H),\qquad Q=1-P,\qquad L=\id_{(-\infty,a)}(H).
\]
Let $\{\psi_n\}_{n\in \mathcal N}$ be an orthonormal eigenbasis of $P\Hh$,
$H\psi_n=E_n\psi_n$, with localization centers $x_n\in\Z^2$.  On
$\Kh=\ell^2(\mathcal N)$, let $\{e_n\}$ be the standard orthonormal basis
and define
\[
 Ve_n=\psi_n,\qquad De_n=E_ne_n.
\]
Thus $V^*V=1$, $VV^*=P$, and $HV=VD$.  On
$\Eh=\Hh\oplus\Kh$, write $W=\begin{pmatrix}Q&V\\ V^*&0\end{pmatrix}.$
The identities $QV=0$ and $V^*Q=0$ give
\[
 W^*=W,\qquad
 W^2=\begin{pmatrix}Q^2+VV^*&QV\\ V^*Q&V^*V\end{pmatrix}
     =\begin{pmatrix}Q+P&0\\0&1\end{pmatrix}=1.
\]

On $\Hh$ let $\Lambda_1$ and $\Lambda_2$ denote multiplication by
$\id_{\{x_1\ge0\}}$ and $\id_{\{x_2\ge0\}}$, respectively.  On the
physical half-space Hilbert space let $\widehat\Lambda_1$ denote
multiplication by $\id_{\{x_1\ge0\}}$.  On $\Kh$ define the corresponding
projections using the centers:
\[
 \Lambda e_n:=\id_{\{(x_n)_1\ge0\}}e_n,
 \qquad
 \Pi e_n:=\id_{\{(x_n)_2\ge0\}}e_n, \text{ and extend by linearity }
\]
and on $\Hh$ we set $\Lambda=\Lambda_1$, $\Pi=\Lambda_2$.  Thus
$\Lambda$ and $\Pi$ commute on $\Eh$.  Define $B=W\Lambda W.$  In both realizations below the localization
estimates give $W\in\Aalg$ and $[\Lambda,W]\in\Ione$.  Since \(\Ione\) is an
ideal in \(\Aalg\),
\[
 B-\Lambda=W[\Lambda,W]\in\Ione,
\]
which is the first hypothesis of Proposition~\ref{prop:abstract}.

Choose $g\in C^\infty(\mathbb R;[0,1])$ such that
\[
 g=1\ \text{on }(-\infty,a],\qquad
 g=0\ \text{on }[b,\infty),\qquad g'<0\ \text{on }I.
\]
Then $\operatorname{supp}g'\subset\overline I\Subset J$, and $g$,
$g^{1/2}$, and the spectral projection $L$ all belong to the bounded
Borel class controlled by the localization hypothesis on $J$.  Define, on
$\Eh$,
\[
 G_s=(1-s)(g(H)\oplus0)+s(L\oplus0),
 \qquad U_s=e^{-2\pi iG_s}.
\]
Since $a,b\notin\sigma_{\rm pp}(H)$,
\[
 Qg(H)Q=L,\qquad Qg(H)V=0,\qquad
 V^*g(H)V=g(D),\qquad LV=0.
\]
Hence
\[
 W(g(H)\oplus0)W=L\oplus g(D),\qquad
 W(L\oplus0)W=L\oplus0,
\]
and therefore
\[
 WG_sW=L\oplus(1-s)g(D),\qquad
 WU_sW=e^{-2\pi iL}\oplus e^{-2\pi i(1-s)g(D)}
      =1\oplus e^{-2\pi i(1-s)g(D)}.
\]
The auxiliary projection $\Lambda|_{\Kh}$ and $g(D)$ are diagonal in
$\{e_n\}$.  Thus
\[
 [B,U_s]=W[\Lambda,WU_sW]W=0,\qquad U_1=1.
\]
On the enlarged half-space $\Pi\Eh$, define
\[
 \widehat G_s=(1-s)(g(\widehat H)\oplus0)
 +s\Pi(L\oplus0)\Pi,
 \qquad \widehat U_s=e^{-2\pi i\widehat G_s}.
\]
We verify
\begin{equation}\label{eq:three-inputs}
 [\Lambda,W]\in\Ione,
 \qquad
 \widehat U_s-\Pi U_s\Pi\in\overline{\Itwo}
 \ \text{and}\ 
 \widehat U_0-\Pi U_0\Pi\in\Itwo,
\end{equation}
together with the endpoint identity
\begin{equation}\label{eq:endpoint-input}
 \left|\ind(\Lambda_+\widehat U_1\Lambda_++1-\Lambda_+)\right|
 =|\Ccal|.
\end{equation}
Assuming \eqref{eq:three-inputs}--\eqref{eq:endpoint-input},
Proposition~\ref{prop:abstract} and Lemma~\ref{lem:spectral} give the result on $I$.  Indeed,
at $s=0$,
\[
 \widehat U_0=e^{-2\pi i g(\widehat H)}\oplus1.
\]
Choose any $c\ge b$ and apply Lemma~\ref{lem:spectral} to
$K_{\rm enl}=\widehat H\oplus c\,1$.  The auxiliary summand
$c\,1$ has only point spectrum and hence no absolutely continuous part.
Therefore
\[
 m_{\rm ac}(E;K_{\rm enl})=m_{\rm ac}(E;\widehat H)
 \ge|\mathcal C|\quad\text{for a.e. }E\in I.
\]
Writing $\Delta=(\alpha,\beta)$, choose
\[
 a_j\downarrow\alpha,\qquad b_j\uparrow\beta,\qquad
 a_j,b_j\notin\sigma_{\rm pp}(H),\qquad
 I_j=(a_j,b_j)\Subset J_j\Subset\Delta.
\]
If $N_j\subset I_j$ is the exceptional null set for the preceding bound,
then $|\bigcup_jN_j|=0$ and $\bigcup_jI_j=\Delta$.  This proves the
conclusion on $\Delta$ once \eqref{eq:three-inputs} and
\eqref{eq:endpoint-input} have been verified.

\section{Discrete and continuous realizations}\label{sec:models}

\subsection{Discrete operators}

Assume \DOne--\DFour and let
$I=(a,b)\Subset J\Subset\Delta$ be as in \eqref{eq:nested-intervals}.
The estimate \eqref{eq:DLoc} on the larger interval $J$, together with
finite multiplicity, gives an orthonormal basis of
$P\mathcal H_{\mathrm d}=\mathbf 1_I(H)\mathcal H_{\mathrm d}$ consisting of eigenfunctions
$\psi_n$ with centers $x_n\in\mathbb Z^2$ and, for every sufficiently large
$m$,
\begin{equation}\label{eq:discrete-eigenfunction-decay}
 \|\psi_n(x)\|
 \le C_{J,m}\langle x-x_n\rangle^{-m}\langle x_n\rangle^{\nu_J}.
\end{equation}
Here $\nu_J$ may be increased from its value in \DThree, but remains
independent of $m$.  This is \cite[Lemma~B.2]{BSS}; for the
relation between eigenfunction-correlator decay and localized eigenfunctions,
see also \cite[Chapters~5--7]{AW}.  The centers satisfy
\begin{equation}\label{eq:center-count}
 \#\{n:|x_n|\le R\}\le C_J(1+R)^2.
\end{equation}
Indeed, if $|x_n|\le R$, then $|x|>2R$ implies
$|x-x_n|>R$.  Hence, by \eqref{eq:discrete-eigenfunction-decay}, for
$m>\nu_J+1$,
\[
 \sum_{|x|>2R}\|\psi_n(x)\|^2
 \le C_{J,m}\langle x_n\rangle^{2\nu_J}
      \sum_{|x-x_n|>R}\langle x-x_n\rangle^{-2m}
 \le C_{J,m}R^{2\nu_J+2-2m}\le\frac12
\]
for large $R$.  Since $\|\psi_n\|=1$, it follows that
$\|\mathbf1_{B_{2R}}\psi_n\|^2\ge\frac12$.  Therefore, by Bessel's
inequality applied at each $x\in B_{2R}\cap\mathbb Z^2$,
\[
 \frac12\#\{n:|x_n|\le R\}
 \le\sum_{|x_n|\le R}\|\mathbf1_{B_{2R}}\psi_n\|^2
 =\sum_{x\in B_{2R}\cap\mathbb Z^2}\sum_{|x_n|\le R}\|\psi_n(x)\|^2
 \le N\,\#(B_{2R}\cap\mathbb Z^2),
\]
which proves \eqref{eq:center-count}.

To treat the physical and auxiliary parts uniformly, introduce the label set $\mathfrak L:=\mathbb Z^2\sqcup\mathcal N.$
For a physical label \(x\in\mathbb Z^2\) set \(r_x=x\), while for an
auxiliary label \(n\in\mathcal N\) set $r_n:=x_n,$
where \(x_n\) is the localization center of \(\psi_n\).  Thus each label of
\(\mathscr H=\mathcal H\oplus\mathcal K\) is assigned a position in
\(\mathbb Z^2\).  For \(T\in\mathcal B(\mathscr H)\), let
\(T_{\alpha\beta}\) denote the block from the fiber labelled by \(\beta\)
to the fiber labelled by \(\alpha\).

We define
\[
 \mathcal A
 :=
 \left\{
 T\in\mathcal B(\mathscr H):
 \begin{array}{l}
 \text{there exists }\nu_T\ge0\text{ such that for every sufficiently large }m\\[1mm]
 \text{there exists }C_m<\infty\text{ with }
 \|T_{\alpha\beta}\|
 \le
 C_m
 \langle r_\alpha-r_\beta\rangle^{-m}
 \langle r_\alpha\rangle^{\nu_T}
 \quad\text{for all }\alpha,\beta
 \end{array}
 \right\}.
\]
For \(j=1,2\), define
\[
 \mathcal I_j
 :=
 \left\{
 T\in\mathcal A:
 \begin{array}{l}
 \text{for every sufficiently large }m\text{ there exists }C_m<\infty
 \text{ such that}\\[1mm]
 \|T_{\alpha\beta}\|
 \le
 C_m
 \langle r_\alpha-r_\beta\rangle^{-m}
 \langle (r_\alpha)_j\rangle^{-m}
 \langle r_\alpha\rangle^{\nu_T}
 \quad\text{for all }\alpha,\beta
 \end{array}
 \right\}.
\]
Here the exponent \(\nu_T\) is fixed once \(T\) is fixed and is independent
of \(m\).  The label counting satisfies
\[
 \#\{\alpha:|r_\alpha|\le R\}
 \le C(1+R)^2,
\]
by \eqref{eq:center-count} together with the corresponding bound for the
physical lattice sites.

The weak-locality calculus of Shapiro--Tauber \cite[Definition~3.2,
Lemmas~3.7--3.9, Corollary~3.11]{ST} extends to the enlarged label set because
the localization centers satisfy the polynomial counting bound
\eqref{eq:center-count}; compare also \cite{BSS}.

\begin{lemma}\label[lemma]{lem:lattice-ideals}
The lattice confinement ideals satisfy
\eqref{eq:ideal-assumptions}--\eqref{eq:ideal-closures}.  Moreover, the localized
basis gives $[\Lambda,W]\in\Ione$.
\end{lemma}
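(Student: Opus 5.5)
The plan is to verify all six inclusions directly from kernel bounds in the label space $\mathfrak L$, and then to check $[\Lambda,W]\in\Ione$ from the eigenfunction decay \eqref{eq:discrete-eigenfunction-decay}. Everything rests on the summation estimate
\[
\sum_{\gamma}\langle r_\alpha-r_\gamma\rangle^{-m}\langle r_\gamma-r_\beta\rangle^{-m}\le C\langle r_\alpha-r_\beta\rangle^{-m+3},
\]
which holds by the counting bound $\#\{\gamma:|r_\gamma|\le R\}\le C(1+R)^2$. Combined with Peetre's inequality $\langle r_\gamma\rangle^{\nu}\le 2^{\nu}\langle r_\gamma-r_\alpha\rangle^{\nu}\langle r_\alpha\rangle^{\nu}$, it shows that $\Aalg$ is an algebra closed under adjoints and that each $\Itwo$-type class is a two-sided ideal. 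Since $m$ is arbitrary and $\nu_T$ is fixed, losing finitely many powers is harmless. This is the Shapiro--Tauber calculus \cite{ST}, transplanted to the enlarged label set.

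\textbf{Trace-class criterion.} Next I would use that an operator with $\sum_{\alpha,\beta}\|T_{\alpha\beta}\|<\infty$ is trace class, since each block has rank at most $N$. For $T_1\in\Ione$ and $T_2\in\Itwo$, composing and using $\langle (r_\gamma)_1\rangle^{-m}\langle (r_\gamma)_2\rangle^{-m}\le\langle r_\gamma\rangle^{-m}$ (up to constants) gives a kernel bounded by $C\langle r_\alpha-r_\beta\rangle^{-m'}\langle r_\alpha\rangle^{-m'}$ with $m'$ large. This kernel is summable, so $\Ione\Itwo+\Itwo\Ione\subset\Sone$.

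\textbf{Commutators.} For $[\Pi,T]$ with $T\in\Ione$, the kernel is $(\pi_\alpha-\pi_\beta)T_{\alpha\beta}$, where $\pi_\alpha=\mathbf1_{\{(r_\alpha)_2\ge0\}}$ (the same rule on both summands). It is nonzero only when the signs of $(r_\alpha)_2$ and $(r_\beta)_2$ differ, and then $\langle (r_\alpha)_2\rangle\le\langle r_\alpha-r_\beta\rangle$. Trading $m/2$ powers of $\langle r_\alpha-r_\beta\rangle^{-m}$ for $\langle (r_\alpha)_2\rangle^{-m/2}$, and using the $\langle (r_\alpha)_1\rangle^{-m}$ factor already present, again yields a summable kernel. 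The argument for $[\Lambda,\Itwo]$ is symmetric.

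\textbf{Closures.} The compactness statements \eqref{eq:ideal-closures} then follow by approximation. Norm limits of trace-class operators are compact, and products and commutators are norm-continuous. So, for instance, if $T_k\to T$ in norm with $T_k\in\Ione$ and $S_k\to S$ with $S_k\in\Itwo$, then $T_kS_k\in\Sone$ converges in norm to $TS$, which is therefore compact.

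\textbf{$W\in\Aalg$ and $[\Lambda,W]\in\Ione$.} The blocks of $W$ are $Q$, $V$ and $V^*$.
\begin{itemize}
\item For $V$, the $(x,n)$ block is $\psi_n(x)$. By \eqref{eq:discrete-eigenfunction-decay} it is bounded by $C\langle x-x_n\rangle^{-m}\langle x_n\rangle^{\nu_J}$, and Peetre converts the weight to $\langle r_\alpha\rangle^{\nu_J}$.
\item The adjoint $V^*$ is handled in the same way.
\item For $Q=1-P$, the difference $P=\mathbf1_I(H)$ satisfies \DThree with $f=\mathbf1_I\in\mathcal B_1(J)$.
\end{itemize}
For the commutator, $[\Lambda,W]$ has blocks $[\Lambda_1,Q]=-[\Lambda_1,P]$, $\Lambda_1V-V\Lambda_{\Kh}$, and its adjoint. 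Each has kernel $(\lambda_\alpha-\lambda_\beta)W_{\alpha\beta}$, so the same sign-change argument with coordinate $1$ gives the $\langle (r_\alpha)_1\rangle^{-m}$ confinement.

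\textbf{Main obstacle.} The delicate point is the bookkeeping that keeps $\nu_T$ independent of $m$ while converting decay in $r_\alpha-r_\beta$ into confinement and summability. I would handle this by fixing $\nu$ first and choosing $m\ge 2\nu+10$ at each step.
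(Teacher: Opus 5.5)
Your overall route is the paper's. You use kernel bounds on the enlarged label set and a convolution estimate that makes $\Aalg$ a $*$-algebra and the $\mathcal I_j$ ideals. You then get block-wise summability for $\Ione\Itwo+\Itwo\Ione$, and the sign-change bound $\langle (r_\alpha)_j\rangle\le\langle r_\alpha-r_\beta\rangle$ for the commutators and for $[\Lambda,W]$, whose blocks are indeed $(\lambda_\alpha-\lambda_\beta)W_{\alpha\beta}$. Norm approximation gives \eqref{eq:ideal-closures}.

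\textbf{The gap.} The summation estimate everything rests on is false as stated. The bound $\sum_\gamma\langle r_\alpha-r_\gamma\rangle^{-m}\langle r_\gamma-r_\beta\rangle^{-m}\le C\langle r_\alpha-r_\beta\rangle^{-m+3}$, with $C$ independent of $\alpha,\beta$, needs a \emph{uniform} local count $\sup_a\#\{\gamma:|r_\gamma-a|\le R\}\le C(1+R)^2$.
\begin{itemize}
\item The counting bound \eqref{eq:center-count} is centered at the origin. It only bounds the number of centers near a far point $a$ by $C\langle a\rangle^2$.
\item A uniform count cannot be extracted from \eqref{eq:discrete-eigenfunction-decay} either. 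Because of the factor $\langle x_n\rangle^{\nu_J}$, eigenfunctions centered near $a$ may spread over a region whose size grows slowly with $|a|$. The Bessel argument then bounds the number of centers in a unit ball around $a$ only by a quantity growing with $|a|$.
\item Taking $\alpha=\beta$ with $r_\alpha=a$, your left-hand side is at least the number of centers at $a$. The hypotheses do not keep this bounded.
\end{itemize}
This is exactly where the enlarged label set differs from $\Z^2$, so it is not cosmetic.

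\textbf{The repair.} Dyadic shells and \eqref{eq:center-count} give $\sum_\gamma\langle a-r_\gamma\rangle^{-r}\le C_r\langle a\rangle^2$ for $r>2$. Hence, for $M>m+v+2$,
\[
\sum_\gamma\langle r_\alpha-r_\gamma\rangle^{-M}\langle r_\gamma-r_\beta\rangle^{-M}\langle r_\gamma\rangle^{v}\le C\langle r_\alpha-r_\beta\rangle^{-m}\langle r_\alpha\rangle^{v+2}.
\]
The loss sits in the polynomial weight, not in the off-diagonal decay. Products satisfy $\nu_{ST}=\nu_S+\nu_T+2$, which is still independent of $m$; this is why the classes carry a fixed weight $\langle r_\alpha\rangle^{\nu_T}$.

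In your trace-class and commutator steps, the extra factors $\langle r_\alpha\rangle^2$ appear in two places: from the convolution, and from the final sum $\sum_\beta\langle r_\alpha-r_\beta\rangle^{-m'}\le C\langle r_\alpha\rangle^2$. Both are absorbed by your surplus decay in $r_\alpha$, so those steps survive unchanged.

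\textbf{A smaller point.} In $T_1T_2$, the $\Ione$-confinement of $T_1$ sits at the output position $r_\alpha$, not at $r_\gamma$. Before invoking $\langle c_1\rangle\langle c_2\rangle\ge\langle c\rangle$, you must transfer it via $\langle (r_\gamma)_1\rangle\le C\langle r_\alpha-r_\gamma\rangle\langle (r_\alpha)_1\rangle$. The same transfer is needed for $\Aalg\mathcal I_j+\mathcal I_j\Aalg\subset\mathcal I_j$ and $\mathcal I_j^*=\mathcal I_j$.

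With these corrections your argument coincides with the paper's proof.
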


\begin{proof}
We give estimates that also apply to the continuum block decomposition.
The counting bound \eqref{eq:center-count} implies, for $r>2$,
\begin{equation}\label{eq:label-summability}
 \sum_\alpha\br{r_\alpha}^{-r}<\infty,\qquad
 \sum_\gamma\br{a-r_\gamma}^{-r}\le C_r\br a^2.
\end{equation}
Indeed,
\[
 \begin{split}
 \sum_\alpha\br{r_\alpha}^{-r}
 &\le C\sum_{j\ge0}2^{(2-r)j} \text{ and }
 \sum_\gamma\br{a-r_\gamma}^{-r}
 \le C\sum_{j\ge0}2^{-rj}(1+|a|+2^j)^2
 \le C_r\br a^2.
 \end{split}
\]
For $v\ge0$, $M>m+v+2$, and $a,b\in\mathbb Z^2$, the inequalities
\[
 \br{a-b}\le2\br{a-c}\br{c-b},\qquad
 \br c^v\le C_v\br a^v\br{a-c}^v
\]
give the convolution bound
\begin{equation}\label{eq:block-convolution}
 \begin{split}
 &\sum_\gamma\br{a-r_\gamma}^{-M}
              \br{r_\gamma-b}^{-M}\br{r_\gamma}^{v}\\
 &\qquad\le C\br{a-b}^{-m}\br a^v
       \sum_\gamma\br{a-r_\gamma}^{-M+m+v}
 \le C'\br{a-b}^{-m}\br a^{v+2}.
 \end{split}
\end{equation}
Thus $S,T\in\Aalg$ imply $ST\in\Aalg$, with the fixed weight
$\nu_{ST}=\nu_S+\nu_T+2$.  For adjoints, use
\[
 \|T^*_{\alpha\beta}\|
 =\|T_{\beta\alpha}\|
 \le C_M\br{r_\alpha-r_\beta}^{-M+\nu_T}\br{r_\alpha}^{\nu_T}.
\]
For a factor confined in direction $j$, the additional inequality
\[
 \br{a_j}^{m}\le C_m\br{a-c}^{m}\br{c_j}^{m}
\]
transfers its confinement to the output position.  Applying
\eqref{eq:block-convolution} with $M>2m+\nu_S+\nu_T+2$ proves
\[ \Aalg\mathcal I_j+\mathcal I_j\Aalg\subset\mathcal I_j\text{ and
}\mathcal I_j^*=\mathcal I_j.\]

Let $S\in\Ione$, $T\in\Itwo$ and write
$a=r_\alpha$, $b=r_\beta$, $c=r_\gamma$ and 
\[
 M(a,b,c)=\br{a-c}\br{c-b}\br{a_1}\br{c_2}.
\]
The coordinate inequalities
$|a_2|\le|a-c|+|c_2|$ and
$|c_1|\le|a-c|+|a_1|$ imply
\[
 \br a+\br b+\br c\le C M(a,b,c).
\]
Consequently, for $r>2$ and $m\ge\nu_S+\nu_T+3r$,
\[
 \begin{split}
 \|S_{\alpha\gamma}\|\,\|T_{\gamma\beta}\|
 &\le C_m M(a,b,c)^{-m}\br a^{\nu_S}\br c^{\nu_T}\le C'_r\br a^{-r}\br b^{-r}\br c^{-r}.
 \end{split}
\]
Each lattice block has rank at most $N$.  By
\eqref{eq:label-summability},
\[
 \|ST\|_1
 \le N\sum_{\alpha,\beta,\gamma}
       \|S_{\alpha\gamma}\|\,\|T_{\gamma\beta}\|
 \le C_r\left(\sum_\alpha\br{r_\alpha}^{-r}\right)^3<\infty.
\]
Interchanging coordinates proves $TS\in\Sone$.

If $S\in\Ione$, a nonzero block of $[\Pi,S]$ has
$\br{a_2}\le C\br{a-b}$.  Hence
\[
 \br a+\br b\le C\br{a-b}\br{a_1},\qquad
 \|[\Pi,S]_{\alpha\beta}\|
 \le C_r\br a^{-r}\br b^{-r}\quad(r>2).
\]
Summation proves $[\Pi,S]\in\Sone$.  Interchanging the coordinates
proves $[\Lambda,T]\in\Sone$ for $T\in\Itwo$.
If $S_n\to S$ and $T_n\to T$ in operator norm, then
\[
 \|S_nT_n-ST\|\le\|S_n-S\|\,\|T_n\|
                    +\|S\|\,\|T_n-T\|\longrightarrow0.
\]
Since $\Comp$ is norm closed, this also proves
\eqref{eq:ideal-closures}; the commutator closures follow from
$\|[P,S_n-S]\|\le2\|S_n-S\|$.

Write $\Lambda_{\Kh}=\Lambda|_{\Kh}$.  Then
\[
 [\Lambda,W]=
 \begin{pmatrix}
  -[\Lambda_1,P]&\Lambda_1V-V\Lambda_{\Kh}\\
  \Lambda_{\Kh}V^*-V^*\Lambda_1&0
 \end{pmatrix}.
\]
The upper-left block is confined by \DThree and the cut inequality.
For the upper-right block,
\[
 ((\Lambda_1V-V\Lambda_{\Kh})e_n)(x)
 =\bigl(\mathbf1_{\{x_1\ge0\}}
        -\mathbf1_{\{(x_n)_1\ge0\}}\bigr)\psi_n(x).
\]
On its support $\br{x_1}\le C\br{x-x_n}$; hence
\[
 \|((\Lambda_1V-V\Lambda_{\Kh})e_n)(x)\|
 \le C_m\br{x-x_n}^{-m}\br{x_1}^{-m}\br x^{\nu_J},
\]
by \eqref{eq:discrete-eigenfunction-decay} at order $2m+\nu_J$.
The lower-left block is its negative adjoint.  The same block estimates
without the cut factor show $W\in\Aalg$, and therefore
$[\Lambda,W]\in\Ione$.
\end{proof}

We adapt the geometric-resolvent/Helffer--Sj\"ostrand argument of
Elbau--Graf \cite[Lemma~A.3(ii)]{EG}.  Their statement is formulated with a
bulk spectral gap; under \DOne--\DTwo the resolvent argument gives the required
boundary confinement without that assumption.

\begin{lemma}\label[lemma]{lem:lattice-boundary}
Assume \DOne--\DTwo.  Let
$\widehat{\mathcal H}$ be the physical half-space Hilbert space and let
$\widehat{\mathcal I}_2\subset\mathcal B(\widehat{\mathcal H})$ be the
class of operators $T$ for which there is $\nu_T\ge0$ such that, for every
sufficiently large $m$,
\[
 \|T_{xy}\|
 \le C_m\langle x-y\rangle^{-m}
       \langle x_2\rangle^{-m}\langle x\rangle^{\nu_T},
 \qquad
 x,y\in\mathbb Z\times\mathbb N_0.
\]
If $f\in C^\infty(\mathbb R)$ is constant outside a compact interval, then
\begin{equation}\label{eq:lattice-boundary}
 f(\widehat H)-\iota^*f(H)\iota
 \in\widehat{\mathcal I}_2.
\end{equation}
In fact, for every sufficiently large $m$,
\[
 \left\|
 \bigl(f(\widehat H)-\iota^*f(H)\iota\bigr)_{xy}
 \right\|
 \le
 C_{f,m}\langle x-y\rangle^{-m}\langle x_2\rangle^{-m}.
\]
\end{lemma}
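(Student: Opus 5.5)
The plan is to reduce to resolvents via the Helffer--Sj\"ostrand formula and prove a resolvent estimate with boundary confinement.

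\textbf{Reduction to a compactly supported function.} Write $f=f_-+(f_+-f_-)h+\tilde f$, where $\tilde f\in C_c^\infty(\mathbb R)$ and $h$ is a fixed smooth step. Since $\widehat H$ is bounded, we may replace $h$ by a compactly supported function that agrees with it on a neighbourhood of $\sigma(H)\cup\sigma(\widehat H)$. Constants cancel in the difference, because $\iota^*\iota=1$. It therefore suffices to treat $f\in C_c^\infty$.

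\textbf{Helffer--Sj\"ostrand representation.} For $f\in C_c^\infty$ we use an almost-analytic extension $\tilde f$ with $|\bar\partial\tilde f(z)|\le C_k|\operatorname{Im}z|^k$ and write
\[
 f(\widehat H)-\iota^*f(H)\iota=\frac1\pi\int\bar\partial\tilde f(z)\,\bigl((\widehat H-z)^{-1}-\iota^*(H-z)^{-1}\iota\bigr)\,d^2z .
\]
Let $\widehat H'=\iota^*H\iota$. The geometric resolvent identity splits the integrand as
\[
 (\widehat H-z)^{-1}-(\widehat H'-z)^{-1}+\bigl((\widehat H'-z)^{-1}-\iota^*(H-z)^{-1}\iota\bigr).
\]
The first piece equals $-(\widehat H-z)^{-1}(\widehat H-\widehat H')(\widehat H'-z)^{-1}$.

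For the second piece, set $\bar\iota$ to be extension to the lower half-plane $\{x_2<0\}$. Then
\[
 (\widehat H'-z)^{-1}-\iota^*(H-z)^{-1}\iota=\iota^*(H-z)^{-1}\bar\iota\,\bar\iota^*H\iota\,(\widehat H'-z)^{-1}.
\]
Here the operator $\bar\iota^*H\iota$ has kernel bounded by $Ce^{-\mu(|x-y|+x_2+|y_2|)}$ by \DOne, since $|x-y|\ge x_2+|y_2|$ across the cut. This is the same form as the bound on $\widehat H-\widehat H'$ given by \DTwo. So both pieces are sandwiches $R_1 X R_2$, where $X$ has kernel decaying like $e^{-\mu(|x-y|+|x_2|+|y_2|)}$.

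\textbf{Combes--Thomas step.} Apply the Combes--Thomas estimate to $H$, $\widehat H$ and $\widehat H'$; it holds uniformly thanks to \DOne--\DTwo. It gives
\[
 \|(K-z)^{-1}_{xy}\|\le \frac{C}{|\operatorname{Im}z|}\,e^{-c|\operatorname{Im}z||x-y|}.
\]
Plugging this into the sandwich, the entry at $(x,y)$ is bounded by
\[
 C|\operatorname{Im}z|^{-2}\sum_{u,v} e^{-c\eta|x-u|}e^{-\mu(|u-v|+|u_2|+|v_2|)}e^{-c\eta|v-y|},
\]
with $\eta=|\operatorname{Im}z|$. Using $|x_2|\le|x-u|+|u_2|$ and the triangle inequality, this is at most
\[
 C\eta^{-2-4}e^{-c'\eta(|x-y|+|x_2|)} .
\]
Here the lattice sums cost $\eta^{-2}$ each in two dimensions, which is harmless.

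\textbf{Conversion to polynomial decay.} Use $e^{-c'\eta t}\le C_m(\eta t)^{-m}$, that is,
\[
 e^{-c'\eta(|x-y|+|x_2|)}\le C_m\,\eta^{-2m}\langle x-y\rangle^{-m}\langle x_2\rangle^{-m}.
\]
Then choose the order $k>2m+6$ of the almost-analytic extension, so that the $z$-integral converges. This yields the claimed bound with no $\langle x\rangle^{\nu}$ weight, and hence membership in $\widehat{\mathcal I}_2$.

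\textbf{Main obstacle.} The main work is the uniform Combes--Thomas bound together with the bookkeeping of the $\eta$-powers. The key point is that no spectral gap is needed, because we only use $|\operatorname{Im}z|>0$ and pay for it with powers absorbed by $\bar\partial\tilde f$.
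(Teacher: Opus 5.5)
Your proposal is correct and follows essentially the same route as the paper: you reduce to $f\in C_c^\infty$ using boundedness of $H,\widehat H$, apply Helffer--Sj\"ostrand, write the resolvent difference as a sandwich around a kernel confined to the boundary by \DOne--\DTwo, and then use Combes--Thomas and absorb the powers of $|\operatorname{Im}z|^{-1}$ into $\bar\partial\widetilde f$. The only difference is cosmetic: the paper uses the single identity $\widehat R(z)\iota^*-\iota^*R(z)=-\widehat R(z)E_\partial R(z)$ with $E_\partial=\widehat H\iota^*-\iota^*H$, which merges your two pieces and avoids the intermediate operator $\iota^*H\iota$.
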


\begin{proof}
We adapt the resolvent argument in the proof of \cite[Lemma~A.3(ii)]{EG}
to the present gap-independent setting.  We define the resolvents $R(z)=(H-z)^{-1}$, $
 \widehat R(z)=(\widehat H-z)^{-1},$ and the operator
 $E_\partial=\widehat H\iota^*-\iota^*H.$
For $u\in\mathbb Z\times\mathbb N_0$ and $v\in\mathbb Z^2$,
\[
 (E_\partial)_{uv}
 =
 \begin{cases}
   \widehat H_{uv}-H_{uv}, & v_2\ge0,\\[1mm]
   -H_{uv}, & v_2<0.
 \end{cases}
\]
Hence \DOne--\DTwo and $|u-v|\ge u_2+|v_2|,$ for $ v_2<0,$
give, after decreasing $\mu>0$,
\[
 \|(E_\partial)_{uv}\|
 \le
 C e^{-\mu(|u-v|+u_2+|v_2|)}.
\]

The resolvent identity is $\widehat R(z)\iota^*-\iota^*R(z)
 =
 -\widehat R(z)E_\partial R(z).$
By \DOne--\DTwo,
\[
 \|H_{xy}\|+\|\widehat H_{xy}\|
 \le C e^{-\mu|x-y|}
\]
(on their respective lattices).  Hence, for some $\alpha>0$, the weighted
hopping sums required in the Combes--Thomas estimate are finite.  Therefore
\cite[Theorem~10.5 and Proposition~10.6]{AW}, applied also to the
half-space graph, gives, for
$\eta:=|\operatorname{Im}z|\in(0,1]$ and bounded $\operatorname{Re}z$,
the separate estimates
\[
 \|\widehat R(z)_{xu}\|
 \le C\eta^{-1}e^{-c\eta|x-u|},
 \qquad
 \|R(z)_{vy}\|
 \le C\eta^{-1}e^{-c\eta|v-y|}.
\]
Therefore, for $x,y\in\mathbb Z\times\mathbb N_0$,
\[
 \begin{aligned}
 &\left\|
 \bigl(\widehat R(z)-\iota^*R(z)\iota\bigr)_{xy}
 \right\|                                                   \le
 C\eta^{-2}
 \sum_{\substack{u_2\ge0\\v\in\mathbb Z^2}}
 e^{-c\eta|x-u|}
 e^{-\mu(|u-v|+u_2+|v_2|)}
 e^{-c\eta|v-y|}.
 \end{aligned}
\]
Since by the triangle inequality $|x-u|+|u-v|+|v-y|\ge |x-y|,$
and
\[
 x_2\le |x-u|+u_2,
 \qquad
 y_2\le |y-v|+|v_2|,
\]
the convolution estimate above yields for some $M<\infty$
\[
 \left\|
 \bigl(\widehat R(z)-\iota^*R(z)\iota\bigr)_{xy}
 \right\|
 \le
 C\eta^{-M}
 e^{-c\eta(|x-y|+x_2+y_2)}.
\]

Since $H$ and $\widehat H$ are bounded, we may replace $f$ by
$f_0\in C_c^\infty(\mathbb R)$ such that $f_0=f$ on $
 \sigma(H)\cup\sigma(\widehat H).$
Thus $f_0(H)=f(H)$ and $f_0(\widehat H)=f(\widehat H).$
Let $\widetilde f_0$ be an almost-analytic extension of order $N$
\[
 |\bar\partial\widetilde f_0(E+i\eta)|
 \le C_N|\eta|^N .
\]
The Helffer--Sj\"ostrand formula gives
\[
 \begin{aligned}
 f(\widehat H)-\iota^*f(H)\iota=
 \frac1\pi
 \int_{\mathbb C}
 \bar\partial\widetilde f_0(z)
 \bigl(
 \widehat R(z)-\iota^*R(z)\iota
 \bigr)\,d\operatorname{Re}z\,d\operatorname{Im}z .
 \end{aligned}
\]
For every $m$, one has
\[
 e^{-c\eta(|x-y|+x_2+y_2)}
 \le
 C_m\eta^{-3m}
 \langle x-y\rangle^{-m}
 \langle x_2\rangle^{-m}
 \langle y_2\rangle^{-m}.
\]
If we now choose $N>M+3m$
then
\[
 \begin{aligned}
 \left\|
 \bigl(
 f(\widehat H)-\iota^*f(H)\iota
 \bigr)_{xy}
 \right\|\le
 C_{f,m}
 \langle x-y\rangle^{-m}
 \langle x_2\rangle^{-m}
 \langle y_2\rangle^{-m}\le
 C_{f,m}
 \langle x-y\rangle^{-m}
 \langle x_2\rangle^{-m}.
 \end{aligned}
\]
Hence
\[
 f(\widehat H)-\iota^*f(H)\iota
 \in\widehat{\mathcal I}_2,
\]
which proves \eqref{eq:lattice-boundary}.
\end{proof}

We next compare the physical half-space unitary with the compressed bulk
path.  The discrepancy lies in \(\overline{\mathcal I_2}\) along the path and in
\(\mathcal I_2\) at \(s=0\); compare \cite[Lemma~3.13]{ST}.

\begin{lemma}
\label[lemma]{lem:lattice-unitary-comparison}
Assume \DOne--\DFour and use the notation of the preceding
construction.  Then
\[
 D_s:=\widehat U_s-\Pi U_s\Pi
 \in\overline{\mathcal I_2},
 \text{ for } 0\le s\le1, \text{ and }
 D_0\in\mathcal I_2.
\]
\end{lemma}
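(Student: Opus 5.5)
We treat the auxiliary summand and the physical summand separately, since on $\Eh=\Hh\oplus\Kh$ the operators $G_s$ and $U_s$ act as the identity-like constant on $\Kh$. Indeed $U_s=e^{-2\pi iG_s}=e^{-2\pi i G_s^{\Hh}}\oplus 1$ with $G_s^{\Hh}=(1-s)g(H)+sL$, and similarly $\widehat U_s=e^{-2\pi i\widehat G_s^{\Hh}}\oplus1$ on $\Pi\Eh=\widehat{\mathcal H}\oplus\Pi\Kh$. Here $\widehat G_s^{\Hh}=(1-s)g(\widehat H)+s\,\iota^*L\iota$. Since $\Pi$ is diagonal, $\Pi U_s\Pi=\iota^*e^{-2\pi iG^{\Hh}_s}\iota\oplus\Pi_{\Kh}$. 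Therefore $D_s$ has vanishing auxiliary block and equals $\iota^*$-type differences on the physical half-space. It thus suffices to show that
\[
 \Delta_s:=e^{-2\pi i\widehat G^{\Hh}_s}-\iota^*e^{-2\pi iG^{\Hh}_s}\iota
\]
is confined to $x_2=0$ in the sense of $\widehat{\mathcal I}_2$ (for $s=0$) or its norm closure. Extension by zero then shows that $\widehat{\mathcal I}_2\subset\Itwo$, because physical labels have $r_x=x$.

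For $s=0$ the claim is exactly Lemma~\ref{lem:lattice-boundary} applied to $f=e^{-2\pi ig}$. This $f$ is smooth and constant (equal to $1$) outside $[a,b]$, so $D_0\in\Itwo$ immediately, with $\nu=0$.

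For $0<s\le1$ we expand the exponential in a power series, $e^{-2\pi iX}=\sum_k(-2\pi i)^kX^k/k!$, which converges in norm uniformly because $\|G_s\|,\|\widehat G_s\|\le 2$. The claim then reduces to showing $(\widehat G^{\Hh}_s)^k-\iota^*(G^{\Hh}_s)^k\iota\in\Itwo$ for each $k$; the norm closure absorbs the infinite sum. We write $\widehat G=\iota^*G\iota+Y$, where $Y=(1-s)(g(\widehat H)-\iota^*g(H)\iota)\in\widehat{\mathcal I}_2$ by Lemma~\ref{lem:lattice-boundary}. A telescoping argument then leaves two kinds of terms. Terms containing a factor $Y$ lie in the ideal, since $\iota^*G\iota$ is weakly local and $\iota^*g(H)\iota,\iota^*L\iota\in\mathcal A$ by \DThree. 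The remaining terms come from $\iota^*G\iota\,\iota^*G\iota-\iota^*G^2\iota=-\iota^*G(1-\iota\iota^*)G\iota$. These are products involving $\Lambda_2 X(1-\Lambda_2)$, i.e. commutators $[\Lambda_2,X]$ with $X\in\{g(H),L\}$. By \DThree and the cut inequality $\br{x_2}\le C\br{x-y}$ on the support, such commutators lie in $\Itwo$, exactly as for the upper-left block in Lemma~\ref{lem:lattice-ideals}. The ideal property $\Aalg\Itwo+\Itwo\Aalg\subset\Itwo$ established there closes the argument for each fixed $k$.

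The main obstacle is the bookkeeping of the weights. The weight $\nu$ grows with $k$ (by $+2$ per product), so the ideal membership for the whole series cannot be obtained with a uniform weight. This is precisely why only $\overline{\Itwo}$ is claimed for general $s$. The norm convergence of the partial sums is harmless, since each partial sum lies in $\Itwo$. We also need norm continuity, but that is automatic. A further point to check is that $L=\mathbf1_{(-\infty,a)}(H)$ is an admissible member of $\mathcal B_1(J)$. It is, since it is constant outside $J$ and $a\notin\sigma_{\rm pp}(H)$, so \DThree applies.
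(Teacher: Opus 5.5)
Your proof is correct and is essentially the paper's argument. It rests on the same two inputs: \Cref{lem:lattice-boundary} for $g(\widehat H)-\iota^*g(H)\iota$ (and, at $s=0$, directly for $f=e^{-2\pi i g}$), and \DThree together with the cut inequality for $[\Lambda_2,F_s(H)]\in\Itwo$. Your power series and telescoping of $(\Lambda_2G\Lambda_2)^k-\Lambda_2G^k\Lambda_2$ just unpack what the paper packages more compactly: it forms the decoupled generator $\widetilde G_s=\widehat G_s\oplus\Pi_-G_s\Pi_-$, notes $\widetilde G_s-G_s\in\Itwo$, and applies the quotient map $\overline{\Aalg}\to\overline{\Aalg}/\overline{\Itwo}$ to the exponentials.

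One minor point: $a\notin\sigma_{\rm pp}(H)$ plays no role in $L\in\mathcal B_1(J)$. That membership only uses that $\mathbf 1_{(-\infty,a)}$ is constant outside $J$.
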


\begin{proof}
Whenever an operator in $\widehat{\mathcal I}_2$ is used below, we extend
it by zero on the auxiliary half-space and then by zero on the complement
of $\Pi\mathscr H$.  With this convention it belongs to
$\mathcal I_2\subset\mathcal B(\mathscr H)$.

Let $\Pi_-=1-\Pi$
and extend $\widehat G_s$ from $\Pi\mathscr H$ to $\mathscr H$ by $\widetilde G_s
 :=
 \widehat G_s\oplus\Pi_-G_s\Pi_-.$
Using the definitions of $G_s$ and $\widehat G_s$ gives
\begin{equation}\label{eq:decoupled-generator}
 \begin{split}
 \widetilde G_s-G_s
 &=
 (1-s)
 \bigl(g(\widehat H)-\iota^*g(H)\iota\bigr)\oplus0
 -\Pi G_s\Pi_-
 -\Pi_-G_s\Pi .
 \end{split}
\end{equation}
The first term belongs to $\mathcal I_2$ by
\Cref{lem:lattice-boundary}.  Moreover,
\[
 \Pi G_s\Pi_-=[\Pi,G_s]\Pi_-,
 \qquad
 \Pi_-G_s\Pi=-\Pi_-[\Pi,G_s].
\]
Since $G_s=F_s(H)\oplus0,
 F_s=(1-s)g+s\mathbf1_{(-\infty,a)},$
and \(F_s\in\mathcal B_1(J)\) uniformly in \(s\), \DThree gives
\[
 \|(F_s(H))_{xy}\|
 \le C_M\langle x-y\rangle^{-M}\langle x\rangle^{\nu_J}.
\]
Moreover, $[\Pi,G_s]=[\Lambda_2,F_s(H)]\oplus0,$
and $[\Lambda_2,F_s(H)]_{xy}
 =
 \bigl(\mathbf1_{\{x_2\ge0\}}-\mathbf1_{\{y_2\ge0\}}\bigr)
 (F_s(H))_{xy}.$
If this is nonzero, then \(x_2\) and \(y_2\) lie on opposite sides of the cut, so $\langle x_2\rangle\le C\langle x-y\rangle.$
Hence, using \DThree at order \(2m\),
\[
 \|[\Lambda_2,F_s(H)]_{xy}\|
 \le
 C_m\langle x-y\rangle^{-m}
 \langle x_2\rangle^{-m}
 \langle x\rangle^{\nu_J}.
\]
Thus, we have shown that $[\Pi,G_s]\in\mathcal I_2$ and consequently $\widetilde G_s-G_s\in\mathcal I_2.$

Let $q_2:\overline{\mathcal A}
 \longrightarrow
 \overline{\mathcal A}/\overline{\mathcal I_2}$ 
be the quotient map.  Since $\widetilde G_s-G_s\in\mathcal I_2$, we have $q_2(\widetilde G_s)=q_2(G_s)$ and since $q_2$ is a $*$-homomorphism, it commutes with continuous
functional calculus. Hence
\[
 q_2(\widetilde G_s)=q_2(G_s)
 \text{ which implies }
 q_2(e^{-2\pi i\widetilde G_s})
 =q_2(e^{-2\pi iG_s}).
\]
Consequently,
\[
 D_s
 =
 \Pi\bigl(
 e^{-2\pi i\widetilde G_s}-e^{-2\pi iG_s}
 \bigr)\Pi
 \in\overline{\mathcal I_2}.
\]

At $s=0$, this expression just becomes $D_0
 =
 \bigl(
 e^{-2\pi i g(\widehat H)}
 -\iota^*e^{-2\pi i g(H)}\iota
 \bigr)\oplus0.$
Applying \Cref{lem:lattice-boundary} to $f=e^{-2\pi ig}$
gives $D_0\in\mathcal I_2.$
\end{proof}

\begin{proof}[Proof of \Cref{thm:main} in the discrete case]
The localization construction above gives the first input of
Proposition~\ref{prop:abstract}, namely the projection $B$ and the unitary
path $U_s$ with
\[
 [B,U_s]=0,
 \qquad
 U_1=1.
\]
The boundary comparison required in that proposition follows from
\Cref{lem:lattice-unitary-comparison}.

It remains to compute the Fredholm index at $s=1$.  Set
\[
 L=\mathbf1_{(-\infty,a)}(H).
\]
By \cite[Lemma~4.7, Eq.~(4.11), and Lemma~4.8]{BSS},
\begin{equation}\label{eq:lattice-endpoint}
 \left|
 \ind\!\left(
 \widehat\Lambda_1
 e^{-2\pi i\iota^*L\iota}
 \widehat\Lambda_1
 +1-\widehat\Lambda_1
 \right)
 \right|
 =
 \left|
 \ind(L\Phi L+1-L)
 \right|.
\end{equation}
By \eqref{eq:constant-index}, the right-hand side equals
$|\mathcal C|$.

Since
\[
 \widehat U_1
 =
 e^{-2\pi i\Pi(L\oplus0)\Pi},
\]
the endpoint operator in Proposition~\ref{prop:abstract} is the direct sum
of the physical operator in \eqref{eq:lattice-endpoint} and the identity
on the auxiliary half-space.  The latter has Fredholm index zero.
Therefore the endpoint index has absolute value $|\mathcal C|$.

Proposition~\ref{prop:abstract} now gives
\[
 \left|
 \ind(R\widehat U_0R+1-R)
 \right|
 =
 |\mathcal C|.
\]
Finally, Lemma~\ref{lem:spectral}, applied to
\[
 \widehat U_0=e^{-2\pi ig(\widehat H)}\oplus1,
\]
implies
\[
 m_{\rm ac}(E;\widehat H)\ge|\mathcal C|
 \qquad
 \text{for a.e. }E\in I.
\]
Since $I\Subset\Delta$ was arbitrary,
\[
 m_{\rm ac}(E;\widehat H)\ge|\mathcal C|
 \qquad
 \text{for a.e. }E\in\Delta.
\]
This proves \Cref{thm:main} under \DOne--\DFour.
\end{proof}

\subsection{Continuous operators}

Assume \COne--\CFour and let
$I=(a,b)\Subset J\Subset\Delta$ be as in \eqref{eq:nested-intervals}.
We use the cell decomposition from \eqref{eq:CLoc1}--\eqref{eq:CLoc3} and,
from the localized basis on the larger interval $J$, retain those
eigenfunctions whose eigenvalues lie in $I$.  These form the basis
$\{\psi_n\}$ of $P\mathcal H_{\mathrm c}$ used in
\Cref{sec:localized}; their centers are denoted by $x_n$.  The decay in \CTwo gives, as above,
\[
 \|\mathbf1_{B_{2R}}\psi_n\|^2\ge\frac12,
 \qquad |x_n|\le R,
\]
and \CThree gives
\[
 \frac12\#\{n:|x_n|\le R\}
 \le \operatorname{Tr}\bigl(\mathbf1_{B_{2R}}P\mathbf1_{B_{2R}}\bigr)
 \le C_J(1+R)^2.
\]
Thus
\begin{equation}\label{eq:continuum-center-count}
 \#\{n:|x_n|\le R\}\le C_J(1+R)^2.
\end{equation}

On the enlarged space, use the label set
\[
 \mathfrak L:=\mathbb Z^2\sqcup\mathcal N,
 \qquad
 r_\alpha:=
 \begin{cases}
  x,   & \alpha=x\in\mathbb Z^2,\\
  x_n, & \alpha=n\in\mathcal N.
 \end{cases}
\]
The corresponding fibers are
\[
 \mathscr H_\alpha:=
 \begin{cases}
  L^2(C_x;\mathbb C^N), & \alpha=x\in\mathbb Z^2,\\
  \mathbb C e_n,        & \alpha=n\in\mathcal N,
 \end{cases}
 \qquad
 \mathscr H=\bigoplus_{\alpha\in\mathfrak L}\mathscr H_\alpha.
\]
For \(T\in\mathcal B(\mathscr H)\), write
\[
 T_{\alpha\beta}:\mathscr H_\beta\longrightarrow\mathscr H_\alpha
\]
for its \((\alpha,\beta)\)-block.

We define
\[
 \mathcal A:=
 \left\{
 T\in\mathcal B(\mathscr H):
 \begin{array}{l}
 \exists\,\nu_T\ge0\ \text{such that for every sufficiently large }m\\[1mm]
 \exists\,C_m<\infty\ \text{with}\quad
 \|T_{\alpha\beta}\|
 \le C_m
 \langle r_\alpha-r_\beta\rangle^{-m}
 \langle r_\alpha\rangle^{\nu_T}
 \quad\forall\,\alpha,\beta\in\mathfrak L
 \end{array}
 \right\}.
\]
For \(j=1,2\), let
\[
 \mathcal I_j:=
 \left\{
 T\in\mathcal A:
 \begin{array}{l}
 T_{\alpha\beta}\in
 \mathcal S_1(\mathscr H_\beta,\mathscr H_\alpha)
 \quad\forall\,\alpha,\beta,\\[1mm]
 \forall\,m\gg1\ \exists\,C_m<\infty\ \text{such that}\\[1mm]
 \|T_{\alpha\beta}\|_1
 \le C_m
 \langle r_\alpha-r_\beta\rangle^{-m}
 \langle(r_\alpha)_j\rangle^{-m}
 \langle r_\alpha\rangle^{\nu_T}
 \quad\forall\,\alpha,\beta\in\mathfrak L
 \end{array}
 \right\}.
\] 

For a physical cell $C_x$ and an auxiliary label $n$, a mixed block is an
operator between $L^2(C_x;\mathbb C^N)$ and $\mathbb C e_n$; it has rank at
most one and its trace norm equals its Hilbert-space operator norm.  An
auxiliary--auxiliary block is scalar.  These are the norms used in the same
estimate for mixed blocks.

We use the weakly-local/confined calculus of
\cite[Definition~3.2, Lemmas~3.7--3.9, Corollary~3.11]{ST} with trace-norm
decay on physical cell blocks.  This replaces finite-dimensional lattice
blocks and makes products confined in two transverse directions trace class.

\begin{lemma}\label[lemma]{lem:continuum-ideals}
The continuum confinement ideals satisfy
\eqref{eq:ideal-assumptions}--\eqref{eq:ideal-closures}, and the localized basis
in conditions \eqref{eq:CLoc1}--\eqref{eq:CLoc3} gives $[\Lambda,W]\in\Ione$.
Moreover, if an operator has arbitrary trace-norm cell decay with one fixed
polynomial weight, then its commutator with $\Lambda_j$ belongs to
$\mathcal I_j$.  In particular,
\[
 [\Lambda_2,L]\in\Itwo,\qquad [\Pi,G_s]\in\Itwo.
\]
\end{lemma}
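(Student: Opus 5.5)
The plan is to repeat the proof of \Cref{lem:lattice-ideals} verbatim at the level of block norms, replacing the lattice bound ``each block has rank at most $N$'' by trace norms of the blocks. Since $\mathcal I_j$ is now defined through $\|T_{\alpha\beta}\|_1$, the key elementary fact is
\[
 \|S_{\alpha\gamma}T_{\gamma\beta}\|_1\le\|S_{\alpha\gamma}\|\,\|T_{\gamma\beta}\|_1 .
\]
Together with $\|X\|_1\le\sum_{\alpha,\beta}\|X_{\alpha\beta}\|_1$ for block operators, this means every estimate in the lattice proof goes through once one factor in each product carries trace-norm bounds. Mixed and auxiliary blocks have rank at most one, so their trace norm equals their operator norm, as already noted.

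First, the counting bound \eqref{eq:continuum-center-count} gives \eqref{eq:label-summability} and \eqref{eq:block-convolution} unchanged. These yield that $\Aalg$ is an algebra closed under adjoints, and that $\Aalg\mathcal I_j+\mathcal I_j\Aalg\subset\mathcal I_j$. For the latter, use the product inequality above with the confined factor in trace norm. When the confined factor sits on the right, the confinement is transferred to the output label through $\br{a_j}\le C\br{a-c}\br{c_j}$. The adjoint statement $\mathcal I_j^*=\mathcal I_j$ uses $\|T^*_{\alpha\beta}\|_1=\|T_{\beta\alpha}\|_1$ and the same transfer inequality.

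Next, for $S\in\Ione$ and $T\in\Itwo$, the bound $\br a+\br b+\br c\le CM(a,b,c)$ gives
\[
 \|S_{\alpha\gamma}T_{\gamma\beta}\|_1\le C_r\br a^{-r}\br b^{-r}\br c^{-r}.
\]
Summing over labels shows $ST\in\Sone$, and $TS\in\Sone$ follows by symmetry. For $[\Pi,S]$ with $S\in\Ione$, the nonzero blocks have $\br{a_2}\le C\br{a-b}$; here $\Pi$ is block-diagonal, because each cell lies entirely on one side of the cut and each auxiliary label is diagonal. Blockwise trace-norm summation then gives $[\Pi,S]\in\Sone$, and the same argument gives $[\Lambda,T]\in\Sone$ for $T\in\Itwo$. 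The norm-closure statements \eqref{eq:ideal-closures} follow exactly as in the lattice case.

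For the ``moreover'' claim, take $X$ with $\|\chi_xX\chi_y\|_1\le C_m\br{x-y}^{-m}\br x^{\nu}$. The blocks of $[\Lambda_j,X]$ are
\[
 \bigl(\mathbf1_{\{x_j\ge0\}}-\mathbf1_{\{y_j\ge0\}}\bigr)X_{xy},
\]
and when this factor is nonzero, $\br{x_j}\le C\br{x-y}$. Applying the decay at order $2m$ gives membership in $\mathcal I_j$. For $L$ and $G_s$, the input is the trace-norm cell decay of $f(H)$ for $f\in\mathcal B_1(J)$. Hypothesis \COne only gives the Hilbert--Schmidt norm of $\chi_x(f(H)-f_+)\chi_y$, so I would upgrade it by factorization. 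Writing $\tilde L=\mathbf1_{(-\infty,b]}(H)$ and $f(H)-f_+=(f(H)-f_+)\tilde L$ (valid when $f_+$ is the value above $J$), insert a cell partition $1=\sum_z\chi_z$:
\[
 \chi_x(f(H)-f_+)\chi_y=\sum_z\chi_x(f(H)-f_+)\chi_z\,\chi_z\tilde L\chi_y .
\]
Then bound each term by $\|\cdot\|_2\|\cdot\|_2$. The factor $\|\chi_z\tilde L\chi_y\|_2$ is controlled by \COne when it decays, and in any case by \CThree, since $\|\chi_z\tilde L\chi_y\|_2^2\le\Tr(\chi_y\tilde L\chi_y)$.

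\emph{The main obstacle} is that \CThree alone gives no decay in $z-y$. To obtain decay, I would instead write $\tilde L=\mathbf1_{(-\infty,a')}(H)+\mathbf1_{[a',b]}(H)$ with both pieces in $\mathcal B_1(J)$ after adjusting $f_+$, and use \COne for each piece. The convolution bound \eqref{eq:block-convolution} then produces a trace-norm decay estimate with weight $\nu_J+2$.

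For $[\Lambda,W]$, the upper-left block $-[\Lambda_1,P]$ is handled by the trace-norm decay just obtained, applied to $P$. The off-diagonal blocks $(\mathbf1_{\{x_1\ge0\}}-\mathbf1_{\{(x_n)_1\ge0\}})\chi_x\psi_n$ have rank one, so \CTwo together with $\br{x_1}\le C\br{x-x_n}$ gives the required bound. The same estimates without the cut factor show $W\in\Aalg$, and hence $[\Lambda,W]\in\Ione$.
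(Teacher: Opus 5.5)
Your proposal is correct and is essentially the paper's argument: you get trace-norm cell decay by writing the relevant operator as a product of two factors with Hilbert--Schmidt decay from (C1) and summing over an intermediate cell partition with the convolution bound, and then run the lattice block calculus with the trace norm on the confined factor (the paper factors $P=P^2$, $L=L^2$, $g(H)=g^{1/2}(H)^2$, while your projection factor handles every $f\in\mathcal B_1(J)$ at once).

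Two minor corrections. First, your ``main obstacle'' is not one: take the open indicator $\mathbf 1_{(-\infty,b_J)}$ at the upper endpoint $b_J$ of $J$, or work on a slightly larger $J'\Subset\Delta$. The projection then already lies in $\mathcal B_1$ with upper value $0$, so (C1) gives its decay directly and no splitting is needed. Second, the cells $C_x$ with $x_j=0$ do straddle the cut, so $\Lambda_j$ is block-diagonal but not scalar on those blocks. This is harmless, since such blocks still satisfy $\|\chi_x[\Lambda_j,X]\chi_y\|_1\le 2\|\chi_xX\chi_y\|_1$ and $\langle x_j\rangle\le C\langle x-y\rangle$, as the paper notes.
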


\begin{proof}
For $P=\mathbf1_I(H)$,
\[
 P=\sum_zP\chi_zP
 \quad\text{strongly},
 \qquad
 \chi_xP\chi_y
 =\sum_z(\chi_xP\chi_z)(\chi_zP\chi_y).
\]
Schatten H\"older and \COne therefore give
\[
 \begin{split}
 \|\chi_xP\chi_y\|_1
\le\sum_z\|\chi_xP\chi_z\|_2\|\chi_zP\chi_y\|_2
 &\le C_M\br x^{\nu_J}
       \sum_z\br{x-z}^{-M}\br{z-y}^{-M}\br z^{\nu_J}\\
       &\le C_m\br{x-y}^{-m}\br x^{2\nu_J+2}.
 \end{split}
\]
The last line is \eqref{eq:block-convolution}; absolute convergence also
justifies the trace-norm sum.  The same factorization, with
$P$ replaced by $L$ or with $g(H)=g(H)^{1/2}g(H)^{1/2}$, gives
\begin{equation}\label{eq:continuum-trace-decay}
 \|\chi_xT\chi_y\|_1\le C_m\br{x-y}^{-m}\br x^{\nu_T},
 \qquad T\in\{P,L,g(H)\},
\end{equation}
with $\nu_T$ independent of $m$.  Here $g^{1/2}\in\mathcal B_1(J)$ and
its upper constant is zero, as required in \COne.
For the mixed blocks,
\[
 \|\chi_xV|_{\mathbb Ce_n}\|_1
 =\|\chi_x\psi_n\|
 \le C_m\br{x-x_n}^{-m}\br{x_n}^{\nu_J}.
\]
Thus every block of $[\Lambda,W]$ is trace class and confined in the
first coordinate, by the block formula in the proof of
\Cref{lem:lattice-ideals}.  The same estimates without the cut factor show
$W\in\mathcal A$.

For a nonzero block of $[\Lambda_j,T]$, the cells either lie on
opposite sides of the cut or at least one meets the cut.  In either case
\[
 \chi_x[\Lambda_j,T]\chi_y
 =\chi_x\Lambda_jT\chi_y-\chi_xT\Lambda_j\chi_y,
\]
and
\[
 \br{x_j}\le C\br{x-y},\qquad
 \|\chi_x[\Lambda_j,T]\chi_y\|_1
 \le2\|\chi_xT\chi_y\|_1
 \le C_m\br{x-y}^{-m}\br{x_j}^{-m}\br x^{\nu_T},
\]
using \eqref{eq:continuum-trace-decay} at order $2m$.
This proves the asserted commutator memberships.

The summability and convolution estimates
\eqref{eq:label-summability}--\eqref{eq:block-convolution} apply to the
enlarged labels by \eqref{eq:continuum-center-count}.  They show first that
$\mathcal A$ is a $*$-algebra and, using
\[
 \|(ST)_{\alpha\beta}\|_1
 \le \sum_\gamma \|S_{\alpha\gamma}\|\,
                    \|T_{\gamma\beta}\|_1,
\]
that
\[
 \mathcal A\mathcal I_j+\mathcal I_j\mathcal A\subset\mathcal I_j,
 \qquad \mathcal I_j^*=\mathcal I_j.
\]
If $S\in\Ione$ and $T\in\Itwo$, the three-position estimate from the proof
of \Cref{lem:lattice-ideals}, now with trace norm on the confined factor,
gives for every $r>2$
\[
 \sum_{\alpha,\beta,\gamma}
 \|S_{\alpha\gamma}\|\,\|T_{\gamma\beta}\|_1
 \le C_r\left(\sum_\alpha\br{r_\alpha}^{-r}\right)^3<\infty.
\]
Thus $ST\in\Sone$; interchanging the two directions gives
$TS\in\Sone$.  If $S\in\Ione$, a nonzero block of $[\Pi,S]$ either crosses
the horizontal cut or has an output cell meeting it.  In both cases the
two-position estimate gives, for $r>2$,
\[
 \|[\Pi,S]_{\alpha\beta}\|_1
 \le C_r\br{r_\alpha}^{-r}\br{r_\beta}^{-r}.
\]
Summing the block trace norms gives $[\Pi,S]\in\Sone$; the same argument
with the coordinates interchanged gives $[\Lambda,T]\in\Sone$ for
$T\in\Itwo$.  This proves \eqref{eq:ideal-assumptions}.

Finally, let $S_n\to S$ and $T_n\to T$ in operator norm with
$S_n\in\Ione$ and $T_n\in\Itwo$.  Since $S_nT_n$ and $T_nS_n$ are trace
class and converge in operator norm to $ST$ and $TS$, respectively, the
limits are compact.  Likewise,
\[
 \|[\Pi,S_n-S]\|\le2\|S_n-S\|,
 \qquad
 \|[\Lambda,T_n-T]\|\le2\|T_n-T\|,
\]
and the compact operators are norm closed.  Hence
\eqref{eq:ideal-closures} follows.
\end{proof}

In the continuum, \CFour and \Cref{lem:continuum-ideals} give the
memberships in \eqref{eq:decoupled-generator}.  Since
$\overline{\Itwo}$ is a norm-closed two-sided ideal, the two generators have
the same image in the quotient by $\overline{\Itwo}$, and continuous
functional calculus gives
\[
 \widehat U_s-\Pi U_s\Pi\in\overline{\Itwo},\qquad 0\le s\le1.
\]
At $s=0$, \CFour applied directly to $e^{-2\pi i g}$ gives the stronger
membership
\[
 \widehat U_0-\Pi U_0\Pi\in\Itwo.
\]
It remains to prove the continuum endpoint identity.  We first record a
local trace estimate used twice below.  If $L=\mathbf1_{(-\infty,a)}(H)$ and
$K\subset\mathbb R^2$ is bounded, then \CThree gives
\begin{equation}\label{eq:compact-support-compression}
 \|\mathbf1_KL\|_2^2=\Tr(\mathbf1_KL\mathbf1_K)<\infty,
 \qquad
 \|LvL\|_1\le\|v\|_\infty\|\mathbf1_KL\|_2^2
 \quad(\operatorname{supp}v\subset K).
\end{equation}

The continuum flux estimate follows the displacement splitting of
\cite[Appendix~A, Lemma~A.1]{BSS}, with local Hilbert--Schmidt cell bounds in
place of finite-dimensional lattice blocks.

\begin{lemma}\label[lemma]{lem:flux-compact}
Let $L$ be a spectral projection satisfying the estimates of conditions \eqref{eq:CLoc1}--\eqref{eq:CLoc3}.  Suppose that a bounded multiplication operator $u$ satisfies
\begin{equation}\label{eq:phase-oscillation}
 \sup_{s\in C_x,t\in C_y}|u(s)-u(t)|
 \le C\frac{\br{x-y}}{\br{x}}.
\end{equation}
Then $[L,u]\in\mathcal S_3$, in particular $[L,u]$ is compact.
\end{lemma}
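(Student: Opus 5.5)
We estimate $\|[L,u]\|_3$ through the cell decomposition $\mathcal H_{\rm c}=\bigoplus_x L^2(C_x;\mathbb C^N)$. Since $u$ is a multiplication operator, the blocks are
\[
 \chi_x[L,u]\chi_y=\chi_xL\chi_y\,u\chi_y-u\chi_x\,\chi_xL\chi_y .
\]
Subtracting and adding the constant $u(t_0)$ for a fixed $t_0\in C_y$, we rewrite this as
\[
 \chi_x L\chi_y(u-u(t_0))\chi_y-(u-u(t_0))\chi_x L\chi_y .
\]
Condition \eqref{eq:phase-oscillation} bounds $\|(u-u(t_0))\chi_x\|_\infty$ and $\|(u-u(t_0))\chi_y\|_\infty$ by $C\br{x-y}/\br x$. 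Hence
\[
 \|\chi_x[L,u]\chi_y\|_2\le C\frac{\br{x-y}}{\br x}\,\|\chi_xL\chi_y\|_2
 \le C_m\br{x-y}^{1-m}\br x^{\nu-1},
\]
using \COne for $L=\mathbf1_{(-\infty,a)}(H)$, whose upper constant is $0$. The difficulty is the factor $\br x^{\nu}$. A single block bound only gives decay $\br x^{-1}$, which is Schatten-$3$ summable in two dimensions, but only once $\nu$ is removed. Following the displacement splitting of \cite[Lemma~A.1]{BSS}, we therefore use two regimes.

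In the far regime, $|x-y|\ge\br x^{\epsilon}$. Here the polynomial decay $\br{x-y}^{-m}$ with $m$ large beats any fixed power of $\br x$. Summing the Hilbert--Schmidt norms of these blocks gives a Hilbert--Schmidt operator, since $\sum_{x,y}\br{x-y}^{-2m+2}\br x^{2\nu}\mathbf 1_{|x-y|\ge\br x^\epsilon}<\infty$ for $m$ large. This part therefore lies in $\mathcal S_2\subset\mathcal S_3$.

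In the near regime, $|x-y|<\br x^\epsilon$, we avoid the polynomially growing constant. Instead we use the uniform local bound from \CThree: $\|\chi_xL\|_2^2=\Tr(\chi_xL\chi_x)\le C$ uniformly in $x$. Write $[L,u]=\sum_x(\chi_xLu-u\chi_xL)$ and decompose further. Near $x$, replace $u$ by $u(x)$ plus a remainder of size $\br{x-y}/\br x\lesssim\br x^{\epsilon-1}$ on the near cells. The constant part cancels in the commutator. The remainder blocks then have Hilbert--Schmidt norm $\lesssim\br x^{\epsilon-1}\|\chi_x L\|_2\le C\br x^{\epsilon-1}$, uniformly, with no $\nu$ loss.

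To control the $\mathcal S_3$ norm we need more than a block estimate, because the blocks are not orthogonal. Group the near-regime operator into finitely many families, by a coloring of the lattice and a dyadic decomposition in $|x-y|$, so that within each family the blocks act between mutually disjoint cells. Each such family is then an orthogonal direct sum, and its $\mathcal S_3$ norm is at most $\bigl(\sum_x\|\cdot\|_2^3\bigr)^{1/3}\le C\bigl(\sum_x\br x^{3(\epsilon-1)}\bigr)^{1/3}$. This is finite when $3(1-\epsilon)>2$, that is, for $\epsilon<1/3$.

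The main obstacle is exactly this near regime. Cutting at $\br x^\epsilon$ creates a number of families that grows like $\br x^{2\epsilon}$ in the dyadic shells, and this growth must be absorbed into the exponent $3(\epsilon-1)$. I would try to absorb it by combining the uniform \CThree bound with the $\br{x-y}^{-m}$ decay inside each shell, so that the shell sum converges independently of $\nu$.

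Adding the two regimes gives $[L,u]\in\mathcal S_3$, and in particular $[L,u]$ is compact.
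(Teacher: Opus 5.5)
There is a genuine gap in the near regime, which is the only place where the $\mathcal S_3$ summation matters. Two parts of your proposal are correct:
\begin{itemize}
\item Your far regime works. The Hilbert--Schmidt sum converges for any fixed $\epsilon>0$ once $m$ is large.
\item Your near-regime block bound $\|\chi_x[L,u]\chi_y\|_2\le C\langle x-y\rangle\langle x\rangle^{-1}$, obtained from (C3), is also correct.
\end{itemize}
You then assert "finitely many families", and this cannot hold. Each orthogonal family contains at most one block per row. In the near regime the row at $x$ has about $\langle x\rangle^{2\epsilon}$ nonzero blocks, which is unbounded, so the displacement $b=y-x$ ranges over all of $\mathbb Z^2$. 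Your condition $\epsilon<1/3$ only makes a single family finite; it says nothing about summing infinitely many of them.

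Your suggested remedy, using the $\langle x-y\rangle^{-m}$ decay inside each shell, cannot absorb this growth either. That decay comes with the weight $\langle x\rangle^{\nu}$ from (C1). In the near regime $\langle x\rangle\gtrsim\langle b\rangle^{1/\epsilon}$, so this bound is worse than the uniform one and gives nothing summable in $x$. As written, the proof does not establish $[L,u]\in\mathcal S_3$.

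The missing idea is to index the families by the exact displacement $b$ and apply the triangle inequality only over $b$. This is the displacement splitting the paper uses. Fix $b$ and set $T_{x,b}=\chi_x[L,u]\chi_{x+b}$ for $x\in\mathbb Z^2$. These blocks have mutually orthogonal domains and mutually orthogonal ranges, so $\|\sum_xT_{x,b}\|_3^3=\sum_x\|T_{x,b}\|_3^3$. On the near region, use only the (C3) bound $\|T_{x,b}\|_3\le C\langle b\rangle/\langle x\rangle$ and no decay. This gives
\[
 \Bigl\|\sum_{\langle x\rangle\gtrsim\langle b\rangle^{1/\epsilon}}T_{x,b}\Bigr\|_3
 \le C\Bigl(\langle b\rangle^{3}\sum_{\langle x\rangle\gtrsim\langle b\rangle^{1/\epsilon}}\langle x\rangle^{-3}\Bigr)^{1/3}
 \le C\langle b\rangle^{1-1/(3\epsilon)}.
\]
The right side is summable over $b\in\mathbb Z^2$ exactly when $\epsilon<1/9$, not $\epsilon<1/3$. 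Your far regime tolerates any $\epsilon>0$, so choosing $\epsilon<1/9$ closes the argument. This is the paper's proof: it splits at $\langle x\rangle=\langle b\rangle^{\alpha}$ with $\alpha=12$. The paper also treats the far part per displacement in $\mathcal S_3$ rather than globally in $\mathcal S_2$; either works there.
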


\begin{proof}
Write $T_{x,b}=\chi_x[L,u]\chi_{x+b}$.  Choose points
$s_x\in C_x$ and $s_{x+b}\in C_{x+b}$.  Writing $c_x=u(s_x)$ gives
\[
 \begin{split}
 T_{x,b}
 &=\chi_xL\chi_{x+b}(u-c_{x+b})
   +(c_{x+b}-c_x)\chi_xL\chi_{x+b}-(u-c_x)\chi_xL\chi_{x+b}.
 \end{split}
\]  Since $L$ is a projection,
\[
 \|\chi_xL\chi_y\|_2^2
 \le \|\chi_xL\|_2^2
 =\operatorname{Tr}(\chi_xL\chi_x),
\]
and the last quantity is uniformly bounded by \eqref{eq:CLoc3}.  Thus
$\|\chi_xL\chi_y\|_3\le\|\chi_xL\chi_y\|_2\le C$, and
\eqref{eq:phase-oscillation} gives
\[
 \|T_{x,b}\|_3\le C\frac{\br b}{\br x}.
\]
By \eqref{eq:continuum-trace-decay}, there is a fixed exponent
$\nu_L\ge0$, independent of $m$, such that for every sufficiently large
$m$,
\[
 \|T_{x,b}\|_3\le C_m\br b^{-m}\br x^{\nu_L}.
\]
Thus
\[
 \|T_{x,b}\|_3
 \le
 \min\!\left\{
 C\frac{\br b}{\br x},
 C_m\br b^{-m}\br x^{\nu_L}
 \right\}.
\]
For fixed $b$, the blocks $T_{x,b}$ have orthogonal domains and ranges.
Splitting the sum at $\br x=\br b^\alpha$ gives
\[
 \sum_{\br x\le\br b^\alpha}\|T_{x,b}\|_3^3
 \le
 C_m\br b^{-3m+\alpha(3\nu_L+2)},
\]
while
\[
 \sum_{\br x>\br b^\alpha}\|T_{x,b}\|_3^3
 \le
 C\br b^3\sum_{\br x>\br b^\alpha}\br x^{-3}
 \le C\br b^{3-\alpha}.
\]
Thus
\begin{equation}\label{eq:S3-sum}
 \sum_x\|T_{x,b}\|_3^3
 \le C_m\br b^{-3m+\alpha(3\nu_L+2)}
      +C\br b^{3-\alpha}.
\end{equation}
Choose $\alpha=12$ and $m>12\nu_L+10$.  For fixed $b$, the blocks
$T_{x,b}$ have mutually orthogonal domains and mutually orthogonal ranges,
so
\[
 \left\|\sum_xT_{x,b}\right\|_3^3
 =\sum_x\|T_{x,b}\|_3^3.
\]
Taking cube roots in \eqref{eq:S3-sum} gives
\[
 \sum_{b\in\mathbb Z^2}
 \left\|\sum_xT_{x,b}\right\|_3
 =
 \sum_{b\in\mathbb Z^2}
 \left(\sum_x\|T_{x,b}\|_3^3\right)^{1/3}
 <\infty.
\]
Since
$[L,u]=\sum_b\sum_xT_{x,b}$ in $\mathcal S_3$, this proves
$[L,u]\in\mathcal S_3$.
\end{proof}

We will also use the following immediate variant.  If $u$ is bounded and
\eqref{eq:phase-oscillation} holds whenever $|x|$ is sufficiently large,
then
\begin{equation}\label{eq:flux-compact-outside}
 [L,u]\in\Comp.
\end{equation}
Indeed, choose a smooth cutoff $\chi$ which vanishes on a sufficiently large
ball and is one outside a slightly larger ball.  After increasing the
constant in \eqref{eq:phase-oscillation}, the multiplier $\chi u$ satisfies
that estimate globally, so \Cref{lem:flux-compact} gives
$[L,\chi u]\in\mathcal S_3$.  The multiplier $(1-\chi)u$ is compactly
supported, and \eqref{eq:compact-support-compression} gives
$[L,(1-\chi)u]\in\mathcal S_2$.

For the continuum endpoint we use the fixed-aperture phase deformation from
\cite[Lemmas~4.7--4.8]{BSS}; compactness is supplied by the preceding flux
estimate and the local trace bound.

\begin{lemma}\label[lemma]{lem:continuum-endpoint}
Let \(L=\mathbf1_{(-\infty,a)}(H)\) satisfy the cell-decay estimate obtained
from \COne and the local trace bound \CThree, and suppose
\(L\Phi L+1-L\) is Fredholm. Then
\begin{equation}\label{eq:continuum-endpoint}
 \left|\ind\!\left(
 \widehat\Lambda_1e^{-2\pi i\iota^*L\iota}\widehat\Lambda_1
 +1-\widehat\Lambda_1\right)\right|
 =
 \left|\ind(L\Phi L+1-L)\right|.
\end{equation}
\end{lemma}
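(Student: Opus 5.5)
The plan follows the fixed-aperture phase deformation of \cite[Lemmas~4.7--4.8]{BSS}. Both sides of \eqref{eq:continuum-endpoint} are indices of Fredholm operators. I will connect them through a chain of norm-continuous Fredholm homotopies and compact perturbations, each justified by the cell estimates \eqref{eq:continuum-trace-decay}, the local trace bound \eqref{eq:compact-support-compression}, and \Cref{lem:flux-compact} with its variant \eqref{eq:flux-compact-outside}. Only absolute values are claimed, so sign conventions along the chain need not be tracked.

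\emph{Left side is Fredholm.} Write $\Pi=\Lambda_2$ and $X=\Pi L\Pi$, so that $\iota^*e^{-2\pi i\iota^*L\iota}\iota$ agrees with $e^{-2\pi iX}$ on $\Pi\Hh$. Since $X-X^2=\Pi L(1-\Pi)L\Pi$, and $e^{-2\pi it}-1$ vanishes at $t=0,1$, we can write $e^{-2\pi iX}-1=(X-X^2)h(X)$ with $h$ continuous. The factor $X-X^2$ is confined near $x_2=0$ by the cell decay of $L$, exactly as for $[\Pi,G_s]$ in \Cref{lem:continuum-ideals}. Its commutator with $\Lambda_1$ is therefore confined to both cuts, and so is compact by the argument behind \eqref{eq:ideal-closures}. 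Hence $[\widehat\Lambda_1,e^{-2\pi i\iota^*L\iota}]$ is compact and the left side is Fredholm.

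\emph{Reduction to a single phase on the full plane.} First replace the sharp cuts $\Lambda_1,\Lambda_2$ by smooth switches $\lambda_1(x_1)$ and $\lambda_2(x_2)$. This changes the operators only by pieces localized at the cuts, which are compact after compression with $L$ by \eqref{eq:compact-support-compression} and the confinement estimates. Next use the standard exchange identity for the relative index: the index of the compression of $e^{-2\pi i\Pi L\Pi}$ to $\Lambda_1$ agrees, up to sign, with the index of the compression to $L$ of the multiplier
\[
 u=\exp\bigl(2\pi i\,\lambda_1(x_1)\lambda_2(x_2)\bigr)
\]
in a suitably smoothed angular form. Concretely, I would realize the exchange by a homotopy $t\mapsto$ (compression to $\Lambda_1$ of $e^{-2\pi i\Pi L\Pi}$) $\rightsquigarrow$ (compression to $L$ of a phase built from $\Lambda_1\Pi$). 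I would verify along the way that all intermediate operators have compact defects, using the two-cut confinement products, which are compact.

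\emph{Homotopy of the phase to $\Phi$.} The resulting multiplier $u$ is, outside a large ball, of the form $e^{i\vartheta(\arg x)}$. Here $\vartheta$ increases by $2\pi$ across a fixed aperture (the quadrant corner) and is constant elsewhere. Any such angular phase satisfies \eqref{eq:phase-oscillation} for large $|x|$, so $[L,u]$ is compact by \eqref{eq:flux-compact-outside}. I will then deform the aperture continuously until $\vartheta=\arg x$, that is, until $u=\Phi$ off a ball. During this deformation the constant in \eqref{eq:phase-oscillation} stays uniform, so the compressions $LuL+1-L$ remain a norm-continuous Fredholm family. Finally, changing $u$ on a bounded set is a compact perturbation by \eqref{eq:compact-support-compression}, so the index equals $\ind(L\Phi L+1-L)$.

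\emph{Main obstacle.} I expect the hardest step to be the exchange step. In the lattice case \cite[Eq.~(4.11)]{BSS} it is done through finite-rank blocks and trace formulas. In the continuum I would instead express both indices as $-\Tr$ of trace-class differences, using the $\mathcal S_1$ estimates from two-cut confinement in \Cref{lem:continuum-ideals}. I would then show that the two traces agree by cyclicity, with the cyclic rearrangements justified blockwise by the trace-norm cell decay. If the trace-class property fails at some stage, the fallback is to replace the trace formula by $\mathcal S_3$ bounds from \Cref{lem:flux-compact} and to use a Fredholm homotopy instead.
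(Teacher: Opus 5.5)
\textbf{There is a genuine gap, at exactly the step you flag as the main obstacle, and that step carries the whole lemma.}

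The trace-formula route is not available under these hypotheses. On the bulk side, \Cref{lem:flux-compact} only gives $[L,\Phi]\in\mathcal S_3$. In two dimensions this commutator is in general not even Hilbert--Schmidt, since the cell sum $\sum_x\langle x\rangle^{-2}$ diverges. So $\ind(L\Phi L+1-L)=-\Tr(\Phi^*L\Phi-L)$ is meaningless here. On the edge side, $e^{-2\pi i\Pi L\Pi}-1$ lies only in the norm closure $\overline{\mathcal I_2}$, because the polynomial weights of $X^k$ grow with $k$. Hence $[\Lambda_1,e^{-2\pi i\Pi L\Pi}]$ is only known to be compact.

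The intermediate multiplier is also wrong as stated. With both switches smoothed, $u=\exp(2\pi i\lambda_1\lambda_2)$ winds $+1$ across the positive $x_1$-axis, where $\lambda_1=1$ and $\lambda_2$ rises. It winds $-1$ across the positive $x_2$-axis, where $\lambda_2=1$ and $\lambda_1$ falls as you move counterclockwise. Its winding at infinity is therefore zero.
\begin{itemize}
\item If the transitions are strips of fixed width, \eqref{eq:phase-oscillation} fails along them, and $[L,u]$ need not be compact.
\item If they are smoothed angularly, then $u=e^{i\vartheta}$ with $\vartheta$ bounded, and the path $e^{it\vartheta}$ deforms $\mathcal T(u)=LuL+1-L$ to $1$, giving index $0$.
\end{itemize}
So the winding-one phase ``increasing by $2\pi$ across the quadrant'' that you then deform to $\Phi$ is not a smoothing of $\lambda_1\lambda_2$.

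Your smoothing of the sharp cuts has a separate problem. It changes the operators by pieces supported along entire cut lines, and these are not compact. \eqref{eq:compact-support-compression} applies only to bounded sets, so that step needs a homotopy, not a compact-perturbation argument. Your Fredholmness argument for the left side is fine.

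\textbf{The missing idea is to keep the vertical cut sharp and outside the phase, and to use only compactness.} The paper proceeds as follows.
\begin{enumerate}
\item Move the vertex of $\Phi$ to $(d,0)$ and deform it to a phase $u_R$ winding once inside a narrow right cone. Then $(1-\Lambda)u_R=1-\Lambda$, so $\mathcal T(u_R)$ is a compact perturbation of $\Lambda\mathcal T(u_R)\Lambda+1-\Lambda$.
\item Multiply by a left-cone phase $v_L$ of opposite winding. This factor is invisible after $\Lambda$-compression. Now $v_Lu_R=e^{i\xi}$ with a bounded smooth lift $\xi$ that equals $2\pi\Lambda_2$ off the two cones and a ball.
\item Since $[L,\xi]$ is compact, $e^{iL\xi L}-\mathcal T(e^{i\xi})$ is compact.
\item Run the path $e^{i(X+tY)}$ with $X=L\xi L$ and $Y=L(2\pi\Lambda_2-\xi)L$. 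It keeps $[\Lambda,\cdot\,]$ compact because $2\pi\Lambda_2-\xi$ is supported in a double cone around the horizontal axis. There the cell decay makes $[\Lambda,L](2\pi\Lambda_2-\xi)$ trace class.
\item This reaches $\Lambda e^{2\pi iL\Lambda_2L}\Lambda+1-\Lambda$. Interpolating $L\Lambda_2L$ to $\Lambda_2L\Lambda_2$ in the quotient by $\overline{\mathcal I_2}$ then gives the half-space operator.
\end{enumerate}
Every step is a compact perturbation or a norm-continuous Fredholm path, which is why the $\mathcal S_3$ bound is enough.
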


\begin{proof}
Write \(\Lambda=\Lambda_1\) and, for a multiplication operator \(u\), set
\[
 \mathcal T(u):=LuL+1-L.
\]
By \eqref{eq:compact-support-compression}, \(LvL\in\Comp\) whenever
\(v\) is bounded and \(v(x)\to0\) as \(|x|\to\infty\), since
\[
 LvL=L\mathbf1_{B_R}vL+L\mathbf1_{B_R^c}vL,\qquad
 \|L\mathbf1_{B_R^c}vL\|
 \le\|\mathbf1_{B_R^c}v\|_\infty\longrightarrow0.
\]

\medskip
\noindent\emph{Step 1: moving the flux into two cones.}
Choose \(d>4\), \(0<\varepsilon<\pi/4\), and a smooth increasing
\(F:\mathbb R\to[0,1]\) with \(F=0\) on \((-\infty,-1]\) and
\(F=1\) on \([1,\infty)\). For \(x\ne(d,0)\), put
\[
 \begin{aligned}
 \theta_R(x)&=\arg(x_1-d+ix_2)\in[-\pi,\pi),
 &\qquad \varphi(\theta)&=2\pi F(\theta/\varepsilon),\\
 u_R(x)&=e^{i\varphi(\theta_R(x))},
 &\Phi_d(x)&=e^{i\theta_R(x)}.
 \end{aligned}
\]
The values at \((d,0)\) are irrelevant. Since
\[
 \Phi_d-\Phi\longrightarrow0\quad(|x|\to\infty),\qquad
 \mathcal T(\Phi_d)-\mathcal T(\Phi)=L(\Phi_d-\Phi)L\in\Comp,
\]
we have \(\ind\mathcal T(\Phi_d)=\ind\mathcal T(\Phi)\).

For \(0\le t\le1\), define
\[
 u_t(x)=\exp\!\bigl(i[(1-t)\theta_R(x)+t\varphi(\theta_R(x))]\bigr).
\]
The two boundary values across the argument cut differ by
\[
 (1-t)\pi+2\pi t-\bigl(-(1-t)\pi\bigr)=2\pi,
\]
so \(u_t\) is single-valued. Moreover,
\[
 \|u_t-u_s\|_\infty\le3\pi|t-s|,\qquad
 |\nabla u_t(x)|\le C|x-(d,0)|^{-1}.
\]
Thus \eqref{eq:phase-oscillation} holds outside a fixed bounded set and
\([L,u_t]\in\Comp\) by \eqref{eq:flux-compact-outside}. Hence
\(\mathcal T(u_t)\) is a norm-continuous Fredholm path and
\begin{equation}\label{eq:right-flux-index}
 \ind\mathcal T(\Phi)=\ind\mathcal T(u_R).
\end{equation}

Since
\[
 \operatorname{supp}(u_R-1)\subset
 C_R:=\{x_1\ge d:\ |x_2|\le\tan(\varepsilon)(x_1-d)\},
\]
we have \((1-\Lambda)u_R=1-\Lambda\), and therefore
\[
 (1-\Lambda)(\mathcal T(u_R)-1)=(1-\Lambda)[L,u_R]L\in\Comp,\qquad
 (\mathcal T(u_R)-1)(1-\Lambda)=L[u_R,L](1-\Lambda)\in\Comp.
\]
Thus
\[
 \mathcal T(u_R)-\bigl(\Lambda\mathcal T(u_R)\Lambda+1-\Lambda\bigr)
 \in\Comp,
\]
and
\[
 \ind\mathcal T(u_R)
 =\ind\bigl(\Lambda\mathcal T(u_R)\Lambda+1-\Lambda\bigr).
\]

For the left cone define
\[
 \theta_L(x)=\arg(-(x_1+d)+ix_2),\qquad
 v_L(x)=e^{i\varphi(\theta_L(x))}.
\]
Then
\[
 \operatorname{supp}(v_L-1)\subset
 C_L:=\{x_1\le-d:\ |x_2|\le\tan(\varepsilon)(-x_1-d)\},
 \qquad \Lambda(v_L-1)=0,\qquad [L,v_L]\in\Comp.
\]
Hence
\[
 \Lambda\bigl(\mathcal T(v_Lu_R)-\mathcal T(u_R)\bigr)\Lambda
 =\Lambda[L,v_L]u_RL\Lambda\in\Comp,
\]
and \eqref{eq:right-flux-index} gives
\[
 \ind\mathcal T(\Phi)
 =\ind\bigl(\Lambda\mathcal T(v_Lu_R)\Lambda+1-\Lambda\bigr).
\]

\medskip
\noindent\emph{Step 2: from the flux phase to \(L\Lambda_2L\).}
A bounded real lift of \(v_Lu_R\), away from the two vertices, is
\[
 \xi_0(x)=
 \begin{cases}
  2\pi F\!\left(\varepsilon^{-1}
       \arctan\dfrac{x_2}{x_1-d}\right),&x_1>d,\\[2pt]
  2\pi\mathbf1_{\{x_2\ge0\}},&|x_1|\le d,\\[2pt]
  2\pi F\!\left(\varepsilon^{-1}
       \arctan\dfrac{x_2}{-x_1-d}\right),&x_1<-d.
 \end{cases}
\]
Thus \(e^{i\xi_0}=v_Lu_R\). Choose
\[
 \chi\in C^\infty(\mathbb R^2),\qquad
 \chi=0\text{ on }B_{3d},\qquad \chi=1\text{ on }B_{4d}^c,
 \qquad \xi:=\chi\xi_0.
\]
The discontinuity of \(\xi_0\) on
\(\{|x_1|\le d,x_2=0\}\) lies inside \(B_{3d}\).  At the seams
\(x_1=\pm d\) outside \(B_{3d}\), one has \(|x_2|>0\); because
\(\varepsilon<\pi/4\), the outer formulas are already constant there,
equal to \(2\pi\) for \(x_2>0\) and to \(0\) for \(x_2<0\), and hence agree
smoothly with the middle branch.  Thus \(\xi_0\) is smooth wherever
\(\chi\ne0\), and
\[
 \xi\in C^\infty(\mathbb R^2;\mathbb R),\qquad
 0\le\xi\le2\pi,\qquad
 |\nabla\xi(x)|\le C\langle x\rangle^{-1},\qquad
 \operatorname{supp}(e^{i\xi}-v_Lu_R)\subset B_{4d}.
\]
By \eqref{eq:compact-support-compression},
\[
 L(e^{i\xi}-v_Lu_R)L\in\Sone\subset\Comp,
\]
and hence
\[
 \ind\mathcal T(\Phi)
 =\ind\bigl(\Lambda\mathcal T(e^{i\xi})\Lambda+1-\Lambda\bigr).
\]

Let
\[
 q:\mathcal B(\Hh)\to\mathcal B(\Hh)/\Comp(\Hh),\qquad X:=L\xi L.
\]
Since \([L,\xi]\in\Comp\), \(q(L)\) commutes with \(q(\xi)\), and therefore
\[
 q(e^{iX})
 =q(L)e^{iq(\xi)}q(L)+1-q(L)
 =q(\mathcal T(e^{i\xi})).
\]
Thus
\[
 e^{iX}-\mathcal T(e^{i\xi})\in\Comp,
\]
and
\begin{equation}\label{eq:double-flux-index}
 \ind(L\Phi L+1-L)
 =\ind\bigl(\Lambda e^{iX}\Lambda+1-\Lambda\bigr).
\end{equation}

Set
\[
 \eta:=2\pi\Lambda_2-\xi,\qquad Y:=L\eta L.
\]
Since
\[
 \operatorname{supp}\eta\subset
 \{x:\ |x_2|\le c(1+|x_1|)\},
\]
a nonzero cell block of \([\Lambda,L]\eta\) satisfies
\[
 \langle x\rangle+\langle y\rangle\le C\langle x-y\rangle.
\]
Hence, for \(r>2\),
\[
 \langle x-y\rangle^{-2r}
 \le C_r\langle x\rangle^{-r}\langle y\rangle^{-r}.
\]
Using \eqref{eq:continuum-trace-decay} at order \(m>\nu_L+2r\),
\[
 \|\chi_x[\Lambda,L]\eta\chi_y\|_1
 \le C_m\langle x-y\rangle^{-m}\langle x\rangle^{\nu_L}
 \le C_r\langle x\rangle^{-r}\langle y\rangle^{-r}.
\]
Thus
\[
 [\Lambda,L]\eta,\ \eta[\Lambda,L]\in\Sone,\qquad
 [\Lambda,Y]=[\Lambda,L]\eta L+L\eta[\Lambda,L]\in\Sone.
\]
To obtain the corresponding assertion for \(\mathcal T(e^{i\xi})\), set
\(M=e^{i\xi}-1\).  Outside \(B_{4d}\), the support of \(M\) is contained
in \(C_L\cup C_R\).  If a cell block of \([\Lambda,L]M\) or
\(M[\Lambda,L]\) is nonzero outside a fixed ball, one cell lies in one of
these cones and the two cells lie on opposite sides of the vertical cut.
Consequently,
\[
 \langle x\rangle+\langle y\rangle
 \le C\langle x-y\rangle.
\]
The compact part of the support is summable directly from
\eqref{eq:continuum-trace-decay}, while on the cones the preceding geometric
bound gives, for any \(r>2\) after taking the decay order sufficiently large,
\[
 \|\chi_x[\Lambda,L]M\chi_y\|_1
 +\|\chi_xM[\Lambda,L]\chi_y\|_1
 \le C_r\langle x\rangle^{-r}\langle y\rangle^{-r}.
\]
Hence \([\Lambda,L]M,M[\Lambda,L]\in\Sone\).  Since
\(\mathcal T(e^{i\xi})=1+LML\) and \(M\) commutes with \(\Lambda\),
\[
 [\Lambda,\mathcal T(e^{i\xi})]
 =[\Lambda,L]ML+LM[\Lambda,L]\in\Sone.
\]
Together with \(e^{iX}-\mathcal T(e^{i\xi})\in\Comp\), this also gives
\[
 [\Lambda,e^{iX}]\in\Comp.
\]
Since \(\xi\eta=\eta\xi\),
\[
 [X,Y]=-[L,\xi](1-L)\eta L-L\eta[L,\xi]L\in\Comp,
\]
so \([q(X),q(Y)]=0\).

For \(0\le t\le1\), set \(V_t=e^{i(X+tY)}\). Then
\[
 q(V_t)=q(e^{iX})e^{itq(Y)},\qquad
 [q(\Lambda),q(V_t)]=0,\qquad
 \|V_t-V_s\|\le|t-s|\,\|Y\|.
\]
Hence \(\Lambda V_t\Lambda+1-\Lambda\) is a norm-continuous Fredholm path.
Since \(X+Y=2\pi L\Lambda_2L\), \eqref{eq:double-flux-index} gives
\[
 \ind(L\Phi L+1-L)
 =\ind\bigl(\Lambda e^{2\pi iL\Lambda_2L}\Lambda+1-\Lambda\bigr).
\]
Taking adjoints,
\[
 \left|\ind(L\Phi L+1-L)\right|
 =
 \left|\ind\bigl(
 \Lambda e^{-2\pi iL\Lambda_2L}\Lambda+1-\Lambda\bigr)\right|.
\]

\medskip
\noindent\emph{Step 3: compression to the half-space.}
Set
\[
 P_2=\Lambda_2,\qquad A_0=LP_2L,\qquad
 A_1=P_2LP_2,\qquad A_t=(1-t)A_0+tA_1.
\]
Since \([L,P_2]\in\Itwo\),
\[
 \begin{aligned}
 A_0-A_1
 &=LP_2L-P_2LP_2
  =[L,P_2]L-P_2[L,P_2](1-P_2)\in\Itwo,\\
 A_0^2-A_0
 &=-LP_2(1-L)P_2L
  =-L[L,P_2](1-L)P_2L\in\Itwo,\\
 A_1^2-A_1
 &=-P_2L(1-P_2)LP_2
  =P_2[L,P_2](1-P_2)LP_2\in\Itwo.
 \end{aligned}
\]
Consequently,
\[
 A_t^2-A_t
 =(1-t)(A_0^2-A_0)+t(A_1^2-A_1)
 -t(1-t)(A_1-A_0)^2\in\Itwo.
\]

Let
\[
 q_2:\overline{\Aalg}\to\overline{\Aalg}/\overline{\Itwo}
\]
be the quotient map. Then
\[
 q_2(A_t)=q_2(A_0)=q_2(A_1),\qquad q_2(A_t)^2=q_2(A_t),
\]
and hence
\[
 q_2(e^{-2\pi iA_t})=1,\qquad
 e^{-2\pi iA_t}-1\in\overline{\Itwo},\qquad
 [\Lambda,e^{-2\pi iA_t}]\in\Comp.
\]
Thus \(\Lambda e^{-2\pi iA_t}\Lambda+1-\Lambda\) is a norm-continuous
Fredholm path.

Finally, under
\(\Hh=P_2\Hh\oplus(1-P_2)\Hh\),
\[
 A_1\simeq(\iota^*L\iota)\oplus0,\qquad
 e^{-2\pi iA_1}\simeq e^{-2\pi i\iota^*L\iota}\oplus1.
\]
Since \(\Lambda|_{P_2\Hh}=\widehat\Lambda_1\),
\[
 \Lambda e^{-2\pi iA_1}\Lambda+1-\Lambda
 \simeq
 \bigl(
 \widehat\Lambda_1e^{-2\pi i\iota^*L\iota}\widehat\Lambda_1
 +1-\widehat\Lambda_1
 \bigr)\oplus1.
\]
The second summand has index zero, which proves
\eqref{eq:continuum-endpoint}.
\end{proof}

Together with \Cref{lem:continuum-ideals} and \CFour, the preceding lemma verifies
\eqref{eq:three-inputs}--\eqref{eq:endpoint-input} in the continuum case.
The abstract argument in \Cref{sec:localized} therefore proves
\Cref{thm:main} under \COne--\CFour.

\subsection{Random operators}\label{sec:random}

Let $(\Omega_{\rm prob},\mathscr F,\mathbb P)$ be the underlying
probability space and write $\omega\in\Omega_{\rm prob}$ for a random
configuration. For a random self-adjoint operator $H_\omega$, set
\[
 P_E(\omega)=\mathbf1_{(-\infty,E)}(H_\omega).
\]

For discrete random operators we use the following standard localization
input. Let $(H_\omega,\widehat H_\omega)$ be a measurable family of full-
and half-space operators such that \DOne--\DTwo hold uniformly in
$\omega$, and assume that for every $J\Subset\Delta$,
\begin{equation}\label{eq:random-discrete-localization}
 \mathbb E\!\left[
 \sup_{f\in\mathcal B_1(J)}\|f(H_\omega)_{xy}\|
 \right]
 \le C_Je^{-\mu_J|x-y|}.
\end{equation}

\begin{corollary}\label[corollary]{cor:discrete-random}
Assume, in addition, that \DFour holds on an event of probability one
and that, on one event of probability one,
\[
 \operatorname{ind}(P_E(\omega)\Phi P_E(\omega)+1-P_E(\omega))
 =\mathcal C\ne0,
 \qquad E\in\Delta.
\]
Then, on an event of probability one,
\[
 m_{\rm ac}(E;\widehat H_\omega)\ge|\mathcal C|
 \quad\text{for a.e. }E\in\Delta.
\]
\end{corollary}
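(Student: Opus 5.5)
The plan is to reduce \Cref{cor:discrete-random} to the deterministic discrete case of \Cref{thm:main}. For almost every $\omega$ we check \DOne--\DFour for $(H_\omega,\widehat H_\omega)$. \DOne and \DTwo hold uniformly in $\omega$ by assumption. \DFour holds on a full-measure event by hypothesis. The index assumption holds on another full-measure event. So the only real task is to pass from the averaged bound \eqref{eq:random-discrete-localization} to the deterministic localization estimate \DThree, with $\omega$-dependent constants, on a single event of probability one that works simultaneously for all $J\Subset\Delta$.

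Fix $J\Subset\Delta$ and set
\[
 X_J(\omega):=\sum_{x,y\in\mathbb Z^2}\langle x\rangle^{-3}\langle y\rangle^{-3}e^{\mu_J|x-y|/2}\sup_{f\in\mathcal B_1(J)}\|f(H_\omega)_{xy}\|.
\]
The supremum is measurable after restricting to a countable family dense in the relevant sense, or one may simply use an outer expectation. By Tonelli and \eqref{eq:random-discrete-localization},
\[
 \mathbb E X_J\le C_J\sum_{x,y}\langle x\rangle^{-3}\langle y\rangle^{-3}e^{-\mu_J|x-y|/2}<\infty,
\]
so $X_J<\infty$ almost surely. On that event,
\[
 \sup_f\|f(H_\omega)_{xy}\|\le X_J(\omega)\langle x\rangle^{3}\langle y\rangle^{3}e^{-\mu_J|x-y|/2}.
\]
Using $\langle y\rangle^3\le C\langle x\rangle^3\langle x-y\rangle^3$, this becomes
\[
 \sup_f\|f(H_\omega)_{xy}\|\le C_{\omega,J,m}\langle x-y\rangle^{-m}\langle x\rangle^{6}
\]
for every $m$. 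This is \DThree with $\nu_J=6$, which is independent of $m$ as the theorem requires.

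To make the event independent of $J$, choose an exhausting sequence $J_k\Subset\Delta$. Any $J\Subset\Delta$ lies in some $J_k$, and $\mathcal B_1(J)\subset\mathcal B_1(J_k)$ because a function constant outside $J$ is constant outside $J_k$. The countable intersection of the events $\{X_{J_k}<\infty\}$ with the \DFour and index events is again of full measure. On it, \Cref{thm:main} in the discrete case applies and gives $m_{\rm ac}(E;\widehat H_\omega)\ge|\mathcal C|$ for almost every $E\in\Delta$.

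The main obstacle is measurability of $\omega\mapsto\sup_{f\in\mathcal B_1(J)}\|f(H_\omega)_{xy}\|$, since the supremum runs over an uncountable class. I would handle it as follows:
\begin{itemize}
\item Interpret the hypothesis \eqref{eq:random-discrete-localization} as already including this measurability, as is standard in eigenfunction-correlator bounds, where the supremum is controlled through spectral measures.
\item Alternatively, dominate the supremum by the measurable eigenfunction correlator $Q_J(x,y)$ and work with that quantity instead.
\end{itemize}
The remaining steps are routine.
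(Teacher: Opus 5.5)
Your proposal is correct and follows essentially the same route as the paper: a Tonelli bound on a weighted sum of $\sup_{f\in\mathcal B_1(J)}\|f(H_\omega)_{xy}\|$ yields \DThree almost surely, and a countable exhaustion of $\Delta$ (the paper uses rational intervals, exploiting $\mathcal B_1(J)\subset\mathcal B_1(J')$) puts everything on one probability-one event before invoking \Cref{thm:main}. The differences are cosmetic: your exponential weight $e^{\mu_J|x-y|/2}$ covers all decay orders $m$ with a single random variable, whereas the paper intersects the events $\{Z_{J,m}<\infty\}$ over $m\in\mathbb N$, and the measurability of the supremum that you flag is also left implicit in the paper.
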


For the continuum application, consider the Anderson--Landau operator
\begin{equation}\label{eq:AL}
 H_{B,\lambda,\omega}=(-i\nabla-A_B)^2
 +\lambda\sum_{j\in\mathbb Z^2}\omega_j u(\,\cdot-j\,),
 \qquad A_B(x)=\frac B2(-x_2,x_1),
\end{equation}
where $B>0$, $\lambda\in\mathbb R$,
$u\in C_c^\infty(\mathbb R^2)$ is nonnegative and positive on a nonempty
square, and the $\omega_j$ are i.i.d. with a compactly supported law having
a bounded density. Let $\widehat H_{B,\lambda,\omega}$ be the Dirichlet
restriction to $\mathbb R\times\mathbb R_+$.

\begin{corollary}\label[corollary]{cor:AL}
Assume that an open interval $I_0\supset\overline\Delta$ lies in the
region of complete localization for $H_{B,\lambda,\omega}$ and that
\[
 \operatorname{ind}(P_E(\omega)\Phi P_E(\omega)+1-P_E(\omega))
 =\mathcal C\ne0,
 \qquad E\in\Delta,
\]
on one event of probability one. Then almost surely
\[
 m_{\rm ac}(E;\widehat H_{B,\lambda,\omega})\ge|\mathcal C|
 \quad\text{for a.e. }E\in\Delta.
\]
\end{corollary}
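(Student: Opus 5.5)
The statement to prove is \Cref{cor:AL}. The plan is to reduce it to \Cref{thm:main} under the continuum hypotheses \COne--\CFour, applied pathwise on a single event of full probability. The index hypothesis is already assumed on one full-probability event $\Omega_0$, so the work is to find a full-probability event $\Omega_1$ on which \COne--\CFour hold for $H_{B,\lambda,\omega}$ and its Dirichlet restriction. On $\Omega_0\cap\Omega_1$, \Cref{thm:main} then gives $m_{\rm ac}(E;\widehat H_{B,\lambda,\omega})\ge|\mathcal C|$ for a.e.\ $E\in\Delta$.

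Since $I_0\supset\overline\Delta$ lies in the region of complete localization, I would verify \COne and \CTwo through the characterization of Germinet--Klein \cite[Theorem~3 and Corollary~3]{GKchar}. Complete localization on $I_0$ gives decay of expected Hilbert--Schmidt kernel correlators, $\mathbb E\sup_{f\in\mathcal B_1(J)}\|\chi_x(f(H_\omega)-f_+)\chi_y\|_2^2\le C e^{-|x-y|^\zeta}$ for $J\Subset I_0$. Multiplying by $\langle x\rangle^{-2\nu}\langle y\rangle^{-2\nu}$ with $\nu$ large and summing over $x,y$ gives a finite expectation, so almost surely the weighted supremum is finite. This yields \COne with a deterministic exponent $\nu_J$ and random constant $C_{J,m}(\omega)$. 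SULE, giving \CTwo, is obtained the same way, with centers chosen as maximizers of $\|\chi_x\psi_n\|$.

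Since the law of the $\omega_j$ is compactly supported, $H_\omega\ge -C$ uniformly. The local trace bound \CThree then follows deterministically from $\chi_x\mathbf1_{(-\infty,E_0]}(H)\chi_x\le\chi_x e^{t(E_0-H)}\chi_x$ together with diamagnetic heat-kernel bounds, uniformly in $x$. Countability is handled by restricting $J$ to intervals with rational endpoints inside $I_0$; this still covers every $J\Subset\Delta$ and keeps the intersection of events countable.

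I expect the main obstacle to be \CFour, the trace-class comparison of the smooth functional calculi for the Dirichlet half-plane operator and the bulk operator. This is the deferred Lemma~\ref{lem:dirichlet-boundary}. I would follow the proof of \Cref{lem:lattice-boundary}: a geometric resolvent identity with a smooth cutoff $\vartheta$ equal to $1$ for $x_2\ge1$ and $0$ near $x_2\le 1/2$, namely
\[
\iota^*R(z)\iota\,\vartheta-\vartheta\widehat R(z)=\iota^*R(z)[\,(-i\nabla-A_B)^2,\vartheta]\widehat R(z),
\]
with commutator supported in the strip $0\le x_2\le 1$. Combes--Thomas estimates for both operators, which hold deterministically with magnetic potential, give $\eta^{-M}e^{-c\eta(|x-y|+x_2+y_2)}$ decay of cell blocks. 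Trace-class blocks come from writing $(\widehat H+C)^{-k}$ factors, which lie in $\mathcal S_1$ locally in two dimensions for $k\ge2$, and inserting these through $f=(f\cdot(\cdot+C)^{k})(\cdot+C)^{-k}$. Helffer--Sjöstrand then gives \CFour with polynomial rates of any order and $\nu_f=0$. The part near $x_2\le1/2$ is handled by the same estimate, since there both kernels are confined to $\langle x_2\rangle\lesssim1$. Measurability of the random events and the passage from expectation bounds to almost-sure bounds are routine by Borel--Cantelli.
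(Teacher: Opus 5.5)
Your reduction is the same as the paper's: verify \COne--\CFour on a single full-probability event (rational subintervals of $I_0$, Germinet--Klein for \COne--\CTwo, deterministic arguments for \CThree--\CFour) and then apply \Cref{thm:main} pathwise. The routes differ in the two deterministic inputs.

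For \CThree, the paper uses $\mathbf 1_{(-\infty,b_0]}(H)\le(b_0+\kappa)^2(H+\kappa)^{-2}$ together with the Hilbert--Schmidt cell bounds for $(H+\kappa)^{-1}$ from \Cref{lem:local-schatten}. Your heat-kernel bound with the diamagnetic inequality is an equally good substitute.

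For \CFour, the paper first proves trace-ideal Combes--Thomas bounds: $\chi_xR(z)\chi_y\in\mathcal S_p$ for $p>1$ and $\chi_xD_BR(z)\chi_y\in\mathcal S_q$ for $q>2$, via a local gauge, elliptic regularity and singular values of Sobolev embeddings. Schatten--H\"older with exponents $3/2$ and $3$ then puts the geometric-resolvent remainder at complex $z$ directly into trace norm for cells off the boundary strip. The paper uses the factorization $G=(\cdot+\kappa)^{-2}G_2$ only for cells meeting the strip. You instead keep the complex-$z$ Combes--Thomas estimates at operator-norm level and get trace class entirely from resolvent powers at a real point. That only needs local Hilbert--Schmidt bounds on $(H+\kappa)^{-1}$ and $(\widehat H+\kappa)^{-1}$, which diamagnetic domination also supplies. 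Your route is more elementary, but make the bookkeeping explicit:
\begin{itemize}
\item Subtract $f_+$ first. It cancels in the difference, and otherwise $f(\lambda)(\lambda+C)^k$ is unbounded.
\item Because $\iota\iota^*\ne1$, the factorization produces a cross term $\bigl((\widehat H+C)^{-k}\iota^*-\iota^*(H+C)^{-k}\bigr)F_k(H)\iota$, in addition to $(\widehat H+C)^{-k}\bigl(F_k(\widehat H)-\iota^*F_k(H)\iota\bigr)$.
\item Trace-norm boundary confinement of that cross term requires running the geometric resolvent identity at the real point in Hilbert--Schmidt norm. With $k=2$ this works: each resolvent factor is locally Hilbert--Schmidt, and the first-order commutator needs only an operator-norm Combes--Thomas bound on the differentiated resolvent.
\end{itemize}

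Your \COne step needs two small corrections.
\begin{itemize}
\item The class $\mathcal B_1(J)$ contains functions with $f_-\ne f_+$. Then $f(H_\omega)-f_+$ contains the Fermi-projection term $(f_--f_+)\mathbf 1_{(-\infty,a)}(H_\omega)$, which the Germinet--Klein bound for functions supported in $J$ does not cover. The paper writes $f(H_\omega)-f_+=(f_--f_+)\mathbf 1_{(-\infty,a)}(H_\omega)+h_f(H_\omega)$ with $\operatorname{supp}h_f\subset\overline J$, and uses their separate Fermi-projection estimate for the first term.
\item The summation weight must include a factor growing in $|x-y|$, such as $e^{|x-y|^\zeta}$ or $\br{x-y}^m$ as in the paper's $Z_{J,m}$. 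Summing against $\br{x}^{-2\nu}\br{y}^{-2\nu}$ alone gives an almost-sure bound with no off-diagonal decay.
\end{itemize}
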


The verification that these random models satisfy the deterministic
hypotheses of \Cref{thm:main} is given in Appendix~\ref{app:random}.

\appendix

\section{Verification for random models}\label{app:random}

This appendix verifies the deterministic hypotheses used in the two random
applications of \Cref{sec:random}. 

\subsection{Discrete random operators}\label{app:random-discrete}

Let $\mathcal J_{\mathbb Q}(\Delta)$ be the countable set of intervals with
rational endpoints and closure in $\Delta$.  For $p>2$, set
\[
 F_J(x,y;\omega)=\sup_{f\in\mathcal B_1(J)}\|f(H_\omega)_{xy}\|,
 \qquad
 Z_{J,m}(\omega)=\sum_{x,y}\br x^{-p}\br{x-y}^{m}F_J(x,y;\omega).
\]
By Tonelli and \eqref{eq:random-discrete-localization},
\[
 \begin{split}
 \mathbb E Z_{J,m}
 &\le C_J\sum_x\br x^{-p}
          \sum_{v\in\mathbb Z^2}\br v^m e^{-\mu_J|v|}<\infty,\
 \mathbb P(\Omega_{\rm loc})=1,\quad
 \Omega_{\rm loc}:=
 \bigcap_{J\in\mathcal J_{\mathbb Q}(\Delta)}\ 
 \bigcap_{m\in\mathbb N}\{Z_{J,m}<\infty\}.
 \end{split}
\]
For $\omega\in\Omega_{\rm loc}$, each term of this nonnegative sum gives
\[
 F_J(x,y;\omega)\le Z_{J,m}(\omega)\br x^p\br{x-y}^{-m}.
\]
If $J\Subset J'\Subset\Delta$ and $J'\in\mathcal J_{\mathbb Q}(\Delta)$, then
$\mathcal B_1(J)\subset\mathcal B_1(J')$.  Thus \DThree holds
for every $J\Subset\Delta$ on the single event $\Omega_{\rm loc}$.

Thus \DThree holds on one probability-one event.  Together with the uniform
assumptions \DOne--\DTwo and the assumed \DFour, \Cref{thm:main} gives
\Cref{cor:discrete-random}; compare \cite{EGS,ST,BSS} for related
mobility-gap weak-locality arguments.

\subsection{Anderson--Landau operators}\label{app:random-continuum}
For the operator \eqref{eq:AL}, write
$V_\omega=\lambda\sum_j\omega_j u(\,\cdot-j\,)$ and choose
$M_0$ so that $|\omega_j|\le M_0$ throughout the configuration
space.  Then $\|V_\omega\|_\infty\le
 M_V:=|\lambda|M_0\sup_x\sum_j u(x-j)<\infty.$
More precisely, for the spatial domain $\mathcal O=\mathbb R^2$ or
$\mathcal O=\mathbb R\times\mathbb R_+$, the operator is associated with
\[
 \begin{split}
 q_{\mathcal O,\omega}[\psi]
 &=\|(-i\nabla-A_B)\psi\|_{L^2(\mathcal O)}^2
   +\int_\mathcal O V_\omega(x)|\psi(x)|^2\,dx,\\
 \mathcal D(q_{\mathcal O,\omega})
 &=\overline{C_c^\infty(\mathcal O)}^{\,
       (\|\psi\|_2^2+\|(-i\nabla-A_B)\psi\|_2^2)^{1/2}}.
 \end{split}
\]
These closed forms satisfy
$q_{\mathcal O,\omega}[\psi]\ge-M_V\|\psi\|_2^2$, and hence $H_{B,\lambda,\omega},\widehat H_{B,\lambda,\omega}\ge-M_V$.

Trace-ideal Combes--Thomas estimates for magnetic Schr\"odinger operators are
proved in considerable generality by Shen \cite{ShenCT}.  Closely related
magnetic bulk and Dirichlet resolvent estimates, including gauge-localized
elliptic bounds used below, appear in \cite[Appendix~A.2]{CMT}.  We record the
precise two-dimensional half-plane form needed here, including one magnetic
derivative and uniformity over the bounded disorder configurations.

\begin{lemma}\label[lemma]{lem:local-schatten}
Let $H_\omega=H_{B,\lambda,\omega}$ or its Dirichlet half-plane
restriction, and write $R_\omega(z)=(H_\omega-z)^{-1}$ and
$D_B=-i\nabla-A_B$.  Let
$K\Subset\mathbb R$ and $0<|\operatorname{Im}z|\le1$, with
$\operatorname{Re}z\in K$.  Uniformly in the bounded configuration space,
for every $p>1$ and $q>2$ there are $C,c,N>0$ such that
\begin{align*}
 \|\chi_xR_\omega(z)\chi_y\|_p
 &\le C|\operatorname{Im}z|^{-N}
 e^{-c|\operatorname{Im}z||x-y|},\\
 \|\chi_xD_BR_\omega(z)\chi_y\|_q
 &\le C|\operatorname{Im}z|^{-N}
 e^{-c|\operatorname{Im}z||x-y|}.
\end{align*}
If $\kappa>0$ is chosen so that $H_\omega+\kappa\ge1$, then at the
fixed real point $z=-\kappa$ the corresponding estimates hold with
$Ce^{-c_0|x-y|}$ on the right.
\end{lemma}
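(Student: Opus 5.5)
The plan is to follow the standard Combes--Thomas route for magnetic Schr\"odinger operators, with three additions: trace-ideal bounds, local elliptic regularity, and uniformity in $\omega$. Uniformity over the configuration space comes from the single a priori bound $\|V_\omega\|_\infty\le M_V$ and the lower bound $H_\omega\ge -M_V$. The constants below will depend only on $B$, $M_V$, $K$, $p$, $q$. This is the same whether $\mathcal O=\mathbb R^2$ or $\mathcal O=\mathbb R\times\mathbb R_+$, because the Dirichlet form domain is invariant under multiplication by bounded Lipschitz functions.

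\emph{Step 1: operator-norm Combes--Thomas.} For a unit vector $e$ and $|\alpha|\le c|\operatorname{Im}z|$, conjugate with the bounded weight $w(x)=e^{\alpha\, e\cdot x}$ (truncated to be bounded, then remove the truncation by monotone limits). One has $e^{w}D_Be^{-w}=D_B+i\alpha e$, so
\[
e^{w}H_\omega e^{-w}=H_\omega+2\alpha\, e\cdot D_B\cdot i\,\text{(sym.)}-\alpha^2 .
\]
The perturbation is relatively form bounded with respect to $H_\omega+\kappa$. Hence for $|\alpha|\le c|\operatorname{Im}z|$ the conjugated resolvent exists and has norm at most $C|\operatorname{Im}z|^{-1}$. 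The same conjugation gives $\|e^{w}D_BR_\omega(z)e^{-w}\|\le C|\operatorname{Im}z|^{-1}(1+|z|)^{1/2}$. Choosing $e$ along $x-y$ yields the operator-norm bounds $\|\chi_xR\chi_y\|,\ \|\chi_xD_BR\chi_y\|\le C|\operatorname{Im}z|^{-1}e^{-c|\operatorname{Im}z||x-y|}$. At $z=-\kappa$ one can take $\alpha$ equal to a fixed small constant, which gives $e^{-c_0|x-y|}$.

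\emph{Step 2: upgrade to Schatten norms.} Write
\[
R(z)=R(-\kappa)+(z+\kappa)R(-\kappa)R(z),
\]
and iterate it $k$ times to get $R(z)=\sum_j (z+\kappa)^jR(-\kappa)^{j+1}+(z+\kappa)^kR(-\kappa)^kR(z)$. Then insert $\sum_w\chi_w$ between the factors.

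The key local input is $\|\chi_xR_\omega(-\kappa)\tilde\chi\|_{r}<\infty$ for every $r>2$, uniformly in $x$, where $\tilde\chi$ is a slightly enlarged cutoff. Similarly, $\chi_xD_BR(-\kappa)\in\mathcal S_r$ for $r>4$. I would prove these by diamagnetic domination: $|R_\omega(-\kappa)f|\le (-\Delta_{\mathcal O}+\kappa-M_V)^{-1}|f|$ pointwise, with Dirichlet Laplacian on $\mathcal O$. Then I would use Kato--Simon type bounds $\|f(x)g(-i\nabla)\|_r\le C\|f\|_r\|g\|_r$, which need $g\in L^r$, i.e.\ $(|\xi|^2+1)^{-1}\in L^r(\mathbb R^2)$ for $r>1$. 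Since domination of Schatten norms is only available for $r$ even, I would first prove it for $r=2$ via kernels and interpolate. For the half-plane I would compare with the full plane using Dirichlet domination of the heat kernel. For $D_B$ I would use the factorization $D_B R(-\kappa)^{1/2}$ bounded.

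A product of $k$ local Schatten factors, combined with H\"older, lands in $\mathcal S_p$ for any $p>1$ once $k$ is large (respectively $\mathcal S_q$, $q>2$, for the derivative). The decay of each factor, from Step~1 at $-\kappa$, handles the sums over intermediate cells via the convolution estimate \eqref{eq:block-convolution}. The last factor $R(z)$ contributes $|\operatorname{Im}z|^{-1}e^{-c|\operatorname{Im}z||\cdot|}$, and the powers $|z+\kappa|^j$ are bounded since $\operatorname{Re}z\in K$. The total exponent $N$ collects these powers.

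The main obstacle is Step~2's local Schatten bound for $D_BR(-\kappa)$ near the Dirichlet boundary. Diamagnetic inequalities control $R$ but not $D_B R$. For this I would use the gauge-localized elliptic estimate in \cite[Appendix~A.2]{CMT}: in each cell, choose the local gauge $A_B-A_B(x)$, which is bounded on $\tilde C_x$. Then $\|\chi_xD_B u\|\lesssim\|\tilde\chi_x(H+\kappa)u\|+\|\tilde\chi_xu\|$. This reduces the $\mathcal S_q$ bound to that of $\tilde\chi_xR(-\kappa)^{1/2}$ in $\mathcal S_{2r}$ with $r>1$, i.e.\ $q>2$, which is exactly the stated range.
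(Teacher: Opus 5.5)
There is a genuine gap in Step~2. Your Step~1 is the standard magnetic Combes--Thomas argument and is fine, but Step~2 does not deliver the Schatten exponents the lemma claims.

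\emph{The iteration does not improve the leading term.} The iterated resolvent identity always keeps the single-factor term $R(-\kappa)$ (and $D_BR(-\kappa)$ for the derivative). So $\chi_xR(z)\chi_y$ can never land in a better Schatten class than $\chi_xR(-\kappa)\chi_y$; taking $k$ large only improves the higher-order terms. Everything therefore rests on the local input at $z=-\kappa$, which is itself the last assertion of the lemma.

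\emph{Your tools stop at exponent $2$.} Kernel comparison under diamagnetic domination gives $\chi_xR(-\kappa)\in\mathcal S_2$. Domination transfers Schatten bounds only in the classes $\mathcal S_{2n}$. The Kato--Seiler--Simon inequality holds only for $r\ge2$; the range $1<r<2$ needs Birman--Solomyak bounds, which concern $f(x)g(-i\nabla)$ and cannot be passed to the magnetic or Dirichlet operator by domination. No interpolation from $r=2$ produces $r<2$. So you get $p\ge2$, not every $p>1$. The range below $2$ is exactly what is used later: $p=3/2$ is paired with $q=3$ to make the blocks in the Dirichlet boundary lemma trace class.

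\emph{The derivative bound has the same defect.} Dominating $\widetilde\chi_xR(-\kappa)^{1/2}$ by its free counterpart gives $\mathcal S_4$, not every $\mathcal S_q$ with $q>2$. Also, the estimate $\|\chi_xD_Bu\|\lesssim\|\widetilde\chi_x(H+\kappa)u\|+\|\widetilde\chi_xu\|$ applied to $u=R(-\kappa)f$ produces the non-compact term $\|\widetilde\chi_xf\|$. By itself it gives no singular-value control; you would need a Caccioppoli-type quadratic-form version.

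\emph{The missing idea is to get local compactness from elliptic regularity rather than from domination.} The paper proceeds in three steps.
\begin{enumerate}
\item A weighted Combes--Thomas bound for $e^{\alpha\rho_y}R(z)e^{-\alpha\rho_y}$ with $\alpha=c_0|\operatorname{Im}z|$.
\item The local gauge $\Gamma_x$, which on each enlarged cell makes the principal part the flat Laplacian with uniformly bounded lower-order coefficients. Interior and flat-boundary Dirichlet estimates then give $\|\Gamma_x^*\widetilde\chi_xR(z)\chi_y\|_{L^2\to H^2(\mathcal O_x)}\le C|\operatorname{Im}z|^{-N}e^{-c|\operatorname{Im}z||x-y|}$, and the analogous $H^1$ bound for $D_BR(z)$.
\item The embedding $H^s(\mathcal O_x)\hookrightarrow L^2$ has singular values $\le C_sn^{-s/2}$ in two dimensions, so it lies in $\mathcal S_p$ for $sp>2$.
\end{enumerate}
This gives all $p>1$ (from $s=2$) and all $q>2$ (from $s=1$) at every $z$ at once, with no resolvent expansion.

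If you prefer to keep domination, you could instead factor $R(-\kappa)=(R(-\kappa)^{1/k})^k$. Each fractional power is dominated by the free one, being a positive Laplace transform of the semigroup. Moreover $\chi_wR_0^{1/k}\chi_{w'}\in\mathcal S_{2n}$ for even $2n>k$, with exponential off-diagonal decay. H\"older over the $k$ factors then gives $\mathcal S_{2n/k}$, which reaches any $p>1$. The derivative term would need a similar treatment of $R(-\kappa)^{1/2}$.
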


\begin{proof}
Let $\mathcal O$ denote the underlying spatial domain, either $\mathbb R^2$
or $\mathbb R\times\mathbb R_+$.  Let $C_x^*$ be a fixed bounded enlargement of $C_x$.  Choose a smooth
cutoff $\widetilde\chi_x$ supported in $C_x^*$ and equal to one on $C_x$,
and write
$\mathcal O_x=C_x^*\cap\mathcal O$.  The unitary multiplication operator $(\Gamma_x\psi)(t)=e^{iA_B(x)\cdot(t-x)}\psi(t)$
removes the constant part of the magnetic potential $A_B(t)-A_B(x)=A_B(t-x).$
It satisfies
\[
 \Gamma_x^*D_B\Gamma_x=-i\nabla-A_B(t-x),\qquad
 \sup_{t\in C_x^*}|A_B(t-x)|\le C.
\]
All derivatives of $V_\omega$ on $C_x^*$ are uniformly bounded, since
$u\in C_c^\infty$ and $|\omega_j|\le M_0$.

Take $\eta=|\operatorname{Im}z|$ and choose a smooth function $\rho_y$ with
\[
 |\nabla\rho_y|+|D^2\rho_y|\le C,
 \qquad
 |\rho_y(t)-|t-y||\le C.
\]
To justify the exponential conjugation, first replace $\rho_y$ by bounded
smooth truncations $\rho_{y,R}$ that agree with $\rho_y$ on
$\{\rho_y\le R\}$ and have the same uniform first- and second-derivative
bounds.  The magnetic Combes--Thomas estimate is uniform in $R$; passing to
the limit gives, for $\alpha=c_0\eta$ with $c_0>0$ sufficiently small,
\begin{equation}\label{eq:weighted-CT-continuum}
 \|e^{\alpha\rho_y}R_\omega(z)e^{-\alpha\rho_y}\|
 \le C\eta^{-1}.
\end{equation}
See \cite{ShenCT} and, for the bulk/half-plane magnetic estimates used here,
\cite[Appendix~A.2, Eqs.~(A.4)--(A.10)]{CMT}.  Multiplication by the gauge
$\Gamma_x$ and by the truncated weights preserves the Dirichlet trace, so
the same argument applies on the half-plane.  If
$v=R_\omega(z)\chi_yf$ and $w=e^{\alpha\rho_y}v$, then
\eqref{eq:weighted-CT-continuum} gives
\[
 \|w\|_2\le C\eta^{-1}\|f\|_2.
\]
Set
\[
 H_{\omega,\alpha,y}(z)
 :=e^{\alpha\rho_y}(H_\omega-z)e^{-\alpha\rho_y}.
\]
Then $H_{\omega,\alpha,y}(z)w=e^{\alpha\rho_y}\chi_yf$.  Conjugation by
the weight adds only first- and zeroth-order terms whose coefficients are
uniformly bounded, since $\alpha=c_0\eta\le c_0$ and
$|\nabla\rho_y|+|D^2\rho_y|\le C$.  After the local gauge transform the
principal part is therefore the ordinary Laplacian, with first- and
zeroth-order coefficients on $C_x^*$ uniformly bounded in $x$, $y$,
$\omega$, and $0<\eta\le1$.  Interior elliptic regularity, and the
corresponding Dirichlet estimate at the flat boundary, give for a cutoff
$\chi_x^*$ equal to one on $\operatorname{supp}\widetilde\chi_x$
\[
 \|\Gamma_x^*\widetilde\chi_x w\|_{H^2(\mathcal O_x)}
 \le C\bigl(
 \|\Gamma_x^*\chi_x^*H_{\omega,\alpha,y}(z)w\|_2
 +(1+|z|)\|\chi_x^*w\|_2\bigr)
 \le C\eta^{-N}\|f\|_2.
\]
The constant is uniform because the gauged coefficients are uniformly bounded
on a fixed cell enlargement and the local domains are translates of only
finitely many interior or flat-boundary geometries.
Since $e^{\alpha\rho_y}$ is bounded on $C_y$ and
$\rho_y(t)\ge |x-y|-C$ on $C_x^*$, removing the weight yields
\begin{equation}\label{eq:local-H2}
 \|\Gamma_x^*\widetilde\chi_xR_\omega(z)\chi_y\|_{L^2\to H^2(\mathcal O_x)}
 \le C\eta^{-N}e^{-c\eta|x-y|}.
\end{equation}
After enlarging the cutoff once more, the gauged magnetic derivative is a
first-order operator with uniformly bounded coefficients on $C_x^*$; hence
\begin{equation}\label{eq:local-H1}
 \|\Gamma_x^*\widetilde\chi_xD_BR_\omega(z)\chi_y\|_{L^2\to H^1(\mathcal O_x)}
 \le C\eta^{-N}e^{-c\eta|x-y|}.
\end{equation}
At $-\kappa$ the uniform resolvent bound
$\|(H_\omega+\kappa)^{-1}\|\le1$ gives the same estimates with a fixed
Combes--Thomas exponent.

The domains $\mathcal O_x$ are translates of a fixed square or one of
finitely many fixed half-squares.  Hence, for $s=1,2$, they admit Sobolev
extension operators $H^s(\mathcal O_x)\to H^s(\mathbb R^2)$ with norms
uniform in $x$.  After translation, multiply the extension by a fixed cutoff
which is one on the relevant cell and supported strictly inside
$(-\pi,\pi)^2$.  Its $2\pi$-periodization then belongs to
$H^s(\mathbb T^2)$ with a comparable norm.  Thus the singular-value
estimates reduce to
\[
 (1-\Delta_{\mathbb T^2})^{-s/2}e_k
 =(1+|k|^2)^{-s/2}e_k,\qquad k\in\mathbb Z^2,
\]
where $\{e_k\}_{k\in\mathbb Z^2}$ is the standard Fourier basis of
$L^2(\mathbb T^2)$.
Since
\[
 \#\{k\in\mathbb Z^2:|k|\le R\}\le C(1+R)^2,
\]
the embedding
$J_s:H^s(\mathcal O_x)\to L^2(C_x\cap\mathcal O)$ satisfies
\[
 s_n(J_s)\le C_s n^{-s/2},\qquad
 \|J_s\|_p^p\le C\sum_{n\ge1}n^{-sp/2}<\infty
 \quad \text{for } sp>2.
\]
Here $s_n$ denotes the $n$th singular value.  Equivalently,
\[
 J_2\in\mathcal S_p, \quad p>1;
 \qquad
 J_1\in\mathcal S_q, \quad q>2.
\]
Composing the embeddings
$J_2$ and $J_1$ with \eqref{eq:local-H2} and \eqref{eq:local-H1},
respectively, gives
\[
 \begin{split}
 \|\chi_xR_\omega(z)\chi_y\|_p
 &\le \|J_2\|_p\,
 \|\Gamma_x^*\widetilde\chi_xR_\omega(z)\chi_y\|_{L^2\to H^2(\mathcal O_x)}
 \qquad p>1,\\
 \|\chi_xD_BR_\omega(z)\chi_y\|_q
 &\le \|J_1\|_q\,
 \|\Gamma_x^*\widetilde\chi_xD_BR_\omega(z)\chi_y\|_{L^2\to H^1(\mathcal O_x)}
 \qquad q>2.
 \end{split}
\]
The gauge multipliers are unitary on $L^2$, so they do not change these
Schatten norms.  This proves the lemma.
\end{proof}

We now prove the local trace-norm boundary estimate by a geometric-resolvent
argument; compare \cite[Appendix~A]{CMT} and \cite[Appendix~A]{EG}.

\begin{lemma}\label[lemma]{lem:dirichlet-boundary}
For the model \eqref{eq:AL}, the hard Dirichlet restriction is compatible with
the full-space operator in the sense of the boundary estimate \eqref{eq:CEdge}, uniformly for
$\omega$ in the bounded configuration space.
\end{lemma}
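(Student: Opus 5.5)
The plan mirrors the lattice argument of \Cref{lem:lattice-boundary}: a geometric resolvent identity for the resolvent difference, followed by the Helffer--Sj\"ostrand formula. The new ingredients are the Schatten bounds of \Cref{lem:local-schatten} and a smooth cutoff adapted to the Dirichlet wall.

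\textbf{Resolvent identity.} Fix $\zeta\in C^\infty(\mathbb R)$ with $\zeta=0$ on $(-\infty,1]$ and $\zeta=1$ on $[2,\infty)$, and set $\zeta(x)=\zeta(x_2)$. Since $\zeta\iota^*u$ lies in the Dirichlet form domain whenever $u$ lies in the bulk form domain, one has, in the form sense,
\[
 \widehat R(z)\zeta\iota^*-\iota^*\zeta R(z)=\widehat R(z)\,[\zeta,-\Delta_B]\,R(z),\qquad [\zeta,-\Delta_B]=i(\nabla\zeta)\cdot D_B+D_B\cdot i(\nabla\zeta).
\]
The commutator is first order and supported in the strip $1\le x_2\le2$.

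Writing $1=\zeta+(1-\zeta)$ on both sides gives
\[
 \widehat R(z)-\iota^*R(z)\iota=\widehat R(z)[\zeta,-\Delta_B]R(z)\iota+(1-\zeta)\widehat R(z)-\iota^*(1-\zeta)R(z)\iota .
\]
The last two terms are supported in $\{x_2\le2\}$ on their output side. After sandwiching with $\widehat\chi_x,\widehat\chi_y$ they are nonzero only when $x_2\le 3$, so $\langle x_2\rangle^{-m}$ costs nothing. For these terms I would apply \Cref{lem:local-schatten} twice, via $\chi_xR\chi_y=\sum_w\chi_xR\chi_w\chi_wR\chi_y$ against $R(z)^2$, or equivalently by writing $R=R\,(H+\kappa)R(-\kappa)$-type factorizations. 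This yields trace norm bounded by $C\eta^{-N}e^{-c\eta|x-y|}$.

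\textbf{Commutator term and decay in $x_2$.} For the commutator term, insert cell partitions $\sum_w\chi_w$ over cells $w$ in the strip. Hölder with $\mathcal S_p$, $p<2$, for $\widehat\chi_x\widehat R\chi_w$ and $\mathcal S_q$, $q>2$, for $\chi_wD_BR\chi_y$ (and the adjoint ordering for the other half) makes each term trace class once $1/p+1/q\le1$.

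This alone gives only $e^{-c\eta(|x-w|+|w-y|)}$, which controls $\langle x-y\rangle$ but not the wall distance $\langle x_2\rangle$. To obtain the decay in $x_2$, note that $|x-w|\ge x_2-2$ and $|w-y|\ge y_2-2$ for $w$ in the strip. Hence the sum over the strip is bounded by $C\eta^{-N'}e^{-c\eta(|x-y|+x_2+y_2)/3}$, where $N'$ absorbs the sum over $w_1$.

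\textbf{Helffer--Sj\"ostrand.} Next I would apply Helffer--Sj\"ostrand exactly as in \Cref{lem:lattice-boundary}. Since $H$ is unbounded here, I would write $f=f_++f_0$ with $f_0$ supported where $f$ is nonconstant on the left. Because both operators are bounded below by $-M_V$, I would pick $f_0\in C_c^\infty$ agreeing with $f-f_+$ on $[-M_V,\infty)$; the constant $f_+$ cancels in the difference. Using an almost-analytic extension of order $N>N'+3m$ converts the $e^{-c\eta(\cdot)}$ bound into $C_{f,m}\langle x-y\rangle^{-m}\langle x_2\rangle^{-m}$. This gives \eqref{eq:CEdge} with $\nu_f=0$. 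Uniformity in $\omega$ is inherited from \Cref{lem:local-schatten}.

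\textbf{Expected main obstacle.} The hardest step will be making the trace-norm (rather than operator-norm) bookkeeping rigorous for the first-order commutator. This requires justifying the geometric resolvent identity at the level of quadratic forms with the magnetic derivative acting on $R(z)$, and then checking that the Schatten exponents $p>1$, $q>2$ from \Cref{lem:local-schatten} combine to $\mathcal S_1$. If the exponents do not close, I would split $R(z)=R(z)(H+\kappa)^{1/2}\cdot(H+\kappa)^{-1/2}$ or use the resolvent identity once more to gain an extra resolvent factor, whose local $\mathcal S_p$ bound improves the exponent count.
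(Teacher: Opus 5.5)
Your commutator term (the cutoff in $x_2$, H\"older with $\mathcal S_{3/2}\cdot\mathcal S_3$ over strip cells, and $|x-w|+|w-y|\ge x_2+y_2-C$) and your Helffer--Sj\"ostrand conversion are the paper's argument. \textbf{The gap is in the two leftover terms.} You assert that their cell blocks have trace norm $\le C\eta^{-N}e^{-c\eta|x-y|}$. But in two dimensions a single resolvent is not locally trace class: the Green's function has a logarithmic diagonal singularity, which is why \Cref{lem:local-schatten} only gives $p>1$.

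Concretely, $H+\kappa\le H_0+\kappa+M_V$ with $H_0=(-i\nabla-A_B)^2$. Since inversion is operator-monotone decreasing, for any cell indicator $\chi$ one gets $\operatorname{Tr}(\chi(H+\kappa)^{-1}\chi)\ge\frac{B}{2\pi}|\operatorname{supp}\chi|\sum_{n\ge0}\bigl((2n+1)B+\kappa+M_V\bigr)^{-1}=\infty$. Since $R(z)-R(-\kappa)=(z+\kappa)R(z)R(-\kappa)$ is locally trace class, $\widehat\chi_x\iota^*R(z)\iota\widehat\chi_x\notin\mathcal S_1$ for every $z$. For a wall cell this is exactly a diagonal block of your term $(1-\zeta)\iota^*R(z)\iota$.

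Your proposed remedies do not repair this. The sum $\sum_w\chi_xR\chi_w\chi_wR\chi_y$ equals $\chi_xR(z)^2\chi_y$, not $\chi_xR(z)\chi_y$. The factorization $R(z)=R(-\kappa)+(z+\kappa)R(z)R(-\kappa)$ still leaves a bare $R(-\kappa)$, and splitting off $(H+\kappa)^{\pm1/2}$ does not change the count. So you have no trace-norm bound on the integrand for these terms, and the near-wall part of \eqref{eq:CEdge} is not proved.

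The missing idea is that trace class appears only after the functional calculus, through two resolvent factors. You have two options:
\begin{itemize}
\item integrate the leftover terms separately in the Helffer--Sj\"ostrand formula, so they become $F(\widehat H)(1-\zeta)$ and $(1-\zeta)\iota^*F(H)\iota$ with $F\in C_c^\infty$; or
\item do as the paper does and use the resolvent identity only for pairs of cells both lying in $\{\rho=1\}$, where the sandwiched difference is exactly the commutator term.
\end{itemize}

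Either way you need the estimate \eqref{eq:smooth-offdiag-trace}. Write $F(t)=(t+\kappa)^{-2}G_2(t)$ with $G_2=(t+\kappa)^2F\in C_c^\infty$. Bound $\|\chi_x(H+\kappa)^{-2}\chi_z\|_1\le\sum_w\|\chi_xR_\kappa\chi_w\|_2\|\chi_wR_\kappa\chi_z\|_2\le Ce^{-c|x-z|}$ using \Cref{lem:local-schatten} with $p=2$ at $-\kappa$. Combine this with the rapid operator-norm decay of the cell blocks of $G_2(H)$, and do the same for $\widehat H$.

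Pairs involving a near-wall cell then satisfy $\min(x_2,y_2)\le3$, hence $\langle x_2\rangle\le C\langle x-y\rangle$. This converts $\langle x-y\rangle^{-2m}$ decay into the required $\langle x-y\rangle^{-m}\langle x_2\rangle^{-m}$.

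This also corrects a slip in your decomposition. The correct identity is $\widehat R-\iota^*R\iota=\widehat R[\zeta,-\Delta_B]\iota^*R\iota+\widehat R(1-\zeta)-(1-\zeta)\iota^*R\iota$; yours differs from it by $[\widehat R,\zeta]$. So one leftover is cut off on the input side and is not confined to $x_2\le3$. It is only confined to $y_2\le3$, which is the case handled by $\langle x_2\rangle\le C\langle x-y\rangle$.
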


\begin{proof}
The estimate is local and does not use spectral localization.  Abbreviate
\[
 H=H_{B,\lambda,\omega},\qquad
 \widehat H=\widehat H_{B,\lambda,\omega},\qquad
 D_B=-i\nabla-A_B,
\]
and write
\[
 R(z)=(H-z)^{-1},\qquad \widehat R(z)=(\widehat H-z)^{-1}.
\]
Let $f$ be as in \CFour.  Since $H$ and $\widehat H$ have a
common lower bound, we may choose $F\in C_c^\infty(\mathbb R)$ such that
$F=f-f_+$ on a neighborhood of
$\sigma(H)\cup\sigma(\widehat H)$.  Then
\[
 F(H)=f(H)-f_+,\qquad F(\widehat H)=f(\widehat H)-f_+,
\]
and hence
\[
 f(\widehat H)-\iota^*f(H)\iota
 =
 F(\widehat H)-\iota^*F(H)\iota.
\]
Thus it suffices to prove the estimate for $F$.

By Lemma~\ref{lem:local-schatten}, for
$\operatorname{Re}z$ in a fixed compact set and
$0<|\operatorname{Im}z|\le1$, every $p>1$ and $q>2$ satisfy
\begin{align}
 \|\chi_xR(z)\chi_y\|_p
 +\|\widehat\chi_x\widehat R(z)\widehat\chi_y\|_p
 &\le C|\operatorname{Im}z|^{-N}
 e^{-c|\operatorname{Im}z||x-y|},\label{eq:local-Sp}\\
 \|\chi_xD_BR(z)\chi_y\|_q
 +\|\widehat\chi_xD_B\widehat R(z)\widehat\chi_y\|_q\le C|\operatorname{Im}z|^{-N}
 e^{-c|\operatorname{Im}z||x-y|}.
 \label{eq:local-Sq}
\end{align}
In particular, the differentiated localized resolvent is used only with
$q>2$; no Hilbert--Schmidt assertion is made for it.

Choose $\rho=\rho(x_2)\in C^\infty(\mathbb R_+)$ with
$\rho=0$ for $x_2\le1$ and $\rho=1$ for $x_2\ge2$.  On the
support of $\nabla\rho$,
\[
 [H_{B,\lambda,\omega},\rho]
 =-2i\,\nabla\rho\cdot D_B-(\Delta\rho),
\]
so this commutator is first order and supported in a fixed boundary strip.
Because $\rho$ vanishes in a neighborhood of the boundary,
$\rho\iota^*$ maps the full-space operator domain into the Dirichlet domain.
On this domain,
\[
 \widehat H\rho\iota^*
 =\rho\iota^*H+[\widehat H,\rho]\iota^*.
\]
The commutator has bounded coefficients supported in the fixed strip, so the
identity extends after composition with the resolvent.  Consequently,
\[
 (\widehat H-z)\rho\iota^*R(z)\iota
 =\rho+[\widehat H,\rho]\iota^*R(z)\iota,
\]
and hence
\[
 \rho\iota^*R(z)\iota
 =\widehat R(z)\rho+
   \widehat R(z)[\widehat H,\rho]\iota^*R(z)\iota.
\]
Thus, if both cells lie where $\rho=1$,
\[
 \widehat\chi_x(\widehat R(z)-\iota^*R(z)\iota)\widehat\chi_y
 =-\widehat\chi_x\widehat R(z)
 [\widehat H,\rho] \,\iota^*R(z)\iota\widehat\chi_y .
\]
Let
\[
 \mathcal S=\{\zeta\in\mathbb Z\times\mathbb N_0:
 C_\zeta^*\cap\operatorname{supp}\nabla\rho\ne\varnothing\}
\]
be the cells meeting the strip.  Choose a smooth partition of unity
$\{\theta_\zeta\}_{\zeta\in\mathcal S}$ on a neighborhood of the support of
$[\widehat H,\rho]$, with $\operatorname{supp}\theta_\zeta\subset C_\zeta^*$,
and choose $\widetilde\theta_\zeta$ supported in a fixed larger cell
neighborhood such that $\widetilde\theta_\zeta=1$ on
$\operatorname{supp}\theta_\zeta$.  The family has uniformly finite overlap
and uniformly bounded derivatives.  Since the coefficients of
$[\widehat H,\rho]$ are supported in this strip,
\[
 [\widehat H,\rho]
 =\sum_{\zeta\in\mathcal S}\theta_\zeta[\widehat H,\rho],
\]
and therefore
\[
 \widehat\chi_x\widehat R(z)[\widehat H,\rho]\iota^*R(z)\iota\widehat\chi_y
 =\sum_{\zeta\in\mathcal S}
 \widehat\chi_x\widehat R(z)\widetilde\theta_\zeta\,
 \theta_\zeta[\widehat H,\rho]\iota^*R(z)\iota\widehat\chi_y.
\]
The smooth cutoffs are supported in uniformly many cells, so
\eqref{eq:local-Sp}--\eqref{eq:local-Sq} apply to them with the same type of
constants.  With $p=3/2$ and $q=3$, the first-order part of the $\zeta$ term
is bounded by
\[
 C\,\|\widehat\chi_x\widehat R(z)\widetilde\theta_\zeta\|_{3/2}
 \|\theta_\zeta D_BR(z)\iota\widehat\chi_y\|_3,
\]
and its zeroth-order part by
\[
 C\,\|\widehat\chi_x\widehat R(z)\widetilde\theta_\zeta\|_{3/2}
 \|\theta_\zeta R(z)\iota\widehat\chi_y\|_3.
\]
Here
\[
 \frac1{3/2}+\frac13=1,
 \qquad
 \|AB\|_1\le\|A\|_{3/2}\|B\|_3.
\]
Thus Schatten H\"older and \eqref{eq:local-Sp}--\eqref{eq:local-Sq} give
\[
 \|\widehat\chi_x\widehat R(z)
   [\widehat H,\rho]\iota^*R(z)\iota\widehat\chi_y\|_1
 \le C|\operatorname{Im}z|^{-N}
 \sum_{\zeta\in\mathcal S}
 e^{-c|\operatorname{Im}z|(|x-\zeta|+|\zeta-y|)}.
\]
Take $t=|\operatorname{Im}z|$.  For strip cells $\zeta$,
\[
 |x-\zeta|+|\zeta-y|\ge|x-y|,\qquad
 |x-\zeta|+|\zeta-y|\ge x_2+y_2-C.
\]
Hence, after decreasing $c$,
\[
 e^{-ct(|x-\zeta|+|\zeta-y|)}
 \le
 e^{-c't(|x-y|+x_2+y_2)}
 e^{-c't(|x-\zeta|+|\zeta-y|)}.
\]
Splitting off part of the exponential and summing the remaining
one-dimensional exponential gives
\[
 \sum_{n\in\mathbb Z}e^{-c't|n|}
 \le 1+2\int_0^\infty e^{-c'ts}\,ds
 \le Ct^{-1},
\]
and therefore
\[
 \sum_{\zeta\in\mathcal S}
 e^{-ct(|x-\zeta|+|\zeta-y|)}
 \le C't^{-1}e^{-c't(|x-y|+x_2+y_2)}.
\]
Consequently
\begin{equation}\label{eq:far-resolvent-difference}
 \|\widehat\chi_x(\widehat R(z)-\iota^*R(z)\iota)\widehat\chi_y\|_1
 \le C|\operatorname{Im}z|^{-N'}
 e^{-c'|\operatorname{Im}z|(|x-y|+x_2+y_2)}
\end{equation}
whenever both cells lie outside the fixed boundary strip.

It remains to treat cells meeting that strip.  We record explicitly the
trace-norm estimate for smooth functional calculus.  Choose $\kappa>0$ so that
$H+\kappa\ge1$ and $\widehat H+\kappa\ge1$, and write
\[
 G_2(t)=(t+\kappa)^2G(t),\qquad G\in C_c^\infty(\mathbb R).
\]
Then
\[
 G(H)=(H+\kappa)^{-2}G_2(H),
 \qquad
 G(\widehat H)=(\widehat H+\kappa)^{-2}G_2(\widehat H).
\]
At the fixed spectral parameter $-\kappa$,
Lemma~\ref{lem:local-schatten} with $p=2$ and a unit-cell partition gives
\[
 \|\chi_x(H+\kappa)^{-2}\chi_z\|_1
 +\|\widehat\chi_x(\widehat H+\kappa)^{-2}\widehat\chi_z\|_1
 \le Ce^{-c_0|x-z|}.
\]
For example, with
\[
 R_\kappa=(H+\kappa)^{-1},
 \qquad
 R_\kappa^2=(H+\kappa)^{-2},
\]
we have
\[
 \|\chi_xR_\kappa^2\chi_z\|_1
 \le\sum_w\|\chi_xR_\kappa\chi_w\|_2\|\chi_wR_\kappa\chi_z\|_2
 \le C\sum_we^{-c_0(|x-w|+|w-z|)}
 \le C'e^{-c_0'|x-z|}.
\]
Helffer--Sj\"ostrand functional calculus combined with the Combes--Thomas
estimate gives, for every $M$ (see also \cite{ShenCT}),
\[
 \|\chi_zG_2(H)\chi_y\|
 +\|\widehat\chi_zG_2(\widehat H)\widehat\chi_y\|
 \le C_M\langle z-y\rangle^{-M}.
\]
After summing over $z$, Schatten H\"older yields
\begin{equation}\label{eq:smooth-offdiag-trace}
 \|\widehat\chi_xG(\widehat H)\widehat\chi_y\|_1
 +\|\widehat\chi_x\iota^*G(H)\iota\widehat\chi_y\|_1
 \le C_M\langle x-y\rangle^{-M}.
\end{equation}
Thus \eqref{eq:CEdge} follows immediately if the $x$-cell meets the
boundary strip, because there
\[
 \br{x_2}^{-m}\asymp1.
\]
If only the $y$-cell meets the strip, then
\[
 \br{x_2}\le C\br{x-y},
\]
and applying
\eqref{eq:smooth-offdiag-trace} with order $2m$ gives the required factor
$\langle x-y\rangle^{-m}\langle x_2\rangle^{-m}$.

Choose $\vartheta\in C_c^\infty((-1,1))$ equal to one near zero and
an integer $M$.  An almost-analytic extension is
\[
 \widetilde F_M(s+it)=\vartheta(t)
       \sum_{j=0}^M\frac{F^{(j)}(s)(it)^j}{j!},\qquad
 \overline\partial=\tfrac12(\partial_s+i\partial_t).
\]
The Taylor terms cancel in $\overline\partial\widetilde F_M$ up to
order $M$, so
\[
 |\overline\partial\widetilde F_M(s+it)|
 \le C_M|t|^M.
\]
With our convention $R(z)=(H-z)^{-1}$, the functional calculus formula is
\[
 F(\widehat H)-\iota^*F(H)\iota
 =\frac1\pi\int_{\mathbb R^2}
    \overline\partial\widetilde F_M(s+it)
    \bigl(\widehat R(s+it)-\iota^*R(s+it)\iota\bigr)\,ds\,dt.
\]
For cells outside the strip, write
\[
 r_1=|x-y|,
 \qquad
 r_2=x_2.
\]
Then \eqref{eq:far-resolvent-difference} and
\[
 e^{-c|t|(r_1+r_2)}
 \le
 C_m|t|^{-2m}\br{r_1}^{-m}\br{r_2}^{-m}
 \quad(0<|t|\le1)
\]
give, when $M>N'+2m-1$,
\[
 \begin{split}
 &\|\widehat\chi_x(F(\widehat H)-\iota^*F(H)\iota)
                  \widehat\chi_y\|_1\le C_{F,m}\br{x-y}^{-m}\br{x_2}^{-m}
             \int_{-1}^1|t|^{M-N'-2m}\,dt .
 \end{split}
\]
Since
\[
 M-N'-2m>-1,
 \qquad
 \int_{-1}^1|t|^{M-N'-2m}\,dt
 =\frac{2}{M-N'-2m+1},
\]
we obtain
\[
 \|\widehat\chi_x(F(\widehat H)-\iota^*F(H)\iota)
                  \widehat\chi_y\|_1
 \le C'_{F,m}\br{x-y}^{-m}\br{x_2}^{-m}.
\]
The cells meeting the strip were treated in
\eqref{eq:smooth-offdiag-trace}.  Since the constant part of $f$ cancels under compression,
\[
 f(\widehat H)-\iota^*f(H)\iota
 =
 F(\widehat H)-\iota^*F(H)\iota,
\]
and the estimates above give
\[
 \|\widehat\chi_x
 (f(\widehat H)-\iota^*f(H)\iota)
 \widehat\chi_y\|_1
 \le C_{f,m}\br{x-y}^{-m}\br{x_2}^{-m}.
\]
This is \eqref{eq:CEdge}, with $\nu_f=0$.
\end{proof}

Suppose an open interval $I_0\supset\overline\Delta$ lies in the region
of complete localization, as defined in \cite[Section~2]{GKchar}.
The quantitative consequences used here are
\cite[Theorem~3 and Eq.~(6.1)]{GKchar}.  For $J\Subset J'\Subset I_0$
and $0<\zeta<1$, they give
\[
 \begin{split}
 \mathbb E\!\left[
  \sup_{\substack{|h|\le1\\\operatorname{supp}h\subset\overline J}}
  \|\chi_xh(H_\omega)\chi_y\|_2^2\right]
 &\le C_{J,\zeta}e^{-c|x-y|^\zeta}\text{ and }
 \mathbb E\!\left[
  \sup_{E\in J}\|\chi_x\mathbf1_{(-\infty,E]}(H_\omega)\chi_y\|_2^2
 \right]
 \le C_{J,\zeta}e^{-c|x-y|^\zeta}.
 \end{split}
\]
The strict Fermi projection obeys the same bound, by taking
$E'\uparrow E$ inside $J'$ and using Fatou's lemma.
For $J=(a,b)$ and $f\in\mathcal B_1(J)$, define
\[
 h_f(t)=f(t)-f_+-(f_--f_+)\mathbf1_{(-\infty,a)}(t).
\]
Then $\operatorname{supp}h_f\subset\overline J$, $\|h_f\|_\infty\le2$,
and
\[
 f(H_\omega)-f_+
 =(f_--f_+)\mathbf1_{(-\infty,a)}(H_\omega)+h_f(H_\omega).
\]
Cauchy--Schwarz in probability gives
\[
 \mathbb E F_J(x,y;\omega)\le C_Je^{-c_J|x-y|^\zeta},\qquad
 F_J(x,y;\omega):=
 \sup_{f\in\mathcal B_1(J)}
 \|\chi_x(f(H_\omega)-f_+)\chi_y\|_2.
\]
For $p>2$ and integer $m\ge1$, set
\[
 Z_{J,m}=\sum_{x,y}\br x^{-p}\br{x-y}^mF_J(x,y;\omega).
\]
As in the discrete case,
\[
 \mathbb E Z_{J,m}\le C_J
 \left(\sum_x\br x^{-p}\right)
 \left(\sum_v\br v^m e^{-c_J|v|^\zeta}\right)<\infty,
\]
and therefore
\[
 F_J(x,y;\omega)\le Z_{J,m}(\omega)\br x^p\br{x-y}^{-m}
 \quad\text{on }\{Z_{J,m}<\infty\}.
\]
This is \COne with the fixed exponent $p$.

For the localized eigenbasis, use
\cite[Corollary~1 and Corollary~3(ii)--(iii), Eqs.~(3.13)--(3.17)]{GKchar}.
For normalized eigenfunctions these estimates imply, with fixed
$\nu\ge0$,
\[
 \|\chi_x\psi_n\|
 \le C_{J,\omega}\br{x_n}^{\nu}e^{-c|x-x_n|^\zeta}
 \le C_{J,m,\omega}\br{x_n}^{\nu}\br{x-x_n}^{-m}
 \quad(m\ge1).
\]
Let $\Omega_{{\rm basis},J}$ be the probability-one event on which this
basis and finite multiplicity hold, and let
$\mathcal J_{\mathbb Q}(I_0)$ denote the countable set of intervals with
rational endpoints and closure in $I_0$.  A single probability-one event is
\[
 \Omega_*=
 \bigcap_{J\in\mathcal J_{\mathbb Q}(I_0)}\Omega_{{\rm basis},J}
 \ \cap\!
 \bigcap_{\substack{J\in\mathcal J_{\mathbb Q}(I_0)\\m\in\mathbb N}}
 \{Z_{J,m}<\infty\},\qquad \mathbb P(\Omega_*)=1.
\]  Inclusion in a larger rational interval
gives \COne--\CTwo for every $J\Subset\Delta$ on $\Omega_*$.

For $b_0\in\mathbb R$, choose $\kappa>0$ so that
$H_\omega+\kappa\ge1$ and $b_0+\kappa>0$.  The scalar functional calculus
yields
\[
 \mathbf1_{(-\infty,b_0]}(H_\omega)
 \le(b_0+\kappa)^2(H_\omega+\kappa)^{-2}.
\]
With $R_\kappa=(H_\omega+\kappa)^{-1}$,
\Cref{lem:local-schatten} gives
\[
 \begin{split}
 \Tr\bigl(\chi_x\mathbf1_{(-\infty,b_0]}(H_\omega)\chi_x\bigr)
 &\le(b_0+\kappa)^2\|R_\kappa\chi_x\|_2^2=(b_0+\kappa)^2\sum_z\|\chi_zR_\kappa\chi_x\|_2^2\le C_{b_0}.
 \end{split}
\]
This proves \CThree, uniformly in $x,\omega$.
\Cref{lem:dirichlet-boundary} proves \CFour.

Thus \COne--\CFour hold almost surely throughout $\Delta$ on one
probability-one event. The localization input is the complete-localization
theory of Germinet--Klein \cite{GKchar}; quantization and mobility-gap
Hall-index results for random Landau Hamiltonians are developed in
\cite{GKS2,Taarabt}. Applying \Cref{thm:main} proves \Cref{cor:AL}.

\end{document}